\newtheorem{theorem}{Theorem}[section]
\newtheorem{definition}{Definition}
\newtheorem{lemma}[theorem]{Lemma}
\newtheorem{corollary}[theorem]{Corollary}
\newcommand{\N}{\mathbb{N}}
\newcommand{\E}{\mathbb{E}}
\newcommand{\Z}{\mathbb{Z}}
\newcommand{\BO}{\mathcal{O}}
\begin{document}

\title{Trade-offs between Selection Complexity and Performance when Searching the Plane without Communication}
\date{}
\author{
  Christoph Lenzen\\
  \texttt{clenzen@csail.mit.edu}
  \and
  Nancy Lynch\\
  \texttt{lynch@csail.mit.edu}
  \and
  Calvin Newport\\
  \texttt{cnewport@cs.georgetown.edu}
  \and
  Tsvetomira Radeva\\
  \texttt{radeva@csail.mit.edu}
}
\maketitle

\begin{abstract}
We consider the ANTS problem [Feinerman et al.] in which a group of agents collaboratively search for a target in a two-dimensional plane. Because this problem is inspired by the behavior of biological species, we argue that in addition to studying the {\em time complexity} of solutions it is also important to study the {\em selection complexity}, a measure of how likely a given algorithmic strategy is to arise in nature due to selective pressures. In more detail, we propose a new selection complexity metric $\chi$, defined for algorithm ${\cal A}$ such that $\chi({\cal A}) = b + \log \ell$, where $b$ is the number of memory bits used by each agent and $\ell$ bounds the fineness of available probabilities (agents use probabilities of at least $1/2^\ell$). In this paper, we study the trade-off between the standard performance metric of speed-up, which measures how the expected time to find the target improves with $n$, and our new selection metric. 

In particular, consider $n$ agents searching for a treasure located at (unknown) distance $D$ from the origin (where $n$ is sub-exponential in $D$). For this problem, we identify $\log \log D$ as a crucial threshold for our selection complexity metric. We first prove a new upper bound that achieves a near-optimal speed-up of $(D^2/n +D) \cdot 2^{\BO(\ell)}$ for $\chi({\cal A}) \leq 3 \log \log D + \BO(1)$. In particular, for $\ell \in \BO(1)$, the speed-up is asymptotically optimal. By comparison, the existing results for this problem [Feinerman et al.] that achieve similar speed-up require $\chi({\cal A}) = \Omega(\log D)$. We then show that this threshold is tight by describing a lower bound showing that if $\chi({\cal A}) < \log \log D - \omega(1)$, then with high probability the target is not found within $D^{2-o(1)}$ moves per agent. Hence, there is a sizable gap to the straightforward $\Omega(D^2/n + D)$ lower bound in this setting.

\end{abstract}

\thispagestyle{empty}
\setcounter{page}{0}

\pagebreak

\section{Introduction}
\label{sec:intro}

It is increasingly accepted by some biologists and computer scientists that the tools of distributed computation can improve our understanding of distributed biological processes~\cite{feinerman12disc, feinerman13, feinerman12podc}. A standard approach is to translate a biological process of interest (e.g., ant foraging~\cite{feinerman12disc, feinerman12podc} or sensory organ pre-cursor selection~\cite{afek11}) into a formal problem in a distributed computing model, and then prove upper and lower bounds on the problem. The aim is to use these bounds to gain insight into the behavior of the motivating biological process.

A recognized pitfall of this approach is \emph{incongruous analysis}, in which the theoretician focuses on metrics relevant to computation but not biology, or ignores metrics relevant to biology but not to computation.  Motivated by this pitfall, this paper promotes the use of {\em selection complexity} metrics for studying biologically-inspired distributed problems. Unlike standard metrics from computation, which tend to focus only on performance, selection complexity metrics instead attempt to measure the difficulty of a given algorithmic strategy arising in nature as the result of selective pressures. Roughly speaking, a solution with low selection complexity should be more likely to arise in nature than a solution with high selection complexity.

We argue that theoreticians studying biologically-inspired problems should evaluate solutions in terms of selection complexity in addition to focusing on standard performance metrics; perhaps even measuring the trade-off between the two classes of metrics. This paper provides a case study of this approach by fixing a standard biology-inspired problem and new selection complexity metric, and then bounding the trade-off between performance and selection complexity with respect to this metric. In doing so, we also obtain results regarding concurrent non-uniform random walks that are of independent mathematical interest.

We recognize that most papers on biology-inspired distributed problems implicitly address selection complexity in their fixed model constraints. Restricting agents to not have access to communication in the search problem, for example, is a constraint that likely lowers the selection complexity of solutions in the model. What is new about our approach is that we are capturing such complexity in a variable metric, allowing us to study the trade-offs between algorithmic power and performance more generally. This can provide insights beyond those gained by characterizing the capabilities of a given static set of constraints.

In this paper, we focus on the problem of $n$ probabilistic non-communicating agents collaboratively searching for a target in a two-dimensional grid placed at (unknown) distance $D$ (measured in number of hops in the grid) from the origin. We assume that $n$ is sub-exponential in $D$.\footnote{Note that an exponential number of agents finds the target quickly even if they employ simple random walks.} This problem is described and analyzed in recent work by Feinerman et al.~\cite{feinerman12podc} (referred to as the ANTS problem). The authors in ~\cite{feinerman12podc} argue that it provides a good approximation of insect foraging, and represents a useful intersection between biological behavior and distributed computation. The analysis in~\cite{feinerman12podc} focuses on the {\em speed-up} performance metric, which measures how the expected time to find the target improves with $n$. The authors describe and analyze search algorithms that closely approximate the straightforward $\Omega(D + D^2/n)$ lower bound for finding a target placed at distance $D$ from the origin.
 
 \paragraph{Selection metric motivation.} We consider the selection complexity metric $\chi$, which captures the bits of memory and probabilistic range used by a given algorithm. This combined metric is motivated by the fact that memory can be used to simulate small probability values, and such values give more power to algorithms, e.g.\ permitting longer directed walks with a given amount of memory. In more detail, for algorithm ${\cal A}$, we define $\chi({\cal A}) = b + \log{\ell}$, where $b$ is the number of bits of memory required by the algorithm (note, $b = \log{|S|}$, where $S$ is the state set of the state machine representation of ${\cal A}$), and $\ell$ is the smallest value such that all probabilities used in ${\cal A}$ are bounded from below by $1/2^\ell$. In Section \ref{sec:upper} and Section \ref{sec:lower}, we show that the choice of the selection metric arises naturally from the analysis of our algorithms and the lower bound.

We conjecture that, from a biological point of view, it is reasonable to assume that large values of $\ell$ are associated with higher selection complexity. Clearly, algorithms relying on small probabilities are more sensitive to additive disturbances of the probability values. Hence, creating a small probability based on a single event is harder to accomplish, since the  event must not only have a strong bias towards one outcome, but also be well protected against influencing factors (like temperature, noise, etc.). On the other hand, using multiple independent events to simulate one with larger bias (also known as probability boosting) constitutes a hidden cost. Our model and algorithms make this cost explicit, by accounting for it in terms of the memory needed for counting such events.

\paragraph{Results.}
In this paper, we generalize the problem of~\cite{feinerman12podc} by now also considering the selection complexity metric $\chi$. 
We identify $\log\log{D}$, for target distance $D$, as a crucial threshold for the $\chi$ metric when studying the achievable speed-up in the foraging problem. In more detail, our lower bound proves that for any algorithm ${\cal A}$ such that $\chi({\cal A}) \leq \log\log{D} - \omega(1)$, there is a placement of the treasure at distance $D$ such that the probability that ${\cal A}$ finds the treasure in less than $D^{2-o(1)}$ moves per agent is polynomially small in $D$, and the probability of finding a target placed randomly within this distance is $o(1)$. The {\em speed-up} in this case is bounded from above by $\min\{n, D^{o(1)}\}$, as opposed to the optimal speed-up of $\min\{n,D\}$. At the core of our lower bound is a novel analysis of recurrence behavior of small Markov chains with probabilities of at least $1/2^\ell$.

Concerning upper bounds, we note that the foraging algorithms in~\cite{feinerman12podc} achieve near-optimal speed-up in $n$, but their selection complexity, as measured by $\chi({\cal A})$, is higher than the $\log\log{D}$ threshold identified by our lower bound: these algorithms require sufficiently fine-grained probabilities and enough memory to randomly generate and store, respectively, coordinates up to distance at least $D$ from the origin; this entails $\chi({\cal A})\geq \log{D}$. In this paper, we seek upper bounds that work for $\chi({\cal A}) \approx \log\log{D}$, the minimum value for which good speed-up is possible. With this in mind, we begin by describing and analyzing a very simple algorithm that is non-uniform in $D$ (agents know the value of $D$) and has asymptotically optimal expected running time. It illustrates our main ideas of walking up to certain points in the plane while counting approximately, thus using little memory, and showing that this is sufficient for searching the plane efficiently. This algorithm uses a value of $\chi = \log \log D + \BO(1)$, which matches our lower bound result for $\chi$ up to factor $1 + o(1)$.

We generalize the ideas used in our simple algorithm to derive a solution that is uniform in $D$. The main idea is to start with some estimate of $D$ and keep increasing it while executing a corresponding version of our simple search algorithm described above for each such estimate. Our uniform algorithm solves the problem in  $\BO (D^2/n + D) \cdot 2^{\BO(\ell)}$ moves per agent in expectation (if $\ell = \BO(1)$, the algorithm matches the $\Omega(D^2/n + D)$ lower bound), for $\chi({\cal A}) \leq 3 \log\log{D} + \BO(1)$. We remark that the increased running time is due to the fact that in order to keep the value of $\chi$ small, we increase our estimate of $D$ by a factor of $2^{\BO(\ell)}$ in each step, which may result in ``overshooting'' the correct distance by factor $2^{\BO(\ell)}$. Note that this suboptimal expected running time arises from enforcing $o(\log \log D)$ memory bits; otherwise, one is always free to use the only constant probabilities.

  \paragraph{Discussion.}
  An interesting question that arises from our results is the trade-off between $b$ and $\ell$ in the definition of $\chi({\cal A})$: roughly speaking, more bits of memory might be of greater utility than having access to smaller probabilities. This seems intuitive given that smaller probability values can be simulated using additional memory  (e.g., to simulate a coin that returns heads with probability $1/2^k$, flip a uniform coin $k$ times while storing the number of coin tosses in the additional memory), but in general more precise probabilities cannot be used to simulate additional memory.
  
From a biological perspective, we do not claim that $\chi$ is necessarily the {\em right} selection metric to use in studying such problems. We chose it because $b$ and $\ell$ seem to be important factors in search, and  they are potentially difficult to increase in nature. However, we recognize that the refinement and validation of such metrics require close collaboration with evolutionary biologists. In this paper, our main goal is to advertise the selection complexity approach as a promising tool for studying biology-inspired problems.

From a mathematical perspective, we emphasize that our lower bound result, in particular, is of independent interest. It is known that uniform random walks do not provide substantial speed-up in the plane searching problem~\cite{alon08}; the speed-up is bounded by $\min\{\log{n},D\}$. Our lower bound generalizes this observation from uniform random walks to probabilistic processes with bounded probabilities and small state complexities. 
 
\paragraph{Related Work.}

This work was initially inspired by the results in \cite{feinerman12disc} and \cite{feinerman12podc}, which originally introduced the problem studied here. More precisely, in \cite{feinerman12podc} the authors present an algorithm to find the target in optimal expected time $\BO(D^2/n + D)$, assuming that each agent in the algorithm knows the number $n$ of agents (but not $D$). For unknown $n$, they show that for every constant $\epsilon > 0$, there exists a uniform search algorithm that is $\BO(\log^{1+\epsilon} k)$-competitive, but there is no uniform search algorithm that is $\BO(\log k)$-competitive. In \cite{feinerman12disc}, Feinerman et al. provide multiple lower bounds on the advice size (number of bits of information the ants are given prior to the search), which can be used to store the value $n$, some approximation of it, or any other information. In particular, they show that in order for an algorithm to be $\BO(\log^{1-\epsilon})$-competitive, the ants need advice size of $\Omega(\log \log n)$ bits. Note that this result also implies a lower bound of $\Omega(\log \log n)$ bits on the total size of the memory of the ants, but only under the condition that optimal speed-up is required. Our lower bound is stronger in that we show that there is an exponential gap of $D^{1-o(1)}$ for the maximum speed-up (with a sub-exponential number of agents in $D$). Similarly, the algorithms in \cite{feinerman12podc} need at least $\BO(\log D)$ bits of memory, as contrasted with our algorithm that uses $b \leq 3 \log \log D + \BO(1)$ bits of memory. 

 Searching and exploration of various types of graphs by single and multiple agents are widely studied in the literature. Several works study the case of a single agent exploring directed graphs \cite{albers00, bender98, deng90}, undirected graphs \cite{panaite19, reingold05}, or trees \cite{diks02, gasieniec07}. Out of these, the following papers have restrictions on the memory used in the search: \cite{gasieniec07} uses $\BO(\log n)$ bits to explore an $n$-node tree, \cite{bender98} studies the power of a pebble placed on a vertex so that the vertex can later be identified, \cite{diks02} shows that $\Omega(\log \log n)$ bits of memory are needed to explore some $n$-node trees, and \cite{reingold05} presents a $\log$-space algorithm for $st$-connectivity. There have been works on graph exploration with multiple agents \cite{alon08, emek13, fraigniaud06}; while \cite{alon08} and \cite{fraigniaud06} do not include any memory bounds, \cite{emek13} presents an optimal algorithm for searching in a grid with constant memory and constant-sized messages in a model, introduced in \cite{emek13podc}, of very limited computation and communication capabilities. It should be noted that even though these models restrict the agents' memory to very few bits, the fact that the models allow communication makes it possible to simulate larger memory. 
 
 So far, in the above papers, we have seen that the metrics typically considered by computer scientists in graph search algorithms are mostly the amount of memory used and the running time. In contrast, biologists look at a much wider range of models and metrics, more closely related to the physical capabilities of the agents. For example, in \cite{arbilly10} the focus is on the capabilities of foragers to learn about different new environments, \cite{giraldeau00} considers the physical fitness of agents and the abundance and quality of the food sources, \cite{harkness85} considers interesting navigational capabilities of ants and assumes no communication between them, \cite{holder87} measures the efficiency of foraging in terms of the energy over time spent per agent, and \cite{robinson05} explores the use of different chemicals used by ants to communicate with one another.

\paragraph{Organization.}

In Section \ref{sec:model}, we present our system model assumptions and formally define the search problem and both the performance and selection metrics that we use to evaluate our algorithms. In Section \ref{sec:upper}, we present our algorithms, starting with a very simple non-uniform algorithm in Section \ref{sec:non-uniform} illustrating our main approach. In Section \ref{sec:uniform}, we generalize this approach to algorithms that are uniform in $D$. In Section \ref{sec:lower}, we present a lower bound that matches our upper bounds in terms of the selection metric $\chi$. We conclude by discussing some assumptions and possible extensions of our work in Section \ref{sec:discussion}. The appendix contains some definitions and math preliminaries used throughout the technical sections of the paper.

\section{Model}
\label{sec:model}

Our model is similar to the models considered in \cite{feinerman12disc, feinerman12podc}. We consider an infinite two-dimensional square grid with coordinates in $\Z^2$. The grid is to be explored by $n\in \N$ identical, non-communicating, probabilistic agents. Each agent is always located at a point on the grid. Agents can move in one of four directions, to one of the four adjacent grid points, but they have no information about their current location in the grid. Initially all agents are positioned at the origin. We also assume that an agent can return to the origin, and for the purposes of this paper, we assume this action is based on information provided by an oracle. In this case, the agent returns on a shortest path in the grid that keeps closest to the straight line connecting the origin to its current position. Note that the return path is at most as long as the path of the agent away from the origin; therefore, since we are interested in asymptotic complexity, we ignore the lengths of the return paths in our analysis. Next, we give a formal description of our model.

\paragraph{Agents.}
Each agent is modeled as a probabilistic finite state automaton; since agents are identical, so are their state automata. Each automaton is a tuple $(S, s_0, \delta)$, where $S$ is a set of states, state $s_0 \in S$ is the unique starting state, and $\delta$ is a transition function  $\delta: S \to \Pi$, where $\Pi$ is a set of discrete probability distributions. Thus, $\delta$ maps each state $s \in S$ to a discrete probability distribution $\delta(s) = \pi_s$ on $S$, which denotes the probability of moving from state $s$ to any other state in $S$. 

For our lower bound in Section \ref{sec:lower}, it is convenient to use a Markov chain representation of each agent. Therefore, we can express each agent as a Markov chain with transition matrix $P$, such that for each $s_1, s_2 \in S$, $P[s_1][s_2] = \pi_{s_1}(s_2)$, and start state $s_0 \in S$. 

In addition to the Markov chain that describes the evolution of an agent's state, we also need to characterize its movement on the grid. We define a  labeling function $M : S \to \{\text{up, down, right, left,} \\ \text{origin, none}\}$ mapping each state $s\in S$ to an action the agent performs on the grid. For simplicity, we require $M(s_0) =\,$origin. Using this labeling function, any sequence of states $(s_i\in S)_{i\in \N}$ is mapped to a sequence of moves in the grid $(M(s_i))_{i\in \N}$ where $M(s_i)=\,$none denotes no move in the grid (i.e., $s_i$ does not contribute to the derived sequence of moves) and $M(s_i)=\,$origin means that the agent returns to the origin, as described above. 

\paragraph{Executions.}
An execution of an algorithm for some agent is given by a sequence of states from $S$, starting with state $s_0$, and coordinates of the associated movements on the grid derived from these states. Formally, an execution is defined as $(s_0, (x_0, y_0), s_1, (x_1, y_1), s_2, (x_2, y_2), \cdots)$, where $s_0 \in S$ is the start state, $(x_0, y_0) = (0,0)$, and for each $i \geq 1$, applying the move $M(s_{i+1})$ to point $(x_i, y_i)$ results in point $(x_{i+1}, y_{i+1})$. For example, if $M(s_{i+1}) = \text{up}$, then $x_{i+1}=x_i$ and $y_{i+1} = y_i + 1$. For the case where $M(s_{i+1}) = \text{none}$, we define $x_i = x_{i+1}$ and $y_i = y_{i+1}$, and for $M(s_{i+1}) = \text{origin}$, we define $(x_{i+1}, y_{i+1}) = (0,0)$. In other words, we ignore the movement of the agent on the way back to the origin, as mentioned earlier in this section. 

An execution of an algorithm with $n$ agents is just an $n$-tuple of executions of single agents. For our analysis of the lower bound, it is useful to assume a synchronous model. So, we define a \emph{round} of an execution to consist of one transition of each agent in its Markov chain. Note that we do not use such synchrony for our algorithms.

So far, we have described a linear execution of an algorithm with $n$ agents. In order to consider probabilistic executions, note that the Markov chain $(S,P)$ induces a probability distribution of executions in a natural way, by performing an independent random walk on $S$ with transition probabilities given by $P$ for each of the $n$ agents.

\paragraph{Problem Statement.}
The goal is to find a target located at some vertex at distance (measured in terms of the max-norm) at most $D$ from the origin in as few expected moves as possible. Note that measuring paths in terms of the max-norm gives is a constant-factor approximation of the actual hop distance. We will consider both uniform and non-uniform algorithms with respect to $D$; that is, the agents may or may not know the value of $D$. 

It is easy to see (also shown in \cite{feinerman12podc}) that the expected running time is $\Omega(D + D^2/n)$ even if agents know $n$ and $D$ and they can communicate with each other. This bound can be matched if the agents know a constant-factor approximation of $n$ \cite{feinerman12podc}, but as mentioned in Section \ref{sec:intro}, the value of the selection metric $\chi$ (introduced below) in that specific algorithm is $\Omega(\log D)$. For simplicity, throughout this paper we will consider algorithms that are non-uniform in $n$, i.e., the agents' state machine depends on $n$. We can apply a technique from \cite{feinerman12podc}, that the authors use to make their algorithms uniform in $n$, in order to generalize our results and obtain an algorithm that is uniform in both $D$ and $n$.

\paragraph{Metrics. } For the problem defined above, we consider both a performance and a selection metric and study the trade-off between the two. 
We will use the term \emph{step} of an agent interchangeably with a transition of the agent in the Markov chain. We define a \emph{move} of the agent to be a step that the agent performs in its Markov chain resulting in a state labeled up, down, left, or right. 

For our performance metric, we focus on the asymptotic running time in terms of $D$ and $n$; more precisely, we are interested in the expected value of the metric $M_{\text{moves}}$: the minimum over all agents of the number of moves of the agent until it finds the target. Note that for the performance metric we exclude states labeled \emph{none} and \emph{origin} in an execution of an agent; we consider the \emph{none} states to be part of an agent's local computation, and we already argued that the \emph{origin} states increase the running time by at most a factor of two. For our lower bound, it is useful to define a similar metric in terms of the steps of an agent. We define the metric $M_{\text{steps}}$ to be the minimum over all agents of the number of steps of the agent until it finds the target.

The selection metric of a state automaton (and thus a corresponding algorithm) is defined as $\chi({\cal A}) = b + \log \ell$, where $b:=\lceil\log |S|\rceil$ is the number of bits required to encode all states from $S$ and $1/2^{\ell}$ is a lower bound on $\min\{P[s,s']\,|\,s,s'\in S\wedge P[s,s']\neq 0\}$, the smallest non-zero probability value used by the algorithm. We further motivate this choice in Section \ref{sec:upper}, where we describe different trade-offs between the performance metric and the values of $b$ and $\ell$.

\section{Algorithms}
\label{sec:upper}
In this section, we begin by describing a non-uniform algorithm in $D$ that finds the target in asymptotically optimal time. The main purpose for presenting this algorithm is to illustrate our main techniques in a very simple setting. This algorithm uses probability values of the form $1/D$, which can easily be simulated using only biased coins that show heads with probability $1/2^\ell$ for any $\ell$ such that $\log D$ is an integer multiple of $\ell$. We show that the target can be found in asymptotically optimal time using $b = \log \log D - \log \ell + 3$ bits of memory. 

We then generalize this algorithm to work for the case of unknown $D$. This ensures that closer targets are found faster by the algorithm than targets that are far away. The way we achieve this is by starting with an estimate of $D$ equal to $2$ and repeatedly increasing it until the target is found. For each such estimate we execute the non-uniform algorithm. Since $D$ is not known by the algorithm anymore, we cannot easily pick fixed values for some of the parameters we use in the algorithm in order to guarantee asymptotically optimal results for all possible values of $D$. Therefore, in our general algorithm the expected number of moves for the first agent to find the target becomes $(D^2/n + D) \cdot 2^{\BO(\ell)}$ for $\chi = 3 \log \log D + \BO(1)$. Hence, for $\ell=\BO(1)$ the algorithm is asymptotically optimal with respect to both metrics, and we achieve non-trivial speed-up of $\min\{n,D\}/D^{o(1)}$ for any $\ell\in o(\log D)$ (i.e., $\omega(1)$ bits of memory).

\subsection{Non-uniform Algorithm}
\label{sec:non-uniform}

In this section we present an algorithm in which the value of $D$ is available to the algorithm. We assume that $D > 1$; the cases of $D=0$ and $D=1$ are straightforward. Our general approach is the following: each agent chooses a vertical direction (up or down) with probability $1/2$, walks in that direction for a random number of steps that depends on $D$, then does the same for the horizontal direction, and finally returns to the origin and repeats this process. We show that the minimum over all agents of the expected number of moves of the agent to find a target at distance up to $D$ from the origin is at most $\BO(D^2/n + D)$. 

Let coin $C_p$ denote a coin that shows tails with probability $p$. Using this convention, the pseudocode of this simple routine is given in Algorithm~\ref{algo:non-uni-loop}, accompanied by a state machine representation showing that the algorithm can be implemented using only three bits of memory. Later in this section we show that a slightly modified version of the algorithm guarantees that $\chi = \log \log D +3$.

\begin{figure}[t]
\begin{minipage}{\textwidth}
\begin{minipage}{.44\textwidth}
\begin{algorithm}[H]
\caption{Non-uniform search.}
\label{algo:non-uni-loop}
\While{\text{true}}{
	\If{coin $C_{1/2}$ shows heads}{
  		\While{coin $C_{1/D}$ shows heads}{
  			move up\phantom{Xp}
		}
	}
	\Else{
  		\While{coin $C_{1/D}$ shows heads}{
  			move down\phantom{Xp}
		}
	}
	\If{coin $C_{1/2}$ shows heads}{
  		\While{coin $C_{1/D}$ shows heads}{
  			move left\phantom{Xp}
		}
	}
	\Else{
  		\While{coin $C_{1/D}$ shows heads}{
  			move right\phantom{Xp}
		}
	}
	return to the origin
  }
\end{algorithm}
\end{minipage}
\hfill
\begin{minipage}{.40\textwidth}
\begin{center}
\scalebox{.8}{
\begin{tikzpicture}[->,>=stealth',shorten >=1pt,auto, node distance=3.6cm, semithick]
  \tikzstyle{every state}=[fill=none,draw=black,text=black, minimum size=1.2cm]

  \node[state] (A) {origin};
  \node[state] (B) [above of=A] {up};
  \node[state] (D) [below of=A] {down};
  \node[state] (C) [right of=A] {right};
  \node[state] (E) [left of=A] {left};

  \path (A) edge [loop above] node {$\frac{1}{D^2}$} (A)
            edge [bend left] node {$\frac{1}{2} \left(1 - \frac{1}{D}\right)$} (B)
            edge [bend left] node {$\frac{1}{2} \left(1 - \frac{1}{D}\right)$} (D)
            edge [bend left] node {$\frac{1}{2D}\left(1 - \frac{1}{D}\right)$} (C)
			edge [bend left] node {$\frac{1}{2D}\left(1 - \frac{1}{D}\right)$} (E)            
            
        (B) edge [loop above] node [right,shift={(.2cm,-.3cm)}] {$1 - \frac{1}{D}$} (B)
        	edge [bend left] node {$\frac{1}{D}$} (A)
        	edge [bend left] node [above,xshift=.3cm] {$\frac{1}{2D}$} (C)
        	edge [bend right] node [above,xshift=-.3cm] {$\frac{1}{2D}$} (E)
           
        (C) edge [loop right] node [above,yshift=.3cm] {$1 - \frac{1}{D}$} (A)
            edge [bend left]  node {$\frac{1}{D}$} (A)
        
        (D) edge [loop below] node [right,shift={(.2cm,.3cm)}] {$1 - \frac{1}{D}$} (D)
        	edge [bend left] node {$\frac{1}{D}$} (A)
        	edge [bend right] node [below,xshift=.3cm] {$\frac{1}{2D}$} (C)
        	edge [bend left] node [below,xshift=-.3cm] {$\frac{1}{2D}$} (E)
            
        (E) edge [loop left] node [above,yshift=.3cm] {$1 - \frac{1}{D}$} (A)
            edge [bend left]  node {$\frac{1}{D}$} (A);
\end{tikzpicture}
}

State machine representation of Algorithm~\ref{algo:non-uni-loop}. State names match the values of the labeling function.

\end{center}
\end{minipage}
\end{minipage}
\end{figure}

Denote by $R$ the expected number of moves for an agent to complete an iteration of the outer loop of the algorithm. By independence of the random choices, this expectation does not depend on the considered iteration. Also, since agents are identical and independent, the value of $R$ does not depend on the choice of agent either.
\begin{lemma}
\label{lem:exp_D}
$R \leq 2D$.
\end{lemma}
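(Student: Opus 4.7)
The plan is to decompose one outer-loop iteration into its two symmetric phases (vertical and horizontal) and apply linearity of expectation, then reduce each phase to the expectation of a geometric random variable.

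First I would observe that by linearity of expectation, $R$ equals the expected number of moves in the vertical phase plus the expected number of moves in the horizontal phase. By the symmetry of the two cases inside each \textbf{if}/\textbf{else} block (moving up vs.\ down, or left vs.\ right), the random choice of direction (via $C_{1/2}$) does not affect the distribution of the \emph{number} of moves made in the phase; it only affects the sign. Hence it suffices to analyze, in a single phase, the number of iterations of the inner \textbf{while} loop.

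Next I would identify the inner-loop count as a geometric random variable. In each iteration of the inner loop the agent flips $C_{1/D}$, which shows tails (terminating the loop) with probability $1/D$, independently of previous flips. Thus the number of moves made in a single phase is the number of heads observed before the first tail, which is geometric with success probability $1/D$ (shifted so that $0$ is possible); its expectation is
\[
\frac{1 - 1/D}{1/D} \;=\; D - 1.
\]
Summing over the two phases gives $R = 2(D-1) \le 2D$, as claimed.

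There is essentially no obstacle here: the proof is a one-paragraph application of linearity of expectation together with the standard formula for the mean of a geometric distribution. The only thing to be careful about is the convention on $C_p$ (tails with probability $p$, and the loop continues on heads), which determines that the per-step termination probability is $1/D$ rather than $1 - 1/D$; once that is pinned down, the bound follows immediately. No conditioning on the direction-choice bit is needed beyond noting its irrelevance to the move count, and the bound $D - 1 \le D$ is used only to present the final result in the clean form $R \le 2D$.
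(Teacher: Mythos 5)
Your proof is correct and follows essentially the same route as the paper: split the iteration into the vertical and horizontal phases and bound each by the expectation of the geometric number of heads before the first tail of $C_{1/D}$. Your computation of $D-1$ per phase is in fact slightly sharper than the paper's stated ``$D$ steps in expectation,'' and either way the bound $R \le 2D$ follows.
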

\begin{proof}
In each iteration, an agent performs one move up or down for each consecutive toss of coin $C_{1/D}$ showing heads, and then one move right or left for each consecutive toss of coin $C_{1/D}$ showing heads. Each of these walks is $D$ steps long in expectation, so it follows that $R \leq 2D$.
\end{proof}

One may think that all we need to do now is to compute the expected number of iterations until some agent finds the target and multiply that by the expected number of moves to complete an iteration. However, this does not quite work, because the time to complete an iteration is not independent of whether the target is found or not. For instance, if the target is located at $(1,0)$, then an agent finding the target cannot move up or down before going right, a constraint that decreases the expected number of steps taken in the iteration. 

However, since in each iteration an agent is, in fact, quite unlikely to find the target, the expectation cannot be affected a lot. For simplicity, we will use a fairly loose bound; we are interested in the asymptotic time complexity only, which is not affected. To this end, denote by $\hat{R}$ the expected number of moves on the grid an agent makes during an iteration conditioning on the event that the agent does not find the target in this iteration. Note that this event is only defined if the target is not located at the origin $(0,0)$ and is automatically found right away; without loss of generality, we will assume that this is not the case.

\begin{lemma}
\label{lem:condition}
	 $\hat{R} \leq 2 R$.
\end{lemma}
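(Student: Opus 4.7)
The plan is to relate $\hat{R}$ to $R$ by conditioning on the event $F$ that the agent finds the target during the iteration, and then reducing everything to the bound $\Pr[F] \le 1/2$. Letting $T$ denote the (random) number of moves in a single iteration, the law of total expectation gives
\[
R \;=\; \Pr[F] \cdot E[T \mid F] \;+\; (1 - \Pr[F]) \cdot \hat{R}.
\]
Since $E[T \mid F] \ge 0$, this rearranges to $\hat{R} \le R / (1 - \Pr[F])$, so it suffices to establish $\Pr[F] \le 1/2$ in order to conclude $\hat{R} \le 2R$.

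To bound $\Pr[F]$, I would fix the target at some $(x^*, y^*) \neq (0,0)$ and split into three cases based on whether the target lies on a coordinate axis. The trajectory within a single iteration consists of a vertical geometric walk ending at some $(0, Y)$, followed by a horizontal geometric walk from that point, where each walk continues with probability $1 - 1/D$ per step. Reaching $(x^*, y^*)$ requires specific binary direction choices whose successes pin $\Pr[F]$ down to at most $1/2$: if $x^* = 0$, the vertical direction must be correct (probability $1/2$) and the walk must reach at least $|y^*| \ge 1$ steps, contributing an additional factor $(1-1/D)^{|y^*|} < 1$; if $y^* = 0$, the vertical walk must have length exactly $0$ (probability $1/D$) and the horizontal direction must be correct (probability $1/2$); if both coordinates are nonzero, the vertical walk must have the correct sign and terminate after exactly $|y^*|$ steps, followed by the correct horizontal direction. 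In every case $\Pr[F] \le 1/2$, with the latter two in fact giving $\Pr[F] = \BO(1/D)$.

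Combining the two steps yields $\hat{R} \le R / (1 - \Pr[F]) \le 2R$, as required. I do not anticipate a real obstacle here; the only mild subtlety is the case analysis for $\Pr[F]$, but the required bound is very loose, and any single forced initial coin flip already suffices to show $\Pr[F] \le 1/2$. The factor $2$ in the lemma statement is precisely suggestive of this conditioning argument paired with the $1/2$ threshold on $\Pr[F]$.
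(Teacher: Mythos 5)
Your proposal is correct and follows essentially the same route as the paper: apply the law of total expectation to the find/miss partition of an iteration and combine it with the observation that a single forced direction (or stopping) coin already bounds the per-iteration probability of finding the target by $1/2$, giving $\hat{R}\leq R/(1-\Pr[F])\leq 2R$. Your case analysis for $\Pr[F]\leq 1/2$ is slightly finer than the paper's (which just notes that for any target off the origin some direction choice with probability $1/2$ forces a miss), but the argument is the same.
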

\begin{proof}

First, we bound the probability of an agent not finding the target in a given iteration of the main loop from below. Suppose that the target is located at $(x,y)\in \Z^2\setminus \{(0,0)\}$. If $y>0$, the pseudocode shows that with probability $1/2$ coin $C_{1/D}$ shows tails, so the agent does not move up, and consequently, with probability at least $1/2$ it does not find the target in this iteration. Symmetrically, the agent does not move down with probability $1/2$ and misses the target if $y<0$, and it does not move left or right with probability $1/2$ each and misses the target if $x\neq 0$. Overall, the target is missed in a given iteration with probability at least $1/2$. We partition the probability space into the events: (1) the target is found during the given iteration, and (2) the target is not found during the given iteration. From the law of total expectation applied to this partition, it follows that 
\begin{eqnarray*}
R & = & P[\text{target is found in the given iteration}] \\
		& \cdot & \E[\text{moves in the given iteration } | \text{ target is found in the given iteration}] \\
		& + & P[\text{target is \emph{not} found in the given iteration}] \\
		& \cdot & \E[\text{moves in the given iteration } | \text{ target is \emph{not} found in the given iteration}] \\
		& \leq & \hat{R} \cdot 1/2
\end{eqnarray*}

In the last step, we bound the value of $R$ by ignoring the first term in the sum, and using the fact above that the probability to miss the target in any given iteration is at least $1/2$. We conclude that $\hat{R} \leq 2 R$.
\end{proof}

\begin{lemma}
\label{lem:ria}
Let $R_{i,a}$ be the expected number of moves until a fixed agent $a$ finds the target in iteration $i$, conditioning on the fact that agent $a$ finds the target in iteration $i$ and not in any previous iteration. Then, $R_{i,a} \leq 4iD$.
\end{lemma}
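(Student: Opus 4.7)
The plan is to split $R_{i,a}$ into the expected number of moves over iterations $1,\ldots,i-1$ of the outer loop (during which the target is not found) plus the expected number of moves in iteration $i$ up to the moment the target is first visited, and then to exploit independence of distinct iterations to handle the two parts separately.

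For the first part, the random choices in different iterations of the outer loop are independent, so the event ``target found in iteration~$i$'' depends only on iteration~$i$, while ``target not found in iterations $1,\ldots,i-1$'' depends only on those earlier iterations. Conditioning on both events, the marginal distribution of each iteration $j<i$ therefore coincides with its distribution conditioned merely on ``target not found in iteration~$j$''. Lemma~\ref{lem:condition} (together with Lemma~\ref{lem:exp_D}) then bounds the expected contribution of each such iteration by $\hat R\leq 2R\leq 4D$, for a total of at most $4(i-1)D$ moves.

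For the second part, I will argue that, conditional on the agent visiting the target at the fixed location $(x,y)$ (which satisfies $\max\{|x|,|y|\}\leq D$) during iteration~$i$, the number of moves performed before the target is reached is \emph{deterministically} at most $|x|+|y|\leq 2D$. The key observation is that one iteration consists of a vertical walk of some random length in a chosen vertical direction, followed by a horizontal walk of some random length in a chosen horizontal direction, and the horizontal walk lies entirely on the line through the endpoint of the vertical walk. Hence visiting $(x,y)$ in the iteration requires the vertical walk to pass through the target's $y$-coordinate, contributing at most $|y|$ moves before the first visit to height $y$, and, if $x\neq 0$, the subsequent horizontal walk starting from $(0,y)$ to pass through $(x,y)$, contributing at most $|x|$ further moves. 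In every case the target is reached by step $|x|+|y|$ of the iteration.

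Combining the two bounds yields
\[
R_{i,a}\;\leq\;(i-1)\cdot 4D+2D\;\leq\;4iD,
\]
as required. The only slightly delicate point is the conditional independence argument that reduces the conditional expectation in iterations $j<i$ to the already-bounded $\hat R$; once that is in place, the iteration-$i$ contribution is controlled by the deterministic geometry of the walk rather than by any further probabilistic estimate.
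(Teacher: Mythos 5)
Your proof is correct and follows essentially the same route as the paper: decompose $R_{i,a}$ into the first $i-1$ iterations, each bounded by $\hat R\leq 2R\leq 4D$ via Lemmas~\ref{lem:exp_D} and~\ref{lem:condition}, plus at most $2D$ moves to reach the target in iteration $i$, giving $(i-1)\cdot 4D+2D\leq 4iD$. You merely spell out two points the paper leaves implicit (the independence argument justifying the use of $\hat R$ for the earlier iterations, and the deterministic $|x|+|y|\leq 2D$ bound in the final iteration), which is fine.
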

\begin{proof}
 Since we are conditioning on the fact that agent $a$ finds the target in iteration $i$ and not in any previous iteration, we know that agent $a$ completes the first $i-1$ iterations of the main loop and then moves to the target. The agent requires at most $(i-1)\cdot \hat{R}$ moves to complete the first $i-1$ iterations of the loop in expectation because we know the target is not found by  agent $a$ in any of these iterations. Afterwards, in iteration $i$, it will move to the target, which takes at most $2D$ moves. Thus, by Lemmas~\ref{lem:exp_D} and~\ref{lem:condition}, 

\begin{equation*}
R_{i,a} \leq (i-1) \hat{R} + 2D \leq 2 R (i-1) + 2D \leq 4D(i-1) + 2D \leq 4iD.
\end{equation*}
\end{proof}

Having examined the expected number of moves for a single agent to complete an iteration, next we calculate how likely it is that all agents miss the target in a single iteration of the main loop. In the following lemmas and theorems we  switch from considering a probability distribution for one agent to considering a probability distribution for all $n$ agents.

\begin{lemma}
\label{lem:iterations}
	Denote by $q$ the probability that no agent finds the target when each agent executes one iteration of the main loop. It holds that $q \leq \max\{1-\Omega(n/D),1/2\}$, where $n$ is the total number of agents.
\end{lemma}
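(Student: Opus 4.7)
The plan is to first lower bound the probability $p$ that a \emph{single} agent finds the target in one iteration by $\Omega(1/D)$, and then combine $n$ independent trials to get the stated bound on $q = (1-p)^n$.

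First I will analyze a single iteration for a target placed at some point $(x,y) \in \Z^2 \setminus \{(0,0)\}$ with $|x|, |y| \leq D$. The key observation is that the agent's trajectory in one iteration forms an ``L-shape'': first a vertical segment from $(0,0)$ to $(0, Y)$, then a horizontal segment from $(0,Y)$ to $(X, Y)$, where $Y$ and $X$ are signed geometric random variables whose signs are chosen uniformly. The target $(x,y)$ lies on this L-shape if and only if (i) the vertical direction matches the sign of $y$, (ii) the vertical walk has length exactly $|y|$, and (iii) the horizontal direction matches the sign of $x$, and the horizontal walk has length at least $|x|$. Using that $C_{1/D}$ shows heads with probability $1 - 1/D$, the probability that the vertical walk has length exactly $|y|$ is $(1 - 1/D)^{|y|} \cdot (1/D)$, and the probability the horizontal walk has length at least $|x|$ is $(1 - 1/D)^{|x|}$. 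Multiplying in the two direction-choice factors of $1/2$, I get
\begin{equation*}
p \;\geq\; \frac{1}{4D} \cdot \left(1 - \frac{1}{D}\right)^{|x|+|y|} \;\geq\; \frac{1}{4D} \cdot \left(1 - \frac{1}{D}\right)^{2D} \;=\; \Omega\!\left(\frac{1}{D}\right),
\end{equation*}
where I used $|x|+|y| \leq 2D$ and the standard fact that $(1-1/D)^{2D}$ is bounded below by a positive constant for $D > 1$. Edge cases where $x=0$ or $y=0$ are handled analogously (and actually give a larger probability, since some constraint is dropped).

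Next, since the $n$ agents' iterations are mutually independent, the probability that none of them finds the target in its iteration is $q \leq (1 - p)^n \leq e^{-pn} \leq e^{-cn/D}$ for some constant $c > 0$. I then split into two regimes based on the magnitude of $cn/D$. If $cn/D \leq 1$, I use the inequality $e^{-t} \leq 1 - t/2$ for $t \in [0,1]$ to conclude $q \leq 1 - cn/(2D) = 1 - \Omega(n/D)$. If instead $cn/D > 1$, then $q \leq e^{-1} < 1/2$, and the bound $q \leq 1/2$ kicks in. Taking the maximum of the two cases yields $q \leq \max\{1 - \Omega(n/D),\, 1/2\}$ as required.

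The only mildly delicate step is the single-agent calculation: one has to notice that the agent only visits the ``corner'' row/column of its L-shape, so the vertical walk length must hit $|y|$ \emph{exactly} rather than merely reach it. Once that subtlety is in place, the remaining inequalities are routine, and neither the case split at the end nor the independence argument presents any real obstacle.
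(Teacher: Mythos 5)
Your proof is correct and follows essentially the same approach as the paper's: lower-bound a single agent's per-iteration success probability by $\Omega(1/D)$ via the exact-vertical-hit / horizontal-overshoot L-shape argument (including the two $1/2$ direction factors), then combine the $n$ independent agents. The only difference is cosmetic: you bound $(1-p)^n \leq e^{-pn}$ and case-split on whether $cn/D \leq 1$, whereas the paper bounds $(1-1/(64D))^n$ via a binomial expansion when $n > 64D$ and directly otherwise; your final step is, if anything, cleaner.
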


\begin{proof}
	Suppose the target is positioned at grid point $(x,y)\in \{0,\ldots,D\}^2$ and consider a single agent performing an iteration of its main loop. With probability $1/4$, it moves up and right. The walk up will halt after exactly $x$ steps with probability
\begin{equation*}
\left(1-\frac{1}{D}\right)^x\frac{1}{D}\geq \left(1-\frac{1}{D}\right)^{D}\frac{1}{D} \geq \frac{1}{4D}.
\end{equation*}
The walk right will perform $D \geq y$ steps with probability at least $(1-1/D)^D \geq 1/4$. Hence, in each iteration, each agent finds the target with probability at least $1/(64D)$. Analogously, the same holds for a target located at $(-x,y)$, $(-x,-y)$, and $(x,-y)$. 

Since iterations performed by different agents are independent of each other, it follows that $q \leq (1-1/(64D))^n$. We use the binomial expansion of the right hand side for the case when $n > 64D$:

\begin{eqnarray*}
	\left(1-\frac{1}{64D}\right)^n &=& 1 - \frac{n}{64D} + \frac{(n-1)n}{2!} \frac{1}{(64D)^2} - \frac{(n-2)(n-1)n}{3!} \frac{1}{(64D)^3} + \cdots \\
	& \leq & 1 - \frac{n}{64D} + \frac{(n-1)n}{2!} \frac{1}{(64D)^2} + \frac{(n-3)(n-2)(n-1)n}{4!} \frac{1}{(64D)^4} + \cdots \\
	& \leq & 1 - \frac{n}{64D} + \frac{n^2}{2!} \frac{1}{(64D)^2} + \frac{n^4}{4!} \frac{1}{(64D)^4} + \cdots \\
	& \leq & 1 - \frac{n}{64D} + \frac{(64D)^2}{2!(64D)^2} + \frac{(64D)^4}{4!(64D)^4} + \cdots
	= 1 - \frac{n}{64D} + \sum_{i=2}^{\infty} \frac{1}{i!} = 1 - \Omega\left(\frac{n}{D}\right)
\end{eqnarray*}

For the case where $n \leq 64D$, we approximate:

\begin{equation*}
	\left(1-\frac{1}{64D}\right)^n \leq \left(1-\frac{1}{64D}\right)^{64D} \approx \frac{1}{e} < \frac{1}{2}
\end{equation*}

Therefore, we conclude that $(1-1/(64D))^n \leq \max\{1-\Omega(n/D),1/2\}$.

\end{proof}

\begin{theorem}
\label{thm:upper}
Let each of $n$ agents execute a copy of Algorithm~\ref{algo:non-uni-loop}. The minimum over all agents of the expected number of moves of an agent to find a target within distance $D > 1$ from the origin is $\BO(D^2/n + D)$.
\end{theorem}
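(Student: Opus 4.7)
The plan is to combine Lemma~\ref{lem:iterations} with Lemma~\ref{lem:ria}: the former bounds how many rounds of outer-loop iterations we need until some agent succeeds, while the latter bounds the moves that agent has expended by then.

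First, let $X_a$ denote the iteration in which agent $a$ first finds the target and $I := \min_a X_a$ the first iteration in which any agent succeeds. By Lemma~\ref{lem:iterations}, in a single round where every agent performs one iteration, the probability that all agents miss the target is at most $q \leq \max\{1-\Omega(n/D), 1/2\}$. Since distinct iterations use independent randomness, $P[I > i] \leq q^i$, hence
\begin{equation*}
\E[I] \;\leq\; \sum_{i \geq 0} q^i \;=\; \frac{1}{1-q} \;=\; \BO\!\left(\frac{D}{n} + 1\right).
\end{equation*}

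Next, let $T_a$ be the total moves agent $a$ performs before finding the target and set $T := \min_a T_a$, the quantity we wish to bound. Pick any agent $a^*$ with $X_{a^*} = I$; then $T \leq T_{a^*}$. The key claim is $\E[T_{a^*} \mid I = i] \leq 4iD$. Conditioning on $I = i$ specifies that $X_{a^*} = i$ and simultaneously $X_a \geq i$ for every other agent $a$, but because the agents' random choices are independent, the latter constraint does not affect the conditional distribution of $T_{a^*}$ given $X_{a^*} = i$. Therefore $\E[T_{a^*} \mid I = i] = \E[T_{a^*} \mid X_{a^*} = i] \leq 4iD$ by Lemma~\ref{lem:ria}. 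Combining with the first step,
\begin{equation*}
\E[T] \;=\; \sum_{i \geq 1} P[I = i] \cdot \E[T \mid I = i] \;\leq\; 4D \sum_{i \geq 1} i \cdot P[I = i] \;=\; 4D \cdot \E[I] \;=\; \BO\!\left(\frac{D^2}{n} + D\right).
\end{equation*}

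The main obstacle is the conditioning argument in the second step: one must carefully disentangle the global event $\{I = i\}$, which jointly constrains all $n$ agents, from the single-agent event $\{X_{a^*} = i\}$ that Lemma~\ref{lem:ria} is phrased in terms of. Independence of the agents' random choices is precisely what makes the two conditionings equivalent for $T_{a^*}$ once a specific $a^*$ (say, the smallest-indexed agent achieving the minimum) has been fixed. The remaining steps are routine calculations using geometric tails and the law of total expectation.
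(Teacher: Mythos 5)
Your proposal is correct and follows essentially the same route as the paper's proof: your $I$, your smallest-indexed successful agent $a^*$, and the independence argument separating $\{I=i\}$ from $\{X_{a^*}=i\}$ correspond exactly to the paper's events ${\cal E}_i$ and their partition into ${\cal E}_{i,a}$, with Lemma~\ref{lem:ria} and Lemma~\ref{lem:iterations} used in the same way. Your final bound $\E[T]\leq 4D\,\E[I]$ is just a repackaging of the paper's geometric-series computation $\sum_i (1-q)q^{i-1}\cdot 4iD \leq 4D/(1-q)$.
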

\begin{proof}
For $i\in \N$, denote by ${\cal E}_i$ the event that some agent finds the target in iteration $i$ of the main loop, but no agent finds it in any previous iteration $i' < i$. 
Denote by $q$ the probability that no agent finds the target in iteration $i$ of the main loop. 
Note that the events ${\cal E}_i$ are mutually exclusive. From the independence of random choices of different agents and iterations, it thus follows that
\begin{equation*}
P\left[{\cal E}_i\right] = (1-q)q^{i-1}.
\end{equation*}

Let random variable $X_{\text{found}}$ denote the number of moves until the first agent finds the target.
\begin{equation*}
\E[X_{\text{found}}] = \sum_{i=1}^{\infty} P[{\cal E}_i] \cdot \E[X_{\text{found}}| {\cal E}_i] 
\end{equation*}

We partition event ${\cal E}_i$ into disjoint events ${\cal E}_{i,a}$, where ${\cal E}_{i,a}$ denotes the event that agent $a$ is the agent with the smallest id that finds the target in iteration $i$\footnote{We pick the agent with the smallest id just as a tie-breaker between agents; the ids of agents do not play an important role in the algorithm.}. By the definition of ${\cal E}_i$, we know that in any execution in ${\cal E}_i$, some agent finds the target in iteration $i$.

Next, we bound the value of $\E[X_{\text{found}}| {\cal E}_i] $. By the Law of Total Expectation applied to the partition of event ${\cal E}_i$, it follows:

\begin{equation*}
\E[X_{\text{found}}| {\cal E}_i] = \sum_{a} P[{\cal E}_{i,a} | {\cal E}_i] \cdot \E[X_{\text{found}} | {\cal E}_{i,a}] 
\end{equation*}

Let random variable $X_{i,a}$ denote the number of moves an agent $a$ takes to complete iteration $i$. Note that because we condition on event ${\cal E}_{i,a}$, we know that the expected number of rounds for some agent to find the target is at least the expected number of rounds for the fixed agent $a$ to find the target and to complete iteration $i$. Therefore, it follows that: 

\begin{equation*}
\E[X_{\text{found}}| {\cal E}_i] \leq \sum_{a} P[{\cal E}_{i,a} | {\cal E}_i] \cdot \E[X_{i,a} | {\cal E}_{i,a}]
\end{equation*}

By the definition of $R_{i,a}$, we know that $\E[X_{i,a} | {\cal E}_{i,a}] = R_{i,a}$ because the value of $R_{i,a}$ is the same for any fixed agent $a$, including the one with the smallest id. Using Lemma \ref{lem:ria}, we conclude that

\begin{equation*}
	\E[X_{\text{found}}| {\cal E}_i] \leq \sum_{a} P[{\cal E}_{i,a} | {\cal E}_i] \cdot R_{i,a} \leq \sum_{a} P[{\cal E}_{i,a} | {\cal E}_i] \cdot 4iD = 4iD \cdot \sum_{a} P[{\cal E}_{i,a} | {\cal E}_i] = 4iD
\end{equation*}

Finally, we sum over all iterations to calculate the value of $\E[X_{\text{found}}]$. Recall that we already calculated the value of $P[{\cal E}_i]$.
\begin{eqnarray*}
\E[X_{\text{found}}] &\leq & \sum_{i=1}^{\infty} P[{\cal E}_i] \cdot \E[X_{\text{found}}| {\cal E}_i] \leq \sum_{i=1}^{\infty}(1-q)q^{i-1}\cdot 4iD \\
 & \leq &  4D\sum_{i=0}^{\infty}(1-q)q^i i = 4D\cdot\frac{q}{1-q}\leq \frac{4D}{1-q}.
\end{eqnarray*}
By Lemma~\ref{lem:iterations}, we know that $q\leq\max\{1-\Omega(n/D),1/2\}$, so it follows that
\begin{equation*}
\E[X_{\text{found}}] \leq \frac{4D}{1-q}= \BO\left(\max\left\{\frac{D^2}{n},D\right\}\right)=\BO\left(\frac{D^2}{n}+D\right).\qedhere
\end{equation*}
\end{proof}

We now generalize this algorithm to one that uses probabilities lower bounded by $1/2^\ell$ for some given $\ell\geq 1$. This is achieved by the following subroutine, which implements a coin that shows tails with probability $1/2^{k\ell}$ using a biased coin that shows tails with probability $1/2^\ell$, for $\ell \geq 1$. 

\begin{algorithm}
\caption{coin($k,\ell$): Biased coin flip showing tails with probability $1/2^{k\ell}$.}
\label{algo:p_coin}
\For{$i = 0 \cdots k$} {
	\If{$C_{1/2^\ell}$ shows heads}
	{
	\Return{heads}\phantom{Xp}
	}
}
\Return{tails}\phantom{Xp}\nllabel{line:p_tails}
\end{algorithm}

\begin{lemma}
\label{lem:p_coin}
	Algorithm~\ref{algo:p_coin} returns tails with probability $1/2^{k\ell}$ and requires $\lceil\log k\rceil$ bits of memory.
\end{lemma}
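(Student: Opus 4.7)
The plan is to verify both claims of the lemma essentially by inspection of the pseudocode, so the proof will be short and the main work is just to make the probabilistic argument precise.

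First I would analyze the probability that tails is returned. The algorithm returns heads (and exits) as soon as the biased coin $C_{1/2^\ell}$ comes up heads in some iteration of the for loop; it falls through to the final \texttt{return tails} line only if every one of the $k$ iterations of the loop yielded tails on $C_{1/2^\ell}$. Since the coin flips in distinct iterations are independent and each yields tails with probability exactly $1/2^\ell$, the probability of falling through is
\[
\prod_{i=1}^{k} \frac{1}{2^\ell} \;=\; \frac{1}{2^{k\ell}},
\]
which is exactly the claim.

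Next I would bound the memory. The only state the algorithm needs to maintain across iterations is the loop counter $i$, which takes at most $k$ distinct values; no other information from previous iterations is used, since each coin flip is independent and the decision in each iteration depends only on the current flip. Encoding the counter therefore requires $\lceil \log k \rceil$ bits, and everything else (the current coin outcome, the return value) can be handled with $\BO(1)$ additional state that does not affect the leading term. This gives the stated $\lceil \log k \rceil$ bit bound.

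I do not expect a real obstacle here; the only subtle point is to note that because the loop terminates at the first heads, the counter is the only quantity that must persist between iterations, so no additional memory for accumulating partial products of probabilities or for storing coin histories is required. Both parts of the lemma then follow directly.
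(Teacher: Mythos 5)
Your proposal is correct and follows essentially the same argument as the paper: tails is returned exactly when all $k$ flips of $C_{1/2^\ell}$ come up tails, giving probability $(1/2^\ell)^k = 1/2^{k\ell}$, and the only state persisting across iterations is the loop counter, which needs $\lceil\log k\rceil$ bits. The paper's proof is the same two observations, stated just as briefly.
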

\begin{proof}
	From the code it follows that the action on Line~\ref{line:p_tails} is performed only if none of the outcomes of the coin flips are tails. Since each coin shows tails with probability $1/2^\ell$ and there is a total if $k$ coin flips, the probability of all of them being tails is $1/2^{k\ell}$. Since the entire state of the algorithm is the loop counter, it can be implemented using $\lceil\log k\rceil$ bits of memory.
\end{proof}

Next, we show how to combine Algorithm \ref{algo:non-uni-loop} and Algorithm \ref{algo:p_coin}, and we analyze the performance and selection complexity of the resulting algorithm. Given a biased coin $C_{1/2^{\ell}}$, we construct Algorithm Non-Uniform-Search  by replacing the lines where coin $C_{1/D}$ is tossed in Algorithm \ref{algo:non-uni-loop} with a copy of Algorithm \ref{algo:p_coin}, with parameters $k = \lceil \log D / \ell \rceil$ and $\ell$.

\begin{theorem}
\label{thm:algoA}
Let each of $n$ agents execute a copy of Algorithm Non-Uniform-Search. The minimum over all agents of the expected number of moves for an agent to find a target in distance $D > 1$ from the origin is $\BO(D^2/n + D)$. Moreover, Algorithm Non-Uniform-Search satisfies $\chi(\text{Algorithm Non-Uniform-Search}) = \log \log D +\BO(1)$.
\end{theorem}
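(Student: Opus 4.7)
The plan is to observe that Algorithm Non-Uniform-Search is obtained from Algorithm~\ref{algo:non-uni-loop} by a single local change -- the replacement of each toss of $C_{1/D}$ by an invocation of Algorithm~\ref{algo:p_coin} with parameters $k=\lceil \log D/\ell\rceil$ and $\ell$ -- so essentially all the work has already been done in Theorem~\ref{thm:upper} and Lemma~\ref{lem:p_coin}. The proof decomposes into three verification tasks: (a) show that the simulated coin behaves like $C_{1/D}$, (b) transfer the running-time analysis, and (c) compute the selection complexity.

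For (a), Lemma~\ref{lem:p_coin} gives that the simulated coin produces tails with probability $1/2^{k\ell}$, where $k\ell = \lceil \log D/\ell\rceil\cdot \ell$. This equals exactly $1/D$ whenever $\ell$ divides $\log D$ and otherwise lies in $[1/(D\cdot 2^\ell),\, 1/D]$. I would first dispatch the divisibility case, since then the composed algorithm has the same probability distribution over executions as Algorithm~\ref{algo:non-uni-loop}, and Theorem~\ref{thm:upper} applies verbatim to yield the $\BO(D^2/n+D)$ bound required in (b). For general $\ell$, one re-inspects Lemmas~\ref{lem:exp_D}--\ref{lem:iterations}: the expected walk length grows by at most a factor of $2^\ell$, the per-agent per-iteration success probability shrinks by at most $2^\ell$, and the running time thus scales by $2^{\BO(\ell)}$, which is absorbed into the $\BO(\cdot)$ as long as $\ell=\BO(1)$ (a slack that is made explicit in the later uniform version).

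For (c), the state space of Algorithm Non-Uniform-Search is the product of the five ``direction'' states of Algorithm~\ref{algo:non-uni-loop} (encoded in $3$ bits) and the loop counter of Algorithm~\ref{algo:p_coin}, which by Lemma~\ref{lem:p_coin} needs $\lceil \log k\rceil$ bits. Therefore
\begin{equation*}
b \;\le\; 3 + \bigl\lceil \log \lceil \log D/\ell \rceil \bigr\rceil,
\qquad
\chi \;=\; b + \log \ell \;\le\; \log \log D + \BO(1),
\end{equation*}
where the last step uses the elementary identity $\log(\log D/\ell) + \log \ell = \log \log D$, with ceilings swept into the $\BO(1)$. The only delicate point is the divisibility issue in (a); every other ingredient is an immediate consequence of results already established, so the bulk of the writeup is a careful bookkeeping of constants rather than a new argument.
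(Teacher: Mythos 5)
Your proposal follows essentially the same route as the paper's own proof: invoke Lemma~\ref{lem:p_coin} to identify the simulated coin with $C_{1/D}$, transfer the $\BO(D^2/n+D)$ bound from Theorem~\ref{thm:upper} (noting the coin subroutine adds no grid moves), and compute $\chi = \log\lceil\log D/\ell\rceil + \log\ell + 3 = \log\log D + \BO(1)$. Your explicit handling of the case where $\ell$ does not divide $\log D$ (a $2^{\BO(\ell)}$ slack) is in fact more careful than the paper, which silently relies on its earlier stated assumption in Section~\ref{sec:upper} that $\log D$ is an integer multiple of $\ell$, so this is a refinement rather than a different argument.
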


\begin{proof}
	 By Lemma \ref{lem:p_coin}, Algorithm \ref{algo:p_coin} run with parameters $k = \lceil \log D / \ell \rceil$ and $\ell$ generates coin flips with probability $1/D$ of showing tails. Therefore, the correctness of Algorithm Non-Uniform-Search follows from Theorem \ref{thm:upper}. Since Algorithm \ref{algo:p_coin} does not generate any moves of the agents on the grid, the time complexity of algorithm also follows from Theorem \ref{thm:upper}. 
	 
	 Finally, by Lemma \ref{lem:p_coin} and the fact that Algorithm \ref{algo:non-uni-loop} requires $3$ bits to be implemented, it follows that $\chi(\text{Algorithm Non-Uniform-Search}) = b + \log \ell = \log \lceil \log D / \ell \rceil + \log \ell + 3 = \log \log D + \BO(1)$.
\end{proof}

\subsection{Uniform Algorithm}
\label{sec:uniform}

In this section, we generalize the results from Section \ref{sec:non-uniform} to derive an algorithm that is uniform in $D$. The main difference is that now each agent maintains an estimate of $D$ that is increased until the target is found. For each estimate, an agent simply executes the corresponding variant of Algorithm Non-Uniform-Search. We show that for the algorithm in this section, the expected number of moves for the first agent to find a target at distance at most $D$ from the origin is $(D^2/n + D) 2^{\BO(\ell)}$. Also, the algorithm uses only $b = 3 \log \log_{2^\ell} D +\BO(1) = 3 \log \log D - 3 \log \ell + \BO(1)$ bits of memory.

To simplify the presentation, we break up the main algorithm into subroutines. We begin by showing how to move in a given direction by a random number of moves that depends on the current estimate $\hat{D}$ of $D$. In the following algorithm, recall that $\ell$ is used to bound from below the smallest probability available to each agent by $1/2^{\ell}$. We use an integer $k$ as a parameter to the algorithm in order to generate different distance estimates $\hat{D} = 2^{k \ell}$.

\begin{algorithm}
\caption{walk($k$,$\ell$, $dir$): Move by a random number of moves in direction $dir$ that is roughly uniform on $0,\ldots,2^{k\ell}$.}
\label{algo:goToPoint}
\While{coin($k,\ell$)$\,=\,$heads}{
  move one step in direction $dir$
}
\end{algorithm}

\begin{lemma}
\label{lem:walk}
	For each $i\in \{0,\ldots,2^{k\ell}\}$, the probability that Algorithm~\ref{algo:goToPoint} performs exactly $i$ moves is at least $1/2^{k\ell+2}$. The probability that the algorithm performs at least $2^{k\ell}$ moves is at least $1/4$. The expected number of moves is smaller than $2^{k\ell}$. The algorithm requires $\lceil \log k\rceil$ bits of memory.  
\end{lemma}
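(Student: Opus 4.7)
The plan is to observe that the number of moves $X$ performed by Algorithm~\ref{algo:goToPoint} has an explicit geometric distribution, and then read off the four claims directly from that formula. Setting $p := 1/2^{k\ell}$, Lemma~\ref{lem:p_coin} gives us that each call to coin($k,\ell$) independently returns tails with probability $p$. Since the while loop performs one move whenever the coin shows heads and terminates on the first tails, we obtain
\begin{equation*}
\Pr[X = i] = (1-p)^i p \qquad \text{for every } i \in \N.
\end{equation*}

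For the first bound, I would use that $(1-p)^i$ is nonincreasing in $i$, so for every $i \in \{0,\ldots,2^{k\ell}\}$,
\begin{equation*}
\Pr[X = i] \geq (1-p)^{2^{k\ell}} p = \left(1-\frac{1}{2^{k\ell}}\right)^{2^{k\ell}} \cdot \frac{1}{2^{k\ell}}.
\end{equation*}
The standard fact $(1-1/m)^m \geq 1/4$ for integers $m \geq 2$ (which covers our case since we may assume $k\ell \geq 1$) yields $\Pr[X = i] \geq 1/2^{k\ell+2}$, giving the first claim. The same estimate immediately gives $\Pr[X \geq 2^{k\ell}] = (1-p)^{2^{k\ell}} \geq 1/4$, proving the second claim.

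For the expectation, $X$ is a geometric random variable on $\{0,1,2,\ldots\}$ with success probability $p$, so
\begin{equation*}
\E[X] = \frac{1-p}{p} = 2^{k\ell} - 1 < 2^{k\ell}.
\end{equation*}
Finally, for the memory bound I would note that Algorithm~\ref{algo:goToPoint} itself maintains no persistent state across loop iterations beyond what is needed to execute a single call to coin($k,\ell$) and then either move or terminate; by Lemma~\ref{lem:p_coin} that subroutine uses $\lceil \log k \rceil$ bits, so the whole walk routine can be implemented within the same memory budget. I do not anticipate any real obstacle here — every step is essentially a direct calculation once the geometric distribution is identified; the only mild subtlety is the $(1-1/m)^m \geq 1/4$ estimate, which requires excluding the degenerate case $k\ell = 0$ (in which $p=1$ and the lemma's claims hold trivially or vacuously).
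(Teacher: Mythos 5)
Your proposal is correct and follows essentially the same route as the paper's proof: identify the number of moves as a geometric random variable with parameter $1/2^{k\ell}$, bound $(1-1/2^{k\ell})^{2^{k\ell}}$ below by $1/4$ for the first two claims, compute the geometric expectation for the third, and appeal to Lemma~\ref{lem:p_coin} for the memory bound. Your explicit remark about excluding the degenerate case $k\ell = 0$ is a minor point the paper leaves implicit (there $k,\ell \geq 1$), but otherwise the arguments coincide.
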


\begin{proof}
By Lemma~\ref{lem:p_coin}, the probability that the algorithm performs exactly $i\leq 2^{k\ell}$ moves is
\begin{equation*}
\left(1-\frac{1}{2^{k\ell}}\right)^i\frac{1}{2^{k\ell}}\geq \left(1-\frac{1}{2^{k\ell}}\right)^{2^{k\ell}}\frac{1}{2^{k\ell}}\geq \frac{1}{2^{k\ell+2}}.
\end{equation*}
The probability that it performs at least $2^{k\ell}$ moves is $(1-1/2^{k\ell})^{2^{k\ell}}\geq 1/4$. The expected number of moves is
\begin{equation*}
\sum_{i=1}^{\infty} i \left(1-\frac{1}{2^{k\ell}}\right)^i\frac{1}{2^{k\ell}}=\frac{1-1/2^{k\ell}}{(1/2^{k\ell})^2}\cdot \frac{1}{2^{k\ell}}<2^{k\ell}.
\end{equation*}
\noindent Implementing the coin flip by Algorithm~\ref{algo:p_coin}, the memory requirement follows from Lemma~\ref{lem:p_coin}.
\end{proof}

Using the subroutines above, Algorithm~\ref{algo:search} visits each grid point of a square of side length $2^{k\ell}$ centered at the origin with probability $\Omega(1/2^{2k\ell})$.

 \begin{algorithm}
\caption{search($k,\ell$): Visit each grid point of a square of side length $2^{k\ell}$ centered at the origin with probability $\Omega(1/2^{2k\ell})$.}
\label{algo:search}
\If{if $C_{1/2}$ shows heads}{
  walk($k,\ell$,up)
}
\Else{
  walk($k,\ell$,down)
}
\If{$C_{1/2}$ shows heads}{
  walk($k,\ell$,right)
}
\Else{
  walk($k,\ell$,left)
}
\end{algorithm}

\begin{lemma}
\label{lem:uniformArea}
  If called at the origin, for each point $(x,y) \in \{0,\ldots,2^{k\ell}\}^2$, Algorithm~\ref{algo:search} visits point $(x,y)$ with probability at least $1/2^{k\ell+6}$. It can be implemented using $\lceil \log k \rceil+2$ bits of memory.
\end{lemma}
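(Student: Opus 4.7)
}

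The plan is to fix an arbitrary target point $(x,y)\in\{0,\ldots,2^{k\ell}\}^2$ and lower bound the probability that Algorithm~\ref{algo:search} passes through it by decomposing the execution into four independent events: (i) the first coin chooses the correct vertical direction, (ii) the vertical walk travels the correct distance, (iii) the second coin chooses the correct horizontal direction, and (iv) the horizontal walk travels at least as far as needed. Since the two coin flips and the two calls to Algorithm~\ref{algo:goToPoint} use independent randomness, the probability of visiting $(x,y)$ is the product of the probabilities of these four events, and each can be bounded using Lemma~\ref{lem:walk}.

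In more detail, for the generic case $x,y\geq 1$, I would observe that the points visited during one execution starting at the origin form an ``L''-shape: if the first walk moves in direction $dir_1$ by $j$ steps and then the second walk moves in direction $dir_2$ by $i$ steps, the agent passes through $(x,y)$ iff $dir_1=\,$up, $j=y$, $dir_2=\,$right, and $i\geq x$. The first and third events each have probability $1/2$ by the explicit coin flips in Algorithm~\ref{algo:search}. By Lemma~\ref{lem:walk}, since $y\leq 2^{k\ell}$, the vertical walk performs exactly $y$ steps with probability at least $1/2^{k\ell+2}$; likewise, since $x\leq 2^{k\ell}$, the horizontal walk performs at least $2^{k\ell}\geq x$ steps with probability at least $1/4$. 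Multiplying gives
\begin{equation*}
\frac{1}{2}\cdot\frac{1}{2^{k\ell+2}}\cdot\frac{1}{2}\cdot\frac{1}{4}=\frac{1}{2^{k\ell+6}},
\end{equation*}
as required.

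The edge cases $x=0$ or $y=0$ should only make things easier, so I would dispatch them briefly: for $(0,0)$ the probability is $1$; for $(0,y)$ with $y\geq 1$, go up (prob.\ $1/2$) and walk at least $y$ steps (prob.\ $\geq 1/4$), giving $\geq 1/8$; for $(x,0)$ with $x\geq 1$, any first walk that stops immediately ($j=0$, prob.\ $\geq 1/2^{k\ell}$) combined with going right (prob.\ $1/2$) and walking at least $x$ steps (prob.\ $\geq 1/4$) suffices, giving $\geq 1/2^{k\ell+3}$. In all cases the bound $1/2^{k\ell+6}$ holds.

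Finally, for the memory bound, I would argue that the state of Algorithm~\ref{algo:search} consists of (a) a one-bit flag recording whether the current walk is vertical or horizontal, (b) a one-bit flag encoding the outcome of the direction coin (up/down or left/right) for the currently active walk, and (c) the internal state of the call to Algorithm~\ref{algo:goToPoint}, which by Lemma~\ref{lem:walk} needs $\lceil\log k\rceil$ bits. Summing yields the claimed $\lceil\log k\rceil+2$ bits. The only subtle point here is to verify that the two flags really suffice across the sequential phases of the algorithm—this is straightforward because at any moment the automaton is inside exactly one walk, and so only one direction bit is in use at a time. I anticipate this bookkeeping will be the least interesting but most error-prone part of the argument; the probabilistic estimate itself is a direct consequence of Lemma~\ref{lem:walk}.
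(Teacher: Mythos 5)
Your proposal is correct and follows essentially the same argument as the paper: condition on the two direction coins (probability $1/2$ each), use Lemma~\ref{lem:walk} to get probability at least $1/2^{k\ell+2}$ that the first walk stops at exactly the right coordinate and at least $1/4$ that the second walk covers at least $2^{k\ell}$ steps, and count $2$ bits for the current direction plus $\lceil\log k\rceil$ bits for the sequential calls to Algorithm~\ref{algo:goToPoint}. Your separate treatment of the $x=0$ or $y=0$ cases is harmless but unnecessary, since Lemma~\ref{lem:walk}'s bound already covers ``exactly $0$ moves,'' which is how the paper handles all points uniformly.
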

\begin{proof}
	Consider grid point $(x,y)\in \{0,\ldots,2^{k\ell}\}^2$. With probability $1/2$ each, the algorithm decides to move up and right in the first and second call to Algorithm~\ref{algo:goToPoint}. By Lemma~\ref{lem:walk}, the first call will halt after exactly $x$ moves with probability at least $1/2^{k\ell+2}$, and the second call will perform $2^{k\ell}\geq y$ moves with probability at least $1/4$. Hence, the claimed lower bound on the probability to visit $(x,y)$ follows. Analogously, the same holds for $(-x,y)$, $(-x,-y)$, and $(x,-y)$. The memory requirements are $2$ bits to memorize whether the direction of movement is currently up, down, left, or right, plus the $\lceil \log k \rceil$ bits needed for the (sequential) calls to Algorithm~\ref{algo:goToPoint}.
\end{proof}

Finally, in Algorithm~\ref{algo:loop}, we use Algorithm \ref{algo:search} to efficiently search an area of $\BO(D^2)$ with $n$ agents. Intuitively, the algorithm iterates through different values of the outer-loop parameter $i$, which correspond to the different estimates of $D$, increasing by approximately a factor of $2^{\ell}$. For each such estimate, the algorithm needs to execute a number of calls to the search subroutine with parameter $i$. 
However, since agents have limited memory and limited probability values, we can only count the number of such calls to the search routine approximately. We do so similarly to Algorithm \ref{algo:goToPoint}, by repeatedly tossing a biased coin and calling the search algorithm as long as the coin shows heads. 

\begin{algorithm}
\caption{Search Algorithm for $n$ agents. $K$ is a sufficiently large constant.}
\label{algo:loop}
\For{$i=1,\ldots$}{
  \While{coin($K + \max\{i-\lfloor (\log n)/\ell\rfloor,0\},\ell$)$\,=\,$heads}{
    search($i,\ell$)\\
    return to the origin
  }
}
\end{algorithm}

Throughout the proof of Algorithm \ref{algo:loop}, we refer to an iteration of the outer-most loop as a phase. 

\paragraph{Proof Overview.} First, in Lemma \ref{lem:exp_i}, we calculate the expected number of moves $R_i$ for an agent to complete phase $i$. Then, we apply the same reasoning as in Lemma \ref{lem:condition} (from Section \ref{sec:non-uniform}), to determine the expected number of  moves $\tilde{R}_{i,a}$ for an agent $a$ to complete phase $i$ (past some initial number of $\lceil \log_{2^{\ell}} D \rceil$ phases), conditioning on agent $a$ finding the target in phase $i$. Next, we move on to reasoning about all $n$ agents, instead of a single agent. In Lemma \ref{lem:whileLoop}, we bound the probability that in each phase $i$, at least $\Omega(2^{i\ell})$ calls to the subroutine search$(i,\ell)$ are executed by all agents together. In Lemma \ref{lem:probFind}, we use that result to calculate the probability that at least one of the $n$ agents finds the target in some phase $i$. Finally, we use these intermediate results to prove the main result of this section, Theorem \ref{theorem:upper}, which shows that the expected number of moves for the first agent to find a target within distance $D$ from the origin is $2^{\BO(\ell)}(D+D^2/n)$.

Denote by $R_i$ the expected number of moves until an agent completes phase $i$. 

\begin{lemma}\label{lem:exp_i}
$R_i\leq 4\rho_i 2^{i\ell}$, where $\rho_i:=2^{(K+\max\{i-\lfloor (\log n)/\ell\rfloor,0\})\ell}$.
\end{lemma}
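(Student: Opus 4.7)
The plan is to decompose a phase into its two sources of random cost, bound each with the earlier lemmas, and then multiply them using independence.

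First I would observe that by Lemma~\ref{lem:p_coin} (applied with parameters $k=K+\max\{i-\lfloor(\log n)/\ell\rfloor,0\}$ and $\ell$), the call \texttt{coin}$(K+\max\{i-\lfloor(\log n)/\ell\rfloor,0\},\ell)$ returns tails with probability exactly $1/\rho_i$. Consequently, the number $N_i$ of calls to \texttt{search}$(i,\ell)$ executed during phase $i$ is the number of heads before the first tails in an i.i.d.\ sequence of such coin flips, hence geometric and $\E[N_i]\le \rho_i$.

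Second, I would bound the expected number of moves per call to \texttt{search}$(i,\ell)$. Each such call consists of two calls to \texttt{walk}$(i,\ell,\cdot)$ (preceded by a fair coin flip which contributes no moves), and by Lemma~\ref{lem:walk} the expected number of moves per walk is strictly less than $2^{i\ell}$. Thus the expected number of moves generated by one invocation of \texttt{search}$(i,\ell)$ is less than $2\cdot 2^{i\ell}$. The returns to the origin are not counted in $R_i$ (we only count moves as defined in Section~\ref{sec:model}).

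Third, I would combine the two bounds. The coin flips that determine $N_i$ are independent of the random choices made inside the successive calls to \texttt{search}, so if $W_j$ denotes the number of moves produced by the $j$-th call then $N_i$ is a stopping time with respect to the $W_j$'s and Wald's identity yields
\begin{equation*}
R_i \;=\; \E\!\left[\sum_{j=1}^{N_i} W_j\right] \;=\; \E[N_i]\cdot \E[W_1] \;<\; \rho_i\cdot 2\cdot 2^{i\ell} \;\le\; 4\rho_i 2^{i\ell}.
\end{equation*}
The only mild subtlety is ensuring the independence needed for Wald, but this is immediate because each iteration of the while loop first flips the (fresh) coin, and only conditional on a heads outcome does it invoke a \texttt{search} call whose internal coin flips are drawn independently; so there is no feedback from $W_j$ into $N_i$. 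No step here looks like a serious obstacle; the factor~$4$ in the statement simply absorbs the slack from $\E[N_i]\le \rho_i$ and $\E[W_1]<2\cdot 2^{i\ell}$.
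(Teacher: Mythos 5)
There is a genuine gap, and it comes from a misreading of what $R_i$ is. In the paper, $R_i$ is the expected number of moves \emph{until the agent completes phase $i$}, i.e., it accumulates the moves of all phases $1,\dots,i$, not just phase $i$; this is what Corollary~\ref{coro:time} later needs, since the agent conditioned on finding the target in phase $i$ has already paid for every earlier phase. Your argument bounds only the moves generated inside phase $i$ (the $N_i$ calls to search$(i,\ell)$ in that one phase), so as written it proves a bound on the per-phase cost, not on $R_i$. Correspondingly, your closing remark that the factor $4$ "simply absorbs the slack from $\E[N_i]\le\rho_i$ and $\E[W_1]<2\cdot 2^{i\ell}$" is off: the factor $2$ of slack you have left over is exactly what the paper uses to absorb the earlier phases, via $\sum_{i'=1}^{i}2\rho_{i'}2^{i'\ell}<4\rho_i2^{i\ell}$ (a geometric sum, since $2^{i'\ell}$ grows by at least a factor $2^{\ell}\ge 2$ per phase and $\rho_{i'}$ is non-decreasing).

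The repair is routine and your machinery suffices: your per-phase bound of $2\rho_{i'}2^{i'\ell}$ expected moves in phase $i'$ (whether obtained by Wald's identity, as you do, or by the paper's direct double geometric-series computation--the two are equivalent here, and your independence justification for Wald is fine) should simply be summed over $i'=1,\dots,i$ and then dominated by the last term as above. With that one additional step your proof matches the lemma as stated.
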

\begin{proof}
By linearity of expectation, we can calculate $R_i$ as follows. In the expression below, index $i'$ counts the number of phases (up to $i$), index $j$ counts the number of calls to the search subroutine, and index $k$ counts the number of moves for an agent to complete each call to the search subroutine. In the sum indexed by $j$, we just calculate the probability that exactly $j$ calls to the search subroutine are executed and multiply that by the expected number of moves to complete one such call. In the sum indexed by $k$, we calculate the probability that Algorithm \ref{algo:goToPoint} stops after $k$ moves and multiply that by $2k$ because  each call to search($i',\ell$) results in two calls to walk($i',\ell,\cdot$). 
\begin{eqnarray*}
R_i &=&\sum_{i'=1}^{i} \left( \sum_{j=0}^{\infty} \frac{1}{\rho_{i'}} \left(1-\frac{1}{\rho_{i'}}\right)^j \sum_{k=0}^{\infty}\frac{1}{2^{i'\ell}} \left(1-\frac{1}{2^{i'\ell}}\right)^k\cdot 2k \right) \\
&<& \sum_{i'=1}^{i} \left( \sum_{j=0}^{\infty} \frac{1}{\rho_{i'}} \left(1-\frac{1}{\rho_{i'}}\right)^j 2\cdot 2^{i'\ell} \right)\\
&<& \sum_{i'=1}^{i}  2 \rho_{i'} 2^{i'\ell}\\
&<& 4\rho_i 2^{i\ell},
\end{eqnarray*}
\end{proof}

Denote by $\tilde{R}_{i,a}$ the expected number of moves until a fixed agent $a$ finds the target, conditioning on the fact that agent $a$ finds the target in phase $i\in \N$, but no earlier phase.

\begin{corollary}
\label{coro:time}
	If $i\geq i_0 =\lceil \log_{2^\ell} D\rceil$, then it holds that $\tilde{R}_{i,a} \leq 8\rho_i 2^{i\ell}=2^{\max\{2i\ell-\log n,i\ell\}}\cdot 2^{\BO(\ell)}$.
\end{corollary}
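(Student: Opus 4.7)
I would mimic the conditional expectation argument of Lemma~\ref{lem:condition}, but carry it out phase-by-phase rather than iteration-by-iteration. Let $X_{i'}$ denote the moves agent $a$ would spend in phase $i'$ in a coupled process that never terminates early upon finding the target, and let $Y_{i'}$ be the moves actually spent in phase $i'$ in the real algorithm (stopping on find). Then $Y_{i'}\le X_{i'}$ pointwise, with equality whenever agent $a$ does not find the target in phase $i'$. Writing $A_{i,a}$ for the event ``agent $a$ first finds the target in phase $i$'', on $A_{i,a}$ the total moves of agent $a$ equal $\sum_{i'<i}X_{i'}+Y_i$. Because different phases use independent coin tosses, splitting the conditional expectation gives
\[
\tilde R_{i,a} \;=\; \sum_{i'<i}\E\!\left[X_{i'}\mid\text{miss in phase }i'\right] \;+\; \E\!\left[Y_i\mid\text{find in phase }i\right].
\]

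For $i'<i_0$ the side length $2^{i'\ell}$ is strictly below $D$, the target is out of reach, and $P[\text{miss in phase }i']=1$, so the conditional expectation reduces to the unconditional per-phase cost already appearing in the proof of Lemma~\ref{lem:exp_i}. For $i_0\le i'\le i$ I would adapt the law-of-total-expectation trick of Lemma~\ref{lem:condition} to the two nested random quantities of a phase: the number $\nu^*$ of calls to search$(i',\ell)$ (geometric with success probability $1/\rho_{i'}$) and the moves $M$ per call (a sum of two truncated geometric walks). Since each such call picks its vertical and horizontal directions uniformly and independently, it misses any non-origin target with probability at least $3/4$, yielding $\E[M\mid\text{miss}]\le(4/3)\,\E[M]=\BO(2^{i'\ell})$ exactly as in Lemma~\ref{lem:condition}.

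The main technical step, and what I expect to be the chief obstacle, is controlling the number of search-calls under the tilted laws ``miss all'' and ``find''. A short generating-function calculation gives
\[
\E[\nu^*\mid\text{miss all}]\;=\;\frac{q}{1-q}\;\le\;\rho_{i'},\qquad q=(1-1/\rho_{i'})(1-p_{i'}),
\]
where $p_{i'}\le 1/4$ is the per-call hit probability. For the ``find'' event one writes $\E[\nu^*\mathbf{1}_{\text{find}}]=\E[\nu^*]-\E[\nu^*(1-p_{i'})^{\nu^*}]$, evaluates the second term using $\sum_j jq^j=q/(1-q)^2$, and divides by $P[\text{find}]$; setting $\alpha:=p_{i'}(\rho_{i'}-1)$ this simplifies to $\E[\nu^*\mid\text{find}]=((\rho_{i'}-1)(2+\alpha)+1)/(1+\alpha)\le 2\rho_{i'}$. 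The subtle point is that this bound must remain $\BO(\rho_{i'})$ even in the ``rare-find'' regime $\alpha\ll 1$; it does, because the exponential upper tail of the geometric $\nu^*$ keeps the conditional mean from blowing up. Since on ``find'' the iteration $J$ of the first hit satisfies $J\le\nu^*$, the same bound controls $\E[J\mid\text{find}]$. Plugging these estimates into Wald-type identities (valid by the conditional independence of $\nu^*$ and the per-call moves inside each tilted law) produces $\E[X_{i'}\mid\text{miss in }i']=\BO(\rho_{i'}2^{i'\ell})$ and $\E[Y_i\mid\text{find in }i]=\BO(\rho_i 2^{i\ell})$.

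Finally, because $\rho_{i'}2^{i'\ell}$ grows by a factor of at least $2^\ell$ from one phase to the next, the sum $\sum_{i'\le i}\rho_{i'}2^{i'\ell}$ telescopes into $\BO(\rho_i 2^{i\ell})$, which also upper-bounds $R_i$ via Lemma~\ref{lem:exp_i}. Tracking the constants carefully yields the claimed bound $\tilde R_{i,a}\le 8\rho_i 2^{i\ell}$. The stated equality $8\rho_i 2^{i\ell}=2^{\max\{2i\ell-\log n,\,i\ell\}}\cdot 2^{\BO(\ell)}$ then follows by substituting $\rho_i=2^{(K+\max\{i-\lfloor(\log n)/\ell\rfloor,0\})\ell}$ and absorbing the $8\cdot 2^{K\ell}$ prefactor into $2^{\BO(\ell)}$.
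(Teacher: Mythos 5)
Your plan follows essentially the same route as the paper's proof: the paper likewise bounds the final, successful walk by at most $2D$ moves and argues, ``analogously to Lemma~\ref{lem:condition}, in terms of phases instead of iterations,'' that conditioning on missing inflates the expected cost of the earlier phases by at most a constant factor, after which Lemma~\ref{lem:exp_i} and $2^{i\ell}\geq 2^{i_0\ell}\geq D$ give the claim. Your write-up is actually more explicit than the paper on the one point it glosses over, namely the unsuccessful search-calls inside the phase where the target is found, which you control through the tilted law of the geometric number of calls ($\E[\nu^*\mid\text{find}]\leq 2\rho_i$); that computation is correct, including the rare-find regime. Three small slips are worth fixing. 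First, for $i'<i_0$ the target is not ``out of reach'': the walk lengths produced by Algorithm~\ref{algo:goToPoint} are geometric and unbounded, so $P[\text{miss in phase }i']<1$; this is harmless because your tilted-law analysis for $i'\geq i_0$ applies verbatim to those phases, but the claim as stated is false. Second, a single search-call is only guaranteed to miss a non-origin target with probability at least $1/2$, not $3/4$ (for a target $(0,y)$ on the vertical axis, hitting requires only the correct vertical direction), so the per-call conditioning factor should be $2$ as in Lemma~\ref{lem:condition}. Third, with these corrected constants your accounting gives a bound closer to $12\rho_i 2^{i\ell}$ than the stated $8\rho_i 2^{i\ell}$; this is immaterial for the paper, since only the form $2^{\max\{2i\ell-\log n,\,i\ell\}}\cdot 2^{\BO(\ell)}$ is used downstream (and the paper's own constant bookkeeping here is comparably loose), but you should not claim the constant $8$ without exhibiting the arithmetic.
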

\begin{proof}
In the last phase, agent $a$ walks directly to the target, which takes at most $2D$ moves. For all previous phases, reasoning analogously to Lemma~\ref{lem:condition}, in terms of phases instead of iterations, we see that the expectation does not increase by more than a factor of $2$ due to conditioning on not finding the target. The claim thus follows from Lemma~\ref{lem:exp_i} and the fact that $2^{i\ell}\geq 2^{i_0\ell}\geq D$.
\end{proof}

Denote by ${\cal E}_1(i)$ the event that in total at least $2^{(K/2+i)\ell}$ calls to search$(i,\ell)$ are executed in phase $i$.

\begin{lemma}
\label{lem:whileLoop}
	$P[{\cal E}_1(i)]\geq 1-1/2^{2\ell+2}$.
\end{lemma}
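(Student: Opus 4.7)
The plan is to model the per-phase number of \texttt{search}$(i,\ell)$ calls as a sum of $n$ independent geometric random variables, convert the lower tail of that sum into an upper tail of a binomial via a ``combined coin stream'' view, and then finish with Markov's inequality.

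Concretely, in phase $i$ agent $a$ executes $X_a$ calls to \texttt{search}$(i,\ell)$, where by Lemma~\ref{lem:p_coin} the random variable $X_a$ is geometric (counting heads before the first tails) with tail probability $p := 1/\rho_i$. These $X_a$ are independent across agents, so $N_i := \sum_{a=1}^n X_a$ is the total number of calls in phase $i$ and ${\cal E}_1(i)^c = \{N_i < T\}$ with $T := 2^{(K/2+i)\ell}$. Viewing $X_a+1$ as the number of i.i.d.\ Bernoulli$(p)$ flips until agent $a$'s first tails, $N_i + n$ is the number of flips in a single combined stream until the $n$-th tails appears. Hence $\{N_i < T\} = \{N_i + n \leq T + n - 1\}$ is precisely the event that at least $n$ tails appear in the first $T+n-1$ combined flips, so $P[N_i < T] = P[B \geq n]$ for $B \sim \mathrm{Binomial}(T+n-1,p)$.

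Markov's inequality then gives $P[B \geq n] \leq \E[B]/n = (T+n-1)/(n\rho_i)$, so it suffices to show $(T+n)/(n\rho_i) \leq 1/2^{2\ell+2}$ for $K$ a sufficiently large constant. This splits along the definition of $\rho_i$: when $i \leq \lfloor (\log n)/\ell \rfloor$ we have $\rho_i = 2^{K\ell}$ and $2^{i\ell}\leq n$, giving $T/(n\rho_i) = (2^{i\ell}/n)\cdot 2^{-K\ell/2} \leq 2^{-K\ell/2}$; when $i > \lfloor (\log n)/\ell \rfloor$ we have $\rho_i = 2^{(K+i-\lfloor (\log n)/\ell \rfloor)\ell}$ and $2^{\lfloor (\log n)/\ell \rfloor \ell}\leq n$, again giving $T/(n\rho_i) \leq 2^{-K\ell/2}$. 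In both cases $\rho_i \geq 2^{K\ell}$ so $1/\rho_i \leq 2^{-K\ell} \leq 2^{-K\ell/2}$, hence $\E[B]/n \leq 2\cdot 2^{-K\ell/2}$, which is at most $2^{-(2\ell+2)}$ whenever $K\ell/2 \geq 2\ell+3$, i.e.\ for any sufficiently large constant $K$.

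The only step requiring real care is the reformulation from a sum of geometrics to a binomial upper tail: without it, applying Markov directly to $N_i$ would bound a lower tail using the expectation, which is the wrong direction. The case analysis on $\rho_i$ is routine but one must handle the $\max\{i-\lfloor (\log n)/\ell \rfloor,0\}$ in the exponent carefully. If a sharper concentration than $1/2^{2\ell+2}$ were ever required, one could swap Markov for a Chernoff bound on $B$, but the stated target is loose enough that Markov already suffices.
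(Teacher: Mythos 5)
Your proof is correct, but it takes a genuinely different route from the paper's. The paper computes the expected total number of \texttt{search}$(i,\ell)$ calls across all agents in phase $i$, lower-bounds it by $2^{(K+i-1)\ell}$ (via the same case analysis on $\max\{i-\lfloor(\log n)/\ell\rfloor,0\}$ that you perform), and then invokes the Chernoff lower-tail bound to get a failure probability of $e^{-\Omega(K\ell)}$ -- stated somewhat informally, since the appendix Chernoff bound is for sums of $0/1$ variables while the total is a sum of geometrics. Your argument makes precisely the reduction that the paper leaves implicit: via the negative-binomial/binomial duality on the combined coin stream, the lower-tail event $\{N_i < T\}$ becomes the upper-tail event $\{B \geq n\}$ for $B \sim \mathrm{Binomial}(T+n-1,1/\rho_i)$, after which plain Markov's inequality already yields failure probability at most $2\cdot 2^{-K\ell/2}$, which is below $2^{-(2\ell+2)}$ once $K \geq 10$ (independent of $\ell$, since $\ell\geq 1$). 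What the paper's approach buys is a stronger, exponentially small bound $e^{-\Omega(K\ell)}$ in the deviation regime, which is not needed here; what yours buys is elementarity and rigor -- you never need a concentration inequality for geometric sums, only Markov -- at the cost of a polynomially weaker (but still sufficient) tail. The only cosmetic caveat is integrality of $T=2^{(K/2+i)\ell}$, which falls under the paper's blanket rounding convention.
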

\begin{proof}
Abbreviate $\rho_i :=2^{(K + \max\{i-\lfloor (\log n)/\ell\rfloor,0\})\ell}\geq 2$. By Lemma~\ref{lem:p_coin} and linearity of expectation, the expected number of calls to search($\ell,i$) performed by all agents during phase $i$ is
\begin{equation*}
n\sum_{j=1}^{\infty}\frac{1}{\rho_i}\cdot\left(1-\frac{1}{\rho_i}\right)^j\cdot j
= n\cdot \rho_i \left(1-\frac{1}{\rho_i}\right)\geq  \frac{n}{2}\cdot \rho_i
\geq 2^{(K + i-1)\ell}.
\end{equation*}

Since the coin flips are independent, we can apply Chernoff's bound (see Equation \eqref{eq:chernoff_lower} in the Appendix), showing that the probability that fewer than $2^{(K/2 + i)\ell}$ searches are executed in total is at most $e^{-\Omega(K\ell)}$. Hence, since $K$ is a sufficiently large constant, the claim follows.
\end{proof}

Denote by ${\cal E}_2(i)$ the event that the target is found by some agent in phase $i\geq  i_0$.

\begin{lemma}
\label{lem:probFind}
	 $P[{\cal E}_2(i)]\geq 1-1/2^{2\ell+1}$.
\end{lemma}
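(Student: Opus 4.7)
The plan is to combine Lemma~\ref{lem:whileLoop} (enough searches are launched) with Lemma~\ref{lem:uniformArea} (each search has a decent chance of hitting the target), and then invoke a union bound. Concretely, I would first condition on the event $\mathcal{E}_1(i)$ that at least $2^{(K/2+i)\ell}$ calls to \texttt{search}$(i,\ell)$ are executed in phase $i$ across all agents, which by Lemma~\ref{lem:whileLoop} fails with probability at most $1/2^{2\ell+2}$.

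Next, I would exploit the fact that $i\geq i_0=\lceil \log_{2^\ell} D\rceil$ implies $2^{i\ell}\geq D$, so the target lies inside the square of side length $2^{i\ell}$ centered at the origin covered by Lemma~\ref{lem:uniformArea}. Hence, each invocation of \texttt{search}$(i,\ell)$ visits the target with probability at least $1/2^{i\ell+6}$, and these events are mutually independent across invocations (different agents make independent random choices, and within an agent, each invocation of \texttt{search} is started afresh from the origin with fresh random bits). Conditioned on $\mathcal{E}_1(i)$, the probability that every one of the at least $2^{(K/2+i)\ell}$ calls fails to visit the target is therefore at most
\begin{equation*}
\left(1-\frac{1}{2^{i\ell+6}}\right)^{2^{(K/2+i)\ell}}\leq \exp\!\left(-2^{(K/2)\ell-6}\right).
\end{equation*}
Since $K$ is a sufficiently large constant and $\ell\geq 1$, the right-hand side is bounded by $1/2^{2\ell+2}$.

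Finally, I would apply the union bound:
\begin{equation*}
P[\overline{\mathcal{E}_2(i)}] \leq P[\overline{\mathcal{E}_1(i)}] + P[\overline{\mathcal{E}_2(i)}\mid \mathcal{E}_1(i)] \leq \frac{1}{2^{2\ell+2}}+\frac{1}{2^{2\ell+2}}=\frac{1}{2^{2\ell+1}},
\end{equation*}
yielding the claim. The only mildly subtle point, and the one I would be careful about, is arguing independence of the different \texttt{search}$(i,\ell)$ invocations: since the algorithm returns to the origin between invocations and each call draws fresh random bits, the hit events are indeed independent, so the product bound above is justified. Choosing $K$ large enough to absorb the constant $6$ and make the exponential term dominate $1/2^{2\ell+2}$ is then routine.
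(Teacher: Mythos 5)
Your proposal is correct and follows essentially the same route as the paper: condition on ${\cal E}_1(i)$ from Lemma~\ref{lem:whileLoop}, use $2^{i\ell}\geq D$ with Lemma~\ref{lem:uniformArea} to bound the per-call hit probability by $1/2^{i\ell+6}$, bound the all-miss probability by $\left(1-1/2^{i\ell+6}\right)^{2^{(K/2+i)\ell}}$ which is small for sufficiently large $K$, and combine the two error terms. The paper writes the final combination as $P[{\cal E}_2(i)]\geq P[{\cal E}_2(i)\,|\,{\cal E}_1(i)]\cdot P[{\cal E}_1(i)]\geq\left(1-1/2^{2\ell+2}\right)^2$ rather than your union bound, but these are equivalent and yield the same bound $1-1/2^{2\ell+1}$.
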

\begin{proof}
By Lemma~\ref{lem:whileLoop}, with probability at least $1-1/2^{2\ell+2}$, at least $2^{(K/2+i)\ell}$ iterations of the while loop are executed in total. Because $i\geq i_0\geq \log_{2^\ell}D$, i.e., $2^{i\ell}\geq D$, Lemma~\ref{lem:uniformArea} shows that in each iteration, the probability to find the target is at least $1/2^{i\ell+6}$. Therefore, the probability to miss the target in all calls is at most
\begin{equation*}
\left(1-\frac{1}{2^{i\ell+6}}\right)^{2^{(K/2+i)\ell}}=2^{-\Omega(K\ell)}.
\end{equation*}
Because $K$ is a sufficiently large constant, we may assume that this is at most $1/2^{2\ell+2}$. We conclude that
\begin{equation*}
P[{\cal E}_2(i)]\geq P[{\cal E}_2(i)\,|\,{\cal E}_1(i)]\cdot P[{\cal E}_1(i)]\geq \left(1-\frac{1}{2^{2\ell+2}}\right)^2\geq 1-\frac{1}{2^{2\ell+1}},
\end{equation*}
as claimed.
\end{proof}

Let event ${\cal E}_3(i)$ denote the event that the target is found for the first time in phase $i$.

\begin{theorem}\label{theorem:upper}
Let each of $n$ agents execute a copy of Algorithm~\ref{algo:loop}. The minimum over all agents of the expected number of moves for an agent to find a target within distance $D$ from the origin is $2^{\BO(\ell)}(D+D^2/n)$.
\end{theorem}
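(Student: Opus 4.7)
The plan is to follow the same skeleton as Theorem \ref{thm:upper}: partition the expectation by the phase $i$ in which the target is first found and bound each piece. Writing
\begin{equation*}
\E[X_{\text{found}}]=\sum_{i=1}^{\infty} P[{\cal E}_3(i)]\cdot \E[X_{\text{found}}\mid {\cal E}_3(i)],
\end{equation*}
I split the series at the critical index $i_0=\lceil\log_{2^\ell}D\rceil$, the smallest $i$ for which $2^{i\ell}\geq D$ and hence Lemma \ref{lem:uniformArea} guarantees the uniform lower bound on the per-call finding probability inside Algorithm \ref{algo:loop}.

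To bound $\E[X_{\text{found}}\mid {\cal E}_3(i)]$ uniformly in $i$, I would partition ${\cal E}_3(i)$ by the smallest-id agent $a$ that finds the target in phase $i$, as in Theorem \ref{thm:upper}. Under this conditioning, the moves of agent $a$ in any earlier phase $i'<i$ are inflated by at most a factor of $2$ by the Lemma \ref{lem:condition}-style argument (an agent misses the target in a given call to \emph{search} with probability at least $1/2$ simply from its initial directional coin flips), so $\E[X_{\text{found}}\mid {\cal E}_{3,a}(i)]\leq 4\sum_{i'=1}^{i-1}R_{i'}+R_i$. Lemma \ref{lem:exp_i} and the definition of $\rho_i$ give $R_{i'}/R_{i'-1}\geq 2^\ell\geq 2$, so this telescopes to $\BO(R_i)$, yielding $\E[X_{\text{found}}\mid {\cal E}_3(i)]\leq \BO(R_i)$ for every $i$.

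For the head $i\leq i_0$, the trivial bound $P[{\cal E}_3(i)]\leq 1$ together with the same geometric growth of $R_i$ gives $\sum_{i\leq i_0}P[{\cal E}_3(i)]\cdot\BO(R_i)=\BO(R_{i_0})$. For the tail $i>i_0$, by independence of coin flips across phases and Lemma \ref{lem:probFind},
\begin{equation*}
P[{\cal E}_3(i)]\leq \prod_{j=i_0}^{i-1}P[\neg {\cal E}_2(j)]\leq\bigl(1/2^{2\ell+1}\bigr)^{i-i_0}.
\end{equation*}
Coupled with the worst-case growth $R_i/R_{i_0}\leq 2^{2\ell(i-i_0)}$ from Lemma \ref{lem:exp_i}, each summand is at most $\BO(R_{i_0})\cdot 2^{-(i-i_0)}$, so the tail is a geometric series bounded by $\BO(R_{i_0})$. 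Finally, Corollary \ref{coro:time} evaluated at $i_0$ yields $R_{i_0}=4\rho_{i_0}2^{i_0\ell}=2^{\BO(\ell)}(D^2/n+D)$, completing the claimed bound.

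The main obstacle is the tight balance in the tail: $R_i$ grows by a factor of $2^{2\ell}$ per phase while the per-phase non-finding probability is $1/2^{2\ell+1}$, so the series converges only because of the ``$+1$'' in the exponent in Lemma \ref{lem:probFind}, which produces the crucial $2^{-1}$ surplus per phase. A secondary, more routine care is justifying that conditioning on ${\cal E}_3(i)$ costs only a constant factor in the expected moves of a fixed agent, which reduces to applying the Lemma \ref{lem:condition} argument phase-by-phase and using independence of agents.
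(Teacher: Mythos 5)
Your proposal is correct and follows essentially the same route as the paper's proof: decompose $\E[X_{\text{found}}]$ by the first-finding phase split at $i_0$, condition on the smallest-id finding agent and absorb the conditioning via the Lemma~\ref{lem:condition}-style factor-$2$ argument (this is exactly Corollary~\ref{coro:time}), and balance the $2^{2\ell(i-i_0)}$ growth of the per-phase cost against the $2^{-(2\ell+1)(i-i_0)}$ decay from Lemma~\ref{lem:probFind}, whose surplus factor $2^{-(i-i_0)}$ you correctly identify as what makes the tail converge. The only nit is bookkeeping: the bound $4\rho_{i_0}2^{i_0\ell}$ comes from Lemma~\ref{lem:exp_i} (with Corollary~\ref{coro:time} supplying the conditioned version $\tilde{R}_{i_0,a}\leq 8\rho_{i_0}2^{i_0\ell}$), not from Corollary~\ref{coro:time} alone.
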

\begin{proof}
Observe that because the probability to find the target in phase $i$ is independent of all coin flips in earlier phases, we have that
\begin{equation*}
P[{\cal E}_3(i)]=P[{\cal E}_2(i)]\prod_{i'=1}^{i-1}(1-P[{\cal E}_2(i')]).
\end{equation*}
For $i> i_0$, by Lemma~\ref{lem:probFind} it follows that
\begin{equation*}
P[{\cal E}_3(i)]\leq \prod_{i'=i_0}^{i-1}(1-P[{\cal E}_2(i')])\leq \frac{1}{2^{(2\ell+1)(i-i_0)}}.
\end{equation*}

Let random variable $X_{\text{found}}$ denote the number of moves until the first agent finds the target.

\begin{equation}
\E[X_{\text{found}}] = \sum_{i=1}^{\infty} P[{\cal E}_3(i)]\cdot \E[X_{\text{found}} | {\cal E}_3(i)]
\end{equation}

We partition event ${\cal E}_3(i)$ into disjoint events ${\cal E}_3(i,a)$, where ${\cal E}_3(i,a)$ denotes the event that agent $a$ is the agent with the smallest id that finds the target in phase $i$\footnote{We pick the agent with the smallest id just as a tie-breaker between agents; the ids of agents do not play an important role in the algorithm.}. By the definition of ${\cal E}_3(i)$, we know that in any execution in ${\cal E}_3(i)$, some agent finds the target in phase $i$.

Next, we bound the value of $\E[X_{\text{found}}| {\cal E}_3(i)] $. By the Law of Total Expectation applied to the partition of event ${\cal E}_3(i)$, it follows:

\begin{equation*}
\E[X_{\text{found}}| {\cal E}_3(i)] = \sum_{a} P[{\cal E}_3(i,a) | {\cal E}_3(i)] \cdot \E[X_{\text{found}} | {\cal E}_3(i,a)] 
\end{equation*}

Let random variable $X_{i,a}$ denote the number of moves an agent $a$ takes to complete iteration $i$. Note that because we condition on event ${\cal E}_3(i,a)$, we know that the expected number of rounds for some agent to find the target is at least the expected number of rounds for the fixed agent $a$ to find the target and complete iteration $i$. Therefore, it follows that: 

\begin{equation*}
\E[X_{\text{found}}| {\cal E}_3(i)] \leq \sum_{a} P[{\cal E}_3(i,a) | {\cal E}_3(i)] \cdot \E[X_{i,a} | {\cal E}_3(i,a)]
\end{equation*}

By the definition of $\tilde{R}_{i,a}$, we know that $\E[X_{i,a} | {\cal E}_3(i,a)] = \tilde{R}_{i,a}$ because the value of $\tilde{R}_{i,a}$ is the same for each fixed agent $a$, including the one with the smallest id. Therefore, we conclude that:
\begin{equation*}
	\E[X_{\text{found}}| {\cal E}_3(i)] \leq  \sum_{a} P[{\cal E}_3(i,a) | {\cal E}_3(i)] \cdot \tilde{R}_{i,a} = \tilde{R}_{i,a}  \cdot \sum_{a} P[{\cal E}_3(i,a) | {\cal E}_3(i)] = \tilde{R}_{i,a} 
\end{equation*}

Finally, we sum over all phases to calculate the value of $\E[X_{\text{found}}]$. Recall that we already calculated the value of $P[{\cal E}_3(i)]$. Using Corollary \ref{coro:time} for the value of $\tilde{R}_{i,a}$, we conclude that:
\begin{eqnarray*}
\E[X_{\text{found}}] &=& \sum_{i=1}^{i_0} P[{\cal E}_3(i)]\cdot \tilde{R}_{i,a} + \sum_{i=i_0+1}^{\infty} P[{\cal E}_3(i)]\cdot \tilde{R}_{i,a} \\
&\leq & \tilde{R}_{i_0,a} + \sum_{i=i_0+1}^{\infty} \frac{1}{2^{(2\ell+1)(i-i_0)}}\cdot 2^{\max\{2i\ell-\log n,i\ell\}}\cdot 2^{\BO(\ell)}\\
&\leq &  2^{\max\{2i_0\ell-\log n,i_0\ell\}}\cdot 2^{\BO(\ell)}\cdot \sum_{i=i_0}^{\infty} \frac{2^{2\ell(i-i_0)}}{2^{(2\ell+1)(i-i_0)}} \\
&\leq & \max\left\{\frac{D^2}{n},D\right\}\cdot 2^{\BO(\ell)}\cdot \sum_{i=i_0}^{\infty} 2^{-(i-i_0)}\\
&\leq & \max\left\{\frac{D^2}{n},D\right\}\cdot 2^{\BO(\ell)}.
\end{eqnarray*}
\vspace*{-1.35cm}

\qedhere
\end{proof}

\section{Lower bound}
\label{sec:lower}

In this section, we present a lower bound showing that there is no algorithm that finds a target placed within distance $D$ from the origin in $D^{2-o(1)}$ rounds with high probability (w.h.p.), such that the algorithm satisfies $\chi({\cal A}) \leq \log \log D - \omega(1)$. 

Throughout this section, we say that some event occurs with high probability iff the probability of the event occurring is at least $1 - 1/D^c$ for an arbitrary predefined constant $c > 0$ and some $D \in \N$. We say that two probability distributions $\pi_1$ and $\pi_2$ are ``approximately equivalent'' iff $\|\pi_1 - \pi_2\| = \BO(1/D^c)$ for an arbitrary predefined constant $c > 0$ and some $D \in \N$. By $\|\cdot\|$ we denote the $\infty$-norm on the respective space.

First, we state the main theorem of the section in terms of the performance metric $M_{\text{steps}}$ (the minimum over all agents of the number of steps for an agent to find the target). Note that, by the definition of a round, this is equivalent to counting the expected number of rounds until the first agent finds the target. At the end of the section, in Corollary \ref{cor:moves}, we generalize the main result to apply to metric $M_{\text{moves}}$ (the minimum over all agents of the number of moves for an agent to find the target). 

\begin{theorem}\label{thm:lower}
	Let $\mathcal{A}$ be an algorithm with $\chi({\cal A}) = b + \log \ell \leq \log \log D-\omega(1)$ and $n \in poly(D)$ agents. There is a placement of the target within distance $D$ from the origin such that w.h.p.\ no agent executing algorithm $\mathcal{A}$ finds it in fewer than $D^{2-o(1)}$ rounds.
	Moreover, the probability for some agent to find a target, placed uniformly at random in the square of side $2D$ centered at the origin, within $D^{2-o(1)}$ rounds is $o(1)$.
\end{theorem}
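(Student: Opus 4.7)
The plan is to exploit the tiny size and coarse probabilities of each agent's Markov chain to show that its position after $D^{2-o(1)}$ rounds is concentrated in a very small region of the plane. The hypothesis $\chi \leq \log\log D - \omega(1)$ gives $|S|\cdot\ell = 2^{\chi} = o(\log D)$, so any quantity of the form $2^{\BO(|S|\cdot\ell)}$ is at most $D^{o(1)}$. This quantity controls both the expected time for the chain started in $s_0$ to leave the transient part of the state graph and enter some ergodic class, and the mixing time within such a class (using the standard bound on mixing time in terms of the number of states and the minimum nonzero transition probability $1/2^{\ell}$). Therefore, after a negligible warm-up of $D^{o(1)}$ rounds, I may treat the agent's state as drawn from the stationary distribution $\pi$ of one of a bounded number of ergodic classes, with the remaining $D^{2-o(1)}$ rounds behaving stationarily up to total-variation error $1/\mathrm{poly}(D)$.

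For each ergodic class I compute the per-step drift $\vec{d}=\E_\pi[\vec{v}(s)]$, where $\vec{v}(s)$ is the displacement coded by state $s$, and split into two sub-cases. When the class contains no origin-reset state, the position equals $X_t=\sum_{i=1}^{t}\vec{v}(s_i)$; the Markov-chain central limit theorem together with a Hoeffding-type concentration bound (with constants controlled by $|S|$ and $2^\ell$) yields $X_t = t\vec{d}+\BO(\sqrt{t\log D})$ with probability at least $1-1/\mathrm{poly}(D)$. If $|\vec{d}|$ is large enough that $t|\vec{d}|\geq D$ already for $t=D^{2-o(1)}$, the agent stays in a strip of width $D^{1-o(1)}$ around the ray from the origin in direction $\vec{d}$, and I place the target at distance $D$ perpendicular to $\vec{d}$; if $|\vec{d}|$ is small, $X_t$ lies in a ball of radius $D^{1-o(1)}\ll D$ about the origin, so any target at distance exactly $D$ is missed. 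When the class does contain an origin-reset state, the strong Markov property decomposes the trajectory into i.i.d.\ excursions between resets; each excursion has bounded expected length and bounded expected displacement (again in terms of $|S|$ and $2^\ell$), and applying the same concentration argument to the partial sums restarted at every reset again gives containment in a $D^{1-o(1)}$-ball around the origin with high probability.

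To combine across agents, observe that the $n=\mathrm{poly}(D)$ agents are i.i.d.\ copies of the same chain started in the same state, so each misses the target chosen above with probability $1-1/\mathrm{poly}(D)$, where the polynomial can be tuned via the $\omega(1)$ slack in $\chi$; a union bound over the agents yields the w.h.p.\ claim. For the random-target claim, the characterization in the previous paragraph identifies the typical reachable region as an essentially fixed strip or ball of area $\BO(t\log D)=D^{2-o(1)}$ determined only by the chain and not by the randomness, so the union of the typical regions of the $n$ agents is contained in a slightly enlarged copy of this same region, still of area $D^{2-o(1)}$. Dividing by the total $(2D+1)^2=\Theta(D^2)$ possible target positions gives hit probability $D^{-o(1)}=o(1)$, where the rate at which the exponent tends to $0$ is controlled by the $\omega(1)$ slack. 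The main obstacle will be the excursion-based argument for the origin-reset case---showing that the agent still concentrates near the origin despite the interleaving of deterministic teleportations and random-walk segments---and, relatedly, showing that $n$ i.i.d.\ trajectories occupy essentially the same region rather than a factor-$n$ larger one, which is what prevents a naive union bound over visited points from destroying the random-target claim.
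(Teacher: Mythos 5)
Your proposal follows essentially the same route as the paper: a $D^{o(1)}$ warm-up/mixing phase justified by $2^{\ell|S|}=D^{o(1)}$, concentration of each agent's displacement around a drift line determined solely by its recurrent class, the observation that the resulting $o(D^2)$-area region depends only on the chain (so it does not grow with $n$), and a union bound over the $\mathrm{poly}(D)$ agents. The only notable differences are cosmetic: the paper handles origin-labelled recurrent states by noting the agent then revisits the origin every $D^{o(1)}$ rounds (rather than your excursion decomposition), and it replaces your appeal to generic Markov-chain concentration with an explicit blocking/coupling argument that also takes care of periodicity.
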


\subsection{Proof Overview}

Here we provide a high-level overview of our main proof argument. We fix an algorithm $\mathcal{A}$ and focus on executions of this algorithm of length $D^{2-o(1)}$ rounds. We prove that since agents have $o(\log D)$ states, they ``forget'' about past events too fast to behave substantially different from a biased random walk. 

More concretely, first we show, in Corollary \ref{cor:initial} that after $D^{o(1)}$ initial rounds each agent $a$ is located in some recurrent class $C(a)$ of the Markov chain. We use this corollary to prove, in Corollary \ref{cor:origin}, that after the initial $D^{o(1)}$ rounds each agent $a$ does not return to the origin (or it keeps returning every $D^{o(1)}$ rounds, so it does not explore much of the grid). Therefore, throughout the rest of the proof we can ignore the states labeled ``origin".  

Assume there is a unique stationary distribution of $C(a)$.\footnote{This holds only if the induced Markov chain on the recurrent class is aperiodic, but the reasoning is essentially the same for the general case. We handle this technicality in Section \ref{sec:draw}.} Since there are few states and non-zero transition probabilities are bounded from below, standard results on Markov chains imply that taking $D^{o(1)}$ steps from any state in the recurrent class will result in a distribution on the class's states that is (almost) indistinguishable from the stationary distribution (Corollary~\ref{cor:bound}); in other words, any information agents try to preserve in their state will be lost quickly with respect to $D$.

The next step in the proof is a coupling argument. We split up the rounds in the execution into groups such that within each group, rounds are sufficiently far apart from one another for the above ``forgetting'' to take place. For each group, we show that drawing states independently from the stationary distribution introduces only a negligible error (Lemma~\ref{lem:pi_s} and Corollary~\ref{cor:coin}). Doing so, we can apply Chernoff's bound to each group, yielding that agents will not deviate substantially from the expected path they take when, in each round, they draw a state according to the stationary distribution and execute the corresponding move on the grid (Lemma~\ref{lem:upwards} and Corollary~\ref{cor:concentration}). Taking a union bound over all groups, it follows that, w.h.p., each agent will not deviate from a straight line (the expected path associated with the recurrent class it ends up in) by more than distance $o(D/|S|)$, where $S$ is the number of states of the Markov chain. It is crucial here that the corresponding region in the grid, restricted to distance $D$ from the origin, has size $o(D^2/|S|)$ and depends only on the component of the Markov chain the agent ends up in. Therefore, since there are no more than $|S|$ components, taking a union bound over all agents shows that w.h.p.\ together they visit an area of $o(D^2)$.

\subsection{Proof}
We assume without loss of generality that values like $\ln D$ are integers; for the general case, one may simply round up. Moreover, since we are interested in asymptotics with respect to $D$, we may always assume that $D$ is larger than any given constant.

Fix any algorithm ${\cal A}$, some $D \in \N$, and let $b + \log \ell \leq \log \log D-\omega(1)$. Consider the probability distribution of executions of ${\cal A}$ of length $\Delta=D^{2-o(1)}$ rounds; we will fix the $o(1)$-term in the exponent later, in Lemma~\ref{lem:upwards}. 

We break the proof down into three main parts. First, in Section \ref{sec:initial}, we show that after a certain number of initial rounds each agent is in a recurrent class and, for simplicity, we can ignore the states labeled ``origin". Next, in Section \ref{sec:draw}, we show that if we break down the execution into large enough blocks of rounds, with high probability  we can assume that the steps associated with rounds in different blocks do not depend on each other. Finally, in Section \ref{sec:movement}, we focus on the movement of the agents in the grid, derived from these ``almost" independent steps, and we show that with high probability the agents will not explore any points outside of an area of size $o(D)$ around the origin.

\subsubsection{Initial steps in the Markov chain}
\label{sec:initial}

Let random variable $C(a,r)$ denote the recurrent class of the Markov chain in which agent $a$ is located at the end of round $r$; if $a$ is in a transient state at the end of round $r$, we set $C(a,r):=\bot$. Also, by $p_0$ we denote the smallest non-zero probability in the Markov chain. By assumption, we know that $p_0 \geq 1/2^\ell$.

First we show that for any agent $a$ and any state $s$ of the Markov chain, if state $s$ is always reachable by agent $a$, then agent $a$ visits state $s$ within $D^{o(1)}$ rounds.

Let $R_0 = p_0^{-2^b}2^b c\log D = D^{o(1)}$ where the constant $c>0$ will be specified later.

\begin{lemma}
\label{lem:reach}
For any agent $a$, any round $r$, and any state $s$, condition on the event that at the end of round $r$ of the execution agent $a$ never visits a state $s'$ such that $s$ is not reachable from $s'$. Then, w.h.p.\ agent $a$ visits state $s$ at the end of round $r'$ such that $r \leq r' \leq r+R_0$.
\end{lemma}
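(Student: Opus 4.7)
The plan is a block argument based on a shortest-path structural observation. Let $S' \subseteq S$ denote the set of states from which $s$ is reachable in the Markov chain; under the conditioning, agent $a$ stays in $S'$ throughout rounds $r, r+1, \ldots, r+R_0$. The key structural observation is that for every $s' \in S'$, the shortest directed path from $s'$ to $s$ has length at most $|S| - 1 < 2^b$, and every intermediate state on this path also lies in $S'$, since each of them can reach $s$ via the remaining suffix of the path. Consequently, the probability in the unconditional chain of traversing this particular path from $s'$ is at least $p_0^{|S|} \geq p_0^{2^b}$.

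With this in hand, I would partition the window $r+1, \ldots, r+R_0$ into $K = R_0/2^b = p_0^{-2^b}\, c\log D$ consecutive blocks of length $2^b$ and argue by induction on the block index that, whenever the agent is in $S'$ at the start of a block (as enforced by the conditioning), the probability of visiting $s$ within that block is at least $p_0^{2^b}$, independently of earlier history. Iterating this bound over all $K$ blocks by the Markov property yields that the probability of staying in $S'$ throughout the window while never visiting $s$ is at most $(1 - p_0^{2^b})^K \leq \exp(-c\log D) = D^{-c}$, which gives the required high-probability conclusion for an appropriately chosen constant $c$.

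The main obstacle is correctly handling the conditioning event $\mathcal{F}$, which in its strongest reading constrains the agent's entire future trajectory and can in principle reweight transition probabilities within the window of interest. I would address this by bounding the joint probability $P(\mathcal{F} \cap \{T_s > R_0\})$ directly via the block argument, and then dividing by $P(\mathcal{F})$; the point that prevents this division from degrading the estimate is that the witnessing shortest-path trajectory used at each block lies entirely in $S'$ and terminates at $s \in S'$, so it is fully compatible with $\mathcal{F}$, and the probability of $\mathcal{F}$ continuing to hold beyond the window appears as the same normalization factor on both sides of the conditional probability. This effectively reduces the proof to the unconditional block computation above, giving the claimed $D^{-c}$ failure bound.
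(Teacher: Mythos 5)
Your proposal is correct and follows essentially the same route as the paper: the paper likewise observes that from whichever state the agent occupies (while $s$ stays reachable) there is a path of length at most $|S|-1\leq 2^b$ traversed with probability at least $p_0^{2^b}$, and it bounds the failure probability over $R_0/2^b$ blocks by $\left(1-p_0^{2^b}\right)^{R_0/2^b}\leq 1/D^{\Omega(c)}$, phrasing the per-block step as a biased walk on a line of $2^b$ nodes. Your closing paragraph on handling the conditioning event is additional care the paper does not take (its proof simply uses ``$s$ is reachable after each round'' and applies the unconditional per-block bound), so there is no substantive divergence.
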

\begin{proof}
	Since state $s$ is reachable after each round, there must be a path of length at most $|S|-1$ from the state in which agent $a$ is located at the end of round $r$ to state $s$. Therefore, the probability that the agent visits state $s$ within $R_0$ rounds is bounded from below by the probability that a potentially biased random walk on a line of $2^b\geq |S|-1$ nodes starting at the leftmost node reaches the rightmost node within $R_0$ rounds. This probability in turn is bounded from below by
\begin{equation*}
1 - \left(1-p_0^{2^b}\right)^{R_0/2^b}=1 - \left(1-p_0^{2^b}\right)^{p_0^{-2^b}c\log D}
\geq 1 - 1/2^{\Omega(c\log(D+n))}=1 - \frac{1}{D^{\Omega(c)}},
\end{equation*}
where the second last step uses fact that $p_0 \leq 1/2$; otherwise, each state in the Markov chain has only one outgoing transition, in which case the claim of the lemma is trivial.
Therefore, for an appropriate choice of $c$, w.h.p.\ agent $a$ visits state $s$ within $R_0$ rounds.
\end{proof}

In the following corollary we show that for any round $r \geq R_0$, w.h.p.\ an agent is located in some recurrent class of the Markov chain.

\begin{corollary}
\label{cor:initial}
	For any agent $a$ and any round $r \geq R_0$, w.h.p.\ $C(a,r) = C(a,r+1) \neq \bot$.
\end{corollary}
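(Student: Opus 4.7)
The plan is to show that after $R_0$ rounds agent $a$ has, with high probability, entered some recurrent class of its Markov chain; the equality $C(a,r)=C(a,r+1)$ then follows automatically from the fact that recurrent classes are closed under transitions (once the agent reaches a recurrent class, it cannot leave). Thus the only substantive content of the corollary is the claim $C(a,r) \neq \bot$ w.h.p.

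To prove this, I would argue directly rather than invoke Lemma~\ref{lem:reach}, whose conditioning hypothesis is awkward here: from the initial state multiple recurrent classes may be reachable, so no single recurrent state is guaranteed to remain reachable throughout the trajectory. The key observation is that from \emph{any} transient state $s'$ there exists a directed path of length at most $|S|\le 2^b$ to some recurrent state. (Since $s'$ is transient, the walk almost surely leaves the transient set, so some recurrent state must be reachable from $s'$; a shortest such path then visits each state at most once.) Because every non-zero transition probability is at least $p_0 \ge 1/2^\ell$, the probability that the agent traverses this specific path during the next $2^b$ rounds is at least $p_0^{2^b}$, and once a recurrent state is reached it remains recurrent thereafter.

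Next, I would partition the first $R_0 = p_0^{-2^b}\, 2^b\, c\log D$ rounds into $k := p_0^{-2^b}\, c\log D$ consecutive blocks of length $2^b$. Applying the Markov property at the start of each block, conditional on the agent still being transient at that time, the probability of entering a recurrent class within the block is at least $p_0^{2^b}$. Iterating yields
\[
  P[\,C(a,R_0) = \bot\,] \;\le\; \bigl(1 - p_0^{2^b}\bigr)^{k} \;\le\; e^{-c\log D} \;=\; \frac{1}{D^{\Omega(c)}},
\]
which is $1 - 1/D^c$ upon enlarging $c$ as needed; and once in a recurrent class the agent stays there, giving $C(a,r) = C(a,r+1) \neq \bot$ for every $r \ge R_0$ on this event. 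The main technical point is merely ensuring the parameters line up: the choice of $R_0$ is dictated exactly by the two bounds $|S| \le 2^b$ (hence path length $\le 2^b$) and smallest probability $p_0 \ge 1/2^\ell$, so no real obstacle arises beyond this standard geometric-waiting-time argument for absorption into the recurrent part of a finite Markov chain.
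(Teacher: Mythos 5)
Your proof is correct, and it reaches the conclusion by a genuinely different route from the paper's. The paper keeps Lemma~\ref{lem:reach} as the workhorse: it constructs a derived chain in which all recurrent states are merged into a single state $s_C$, observes that $s_C$ is reachable from every state of that chain (so the conditioning hypothesis of Lemma~\ref{lem:reach} holds vacuously), applies the lemma at round $0$ to conclude that some recurrent state of the original chain is reached within $R_0$ rounds, and finishes with closedness of recurrent classes exactly as you do. You instead bypass Lemma~\ref{lem:reach} and re-derive its hitting-time estimate directly for the set of recurrent states: from any transient state there is a path of length at most $2^b$ into the recurrent part, each block of $2^b$ rounds therefore succeeds with probability at least $p_0^{2^b}$, and $\bigl(1-p_0^{2^b}\bigr)^{p_0^{-2^b}c\log D}\le D^{-\Omega(c)}$. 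The two approaches trade off as follows: the paper's merging trick reuses an already-proved lemma and keeps the corollary's proof to a few lines, while your inlined argument is self-contained and makes explicit why the conditioning in Lemma~\ref{lem:reach} (no single recurrent state need remain reachable from every visited state) is not an obstacle --- precisely the issue the merging construction was introduced to dodge. Quantitatively the two arguments are the same geometric-waiting-time bound with the same $R_0$, so nothing is lost or gained in the parameters.
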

\begin{proof}
First, we derive a Markov chain from the original Markov chain as follows. We identify all recurrent states in the original Markov chain and we merge them all into a single recurrent state $s_C$ of the derived Markov chain.

By definition of a recurrent class and because there is only one such class, for each state $s$ in the derived Markov chain, the recurrent state $s_C$ is always reachable from $s$. Applying Lemma \ref{lem:reach} to agent $a$, round $0$ and state $s_C$, it follows that w.h.p.\ $a$ visits $s_C$ at the end of $R_0$ or earlier. This implies that in the original Markov chain w.h.p.\ agent $a$ visits \emph{some} recurrent state $s \in C$, where $C$ is a recurrent class, at the end of round $R_0$ or earlier.
Since recurrent classes cannot be left, it must be the case that $C(a,R_0) = C$ and $C(a,r) = C(a,r+1) = C$ for each $r \geq R_0$.
\end{proof}

In Corollary \ref{cor:initial}, we showed that at the end of each round $r \geq R_0$ an agent $a$ is located in some recurrent class $C(a,r)$ w.h.p.\ Since agent $a$ does not leave that class in subsequent rounds, we will refer to it by $C(a)$. Next, we show that for any state $s \in C(a)$ and any round $r \geq R_0$, w.h.p.\ agent $a$ visits state $s$ at the end of round $r+R_0$ or earlier.

\begin{corollary}
\label{cor:visits}
For each agent $a$, each round $r \geq R_0$, and each state $s \in C(a)$, w.h.p.\ agent $a$ visits state $s$ at the end of round $r'$ such that $r \leq r' \leq r+R_0$.
\end{corollary}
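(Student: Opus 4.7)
The plan is to derive this directly by combining Corollary~\ref{cor:initial} with Lemma~\ref{lem:reach}. First, by Corollary~\ref{cor:initial} applied to agent $a$ and round $r \geq R_0$, we have that w.h.p.\ $C(a,r) = C(a) \neq \bot$, so at the end of round $r$ the agent has already entered the recurrent class $C(a)$, and since recurrent classes cannot be left, the agent remains in $C(a)$ for every round $r' \geq r$.

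Next I would observe that because $C(a)$ is a single recurrent (hence strongly connected) class of the Markov chain, every state $s \in C(a)$ is reachable from every other state $s' \in C(a)$. In particular, the hypothesis of Lemma~\ref{lem:reach} is satisfied: conditional on the agent being in $C(a)$ throughout rounds $r, r+1, \dots, r+R_0$, it never visits a state $s'$ from which $s$ fails to be reachable. Applying Lemma~\ref{lem:reach} with round $r$ and target state $s$ then yields that, w.h.p., agent $a$ visits $s$ in some round $r'$ with $r \leq r' \leq r + R_0$.

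Finally, I would take a union bound over the high-probability event from Corollary~\ref{cor:initial} (that $C(a,r) = C(a)$ and that the agent stays in this class for $R_0$ further rounds) and the high-probability event from Lemma~\ref{lem:reach} (that $s$ is visited within $R_0$ rounds); each fails with probability at most $1/D^{\Omega(c)}$, so by choosing the constant $c$ in the definition of $R_0$ slightly larger, the combined event still holds w.h.p.\ in the sense defined at the start of Section~\ref{sec:lower}.

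There is no real obstacle here: the only thing to be careful about is correctly invoking Lemma~\ref{lem:reach}, whose hypothesis concerns reachability from \emph{every} state visited in the window, not just from the state at round $r$. This is exactly why it matters that $C(a)$ is a recurrent class, which guarantees mutual reachability of all of its states and thus makes the hypothesis automatic once the agent has entered $C(a)$.
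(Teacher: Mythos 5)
Your proposal is correct and follows essentially the same route as the paper: invoke Corollary~\ref{cor:initial} to conclude the agent is w.h.p.\ in the recurrent class $C(a)$ from round $r$ onward, note that recurrence makes every $s \in C(a)$ reachable from every state the agent can subsequently occupy, and then apply Lemma~\ref{lem:reach}. Your added remarks about the union bound and about verifying the reachability hypothesis of Lemma~\ref{lem:reach} only make explicit what the paper leaves implicit.
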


\begin{proof}
	By Corollary \ref{cor:initial}, we know that w.h.p.\ at the end of round $r \geq R_0$ agent $a$ is in some recurrent class $C(a)$ that it never subsequently leaves. Therefore, each state $s \in C(a)$ is reachable by $a$ after each round $r \geq R_0$. By Lemma \ref{lem:reach},  it is true that w.h.p.\ agent $a$ visits state $s$ within $R_0$ rounds.
\end{proof}

Finally, we show that the recurrent class $C(a)$ in which agent $a$ is located does not contain any states labeled ``origin'', or otherwise, the agent keeps returning to the origin too often and makes no progress exploring the grid. This result will let us, for convenience, ignore states labeled ``origin" through the rest of the proof.

\begin{corollary}
\label{cor:origin}
W.h.p., at least one of the following is true for any agent $a$ in any round $r \geq R_0$: (1) agent $a$ never visits a point in the grid at distance more than $D^{o(1)}$ from the origin, or (2) agent $a$ is located in a recurrent class in which none of the states are labeled ``origin".
\end{corollary}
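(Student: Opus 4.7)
The plan is to condition on the high-probability event from Corollary~\ref{cor:initial} that agent $a$ enters a (unique) recurrent class $C(a)$ by round $R_0$ and never leaves it thereafter. Given this, I would perform a case split on whether $C(a)$ contains any state whose label under $M$ is \emph{origin}.

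If no state of $C(a)$ has label \emph{origin}, then alternative~(2) of the corollary is satisfied verbatim, and there is nothing more to show. The substantive case is when some $s^\star \in C(a)$ satisfies $M(s^\star)=\,$origin. Here I would invoke Corollary~\ref{cor:visits} with the state $s^\star$ on the successive windows $[R_0,2R_0],[2R_0,3R_0],\ldots$ of length $R_0$ that cover all rounds up to $\Delta=D^{2-o(1)}$. Each such invocation ensures w.h.p.\ that $a$ visits $s^\star$ at least once within the window; since $R_0=D^{o(1)}$, there are only $\Delta/R_0=\mathrm{poly}(D)$ windows, so a union bound (absorbed by enlarging the constant $c$ in the definition of ``w.h.p.'') yields that $a$ returns to the origin at least once in every window throughout the execution.

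Between any two consecutive returns to the origin, at most $R_0$ rounds elapse, and each round changes the grid position by at most one hop. Hence in the second case $a$ never travels farther than $R_0=D^{o(1)}$ from the origin at any round $r\geq R_0$, which is precisely alternative~(1). A final union bound over the $n\in \mathrm{poly}(D)$ agents (again absorbed into $c$) completes the argument.

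I do not expect any substantive obstacle here: the corollary is essentially a bookkeeping consequence of Corollaries~\ref{cor:initial} and~\ref{cor:visits}. The only point requiring mild care is choosing the window length equal to $R_0$ so that Corollary~\ref{cor:visits} applies in each window while keeping the total number of events in the union bound polynomial in $D$; this is possible precisely because $R_0$ is sub-polynomial.
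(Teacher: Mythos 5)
Your proposal is correct and follows essentially the same route as the paper: condition on Corollary~\ref{cor:initial}, split on whether $C(a)$ contains an ``origin''-labeled state, and if so use Corollary~\ref{cor:visits} to conclude the agent returns to the origin every $D^{o(1)}$ rounds and hence stays within distance $D^{o(1)}$. Your explicit windowing and union bound over the $\mathrm{poly}(D)$ windows (and agents) just spells out what the paper leaves implicit; the only nitpick is that consecutive returns may be up to $2R_0$ apart rather than $R_0$, which changes nothing since this is still $D^{o(1)}$.
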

\begin{proof}
	By Corollary \ref{cor:initial}, we know that w.h.p.\ at the end of round $r \geq R_0$ agent $a$ is in some recurrent class $C(a)$ that it never subsequently leaves. By Lemma \ref{cor:visits}, we know that for each state $s \in C(a)$, labeled ``origin" and each round $r \geq R_0$, w.h.p.\ agent $a$ visits $s$ at the end of round $r+R_0$ or earlier. Therefore, if $C(a)$ contains a state labeled ``origin", then w.h.p.\ agent $a$ never visits a point at distance more than $R_0 = D^{o(1)}$ from the origin. Otherwise, (2) holds.
\end{proof}

Throughout the rest of the proof, we consider executions after round $R_0$; since, $R_0 = D^{o(1)}$ and we consider executions of length $\Delta = D^{2-o(1)}$, we can just ignore these initial rounds. Therefore, from Corollary \ref{cor:initial} and Corollary \ref{cor:origin}, we can assume for the rest of the proof that each agent $a$ is in a recurrent class $C(a)$ and it does not return to the origin.

\subsubsection{Moves drawn from the stationary distribution}
\label{sec:draw}

Fix an agent $a$ and consider $C:=C(a)$. As $|C|\leq |S|\in o(\log D)$, the Markov chain induced by $C$ has a period of $t = o(\log D)$ (an aperiodic chain has period $t=1$). We apply Theorem~\ref{thm:feller} to $C$ and denote by $G_1,\ldots,G_t$ the equivalence classes based on the period $t$ whose existence is guaranteed by the theorem.

Consider blocks of rounds of size $\beta = c |S| \ln D / p_0^{|S|}=D^{o(1)}$, where $c>0$ is a constant that is determined by Corollary~\ref{cor:bound}. Without loss of generality, assume that $\beta$ is a multiple of $t$ (otherwise use $t^{\lceil\log_t \beta\rceil}=\BO(\beta)$ instead). We define group of rounds such that each group contains one round from each block. Formally, for $1 \leq i \leq \beta$ and $j \in \N_0$, group $B_i$ contains round numbers $i + j\beta \leq \Delta$. Observe that this definition entails that at the end of all rounds from a given group, agent $a$ is in some state from the same class $G_{\tau}\subseteq C$ that is recurrent and closed under $P^t$.

Let $\pi_{r+\beta,s}$ denote the probability distribution on $G_{\tau}$ of possible states agent $a$ may be in at the end of round $r+\beta$, conditional on its state being $s\in G_{\tau}$ at the end of round $r$. Note that this distribution is, in fact, independent of $r$. We obtain the following corollary of Lemma~\ref{lem:rosenthal} applied to the Markov chain induced by the matrix $P^t$ restricted to class $G_{\tau}$.

\begin{corollary}
\label{cor:bound}
	There is a unique stationary distribution $\pi_{\tau}$ of the Markov chain on $G_{\tau}$ induced by $P^t$. For any state $s \in G_{\tau}$ and any round $r$, $\pi_{r+\beta,s}$ and $\pi_{\tau}$ are approximately equivalent.
\end{corollary}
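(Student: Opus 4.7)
The goal is to verify the hypotheses of Lemma~\ref{lem:rosenthal} for the Markov chain obtained by restricting $P^t$ to $G_\tau$, and then check that the choice of $\beta$ is large enough to push the resulting convergence bound below $1/D^c$ for any fixed constant $c>0$.

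First, I would argue that $P^t$ restricted to $G_\tau$ is a stochastic, irreducible, aperiodic chain on a finite state space. By Theorem~\ref{thm:feller}, $G_\tau$ is closed under $P^t$, so the restriction is stochastic; the same theorem guarantees that the induced chain on $G_\tau$ is irreducible, and because $t$ is the full period of $C$, the induced chain on $G_\tau$ is aperiodic. Since $G_\tau$ is finite, standard Markov chain theory then yields the existence and uniqueness of a stationary distribution $\pi_\tau$ on $G_\tau$.

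Next, I would establish a uniform lower bound on the entries of a suitable power of $P^t$ on $G_\tau$. Any two states in $G_\tau$ are connected by a directed $P^t$-path of length at most $|G_\tau|$, and aperiodicity of the restricted chain lets us pad shorter paths to length exactly $|G_\tau|$ by inserting self-return cycles. Each such $P^t$-path of length $|G_\tau|$ corresponds to a path of length $t\cdot |G_\tau|\leq |S|$ in the original chain, so its probability is at least $p_0^{|S|}$. Hence every entry of $(P^t)^{|G_\tau|}$ restricted to $G_\tau$ is at least $p_0^{|S|}$, which is precisely the contraction hypothesis demanded by Lemma~\ref{lem:rosenthal}.

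Finally, I would invoke Lemma~\ref{lem:rosenthal} with contraction factor $1-p_0^{|S|}$ per $|G_\tau|$ applications of $P^t$, equivalently per at most $|S|$ rounds of the original chain. Iterating this contraction over the $\beta$ rounds gives $\|\pi_{r+\beta,s}-\pi_\tau\|\leq (1-p_0^{|S|})^{\lfloor \beta/|S|\rfloor}\leq \exp\bigl(-\Omega(\beta\cdot p_0^{|S|}/|S|)\bigr)$. Substituting the chosen block length $\beta = c|S|\ln D / p_0^{|S|}$ with $c$ taken sufficiently large (depending on the desired polynomial decay) produces the bound $\BO(1/D^c)$, so $\pi_{r+\beta,s}$ and $\pi_\tau$ are approximately equivalent in the sense defined at the start of the section. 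The only real subtlety is the book-keeping between single steps of $P$, $t$-steps under $P^t$, and the block length $\beta$; once the contraction on $P^t$-steps is in hand, everything else is routine.
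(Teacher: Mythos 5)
Your plan follows essentially the same route as the paper's proof: work with the chain that $P^t$ induces on $G_\tau$ (via Theorem~\ref{thm:feller}), establish a Doeblin-type minorization with constant about $p_0^{|S|}$ from the two facts that non-zero transition probabilities are at least $p_0$ and that any two states of $C$ are joined by a path of fewer than $|S|$ hops, and then apply Lemma~\ref{lem:rosenthal} with $k=\beta/t$ so that the chosen $\beta = c|S|\ln D/p_0^{|S|}$ drives the bound below $1/D^c$. (That you obtain uniqueness of $\pi_\tau$ from irreducibility and finiteness rather than from the conclusion of Lemma~\ref{lem:rosenthal}, and that you use an entrywise minorization rather than the paper's point-mass $Q$, are immaterial differences.)

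However, the one step you elaborate beyond the paper is incorrect as stated: aperiodicity does \emph{not} let you pad shorter paths so that every entry of $(P^t)^{|G_\tau|}$ is at least $p_0^{|S|}$. Aperiodicity only guarantees that all entries of $(P^t)^m$ are positive for all sufficiently large $m$ (up to the Wielandt index, roughly $(|G_\tau|-1)^2+1$), not for $m=|G_\tau|$. A counterexample with $t=1$: states $\{1,2,3,4\}$ with $1\to 1$ and $1\to 2$ each having probability $1/2$ and $2\to 3\to 4\to 1$ deterministic; this chain is irreducible and aperiodic, yet $P^4(2,3)=0$, and the self-return cycles available (length $4$, or the loop at $1$ which is not reachable in time) cannot pad the length-$1$ path $2\to 3$ to length exactly $4$. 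With two cycles of coprime lengths through a common state one sees that even a single well-chosen target column cannot in general be made positive at any exact time linear in the number of states; a quadratic exponent may be needed. (Your inequality $t\cdot|G_\tau|\le|S|$ is also unjustified, since the cyclic classes need not have equal sizes.) Repairing the padding via the Wielandt index degrades the minorization constant to $p_0^{\Theta(t|G_\tau|^2)}$, which under the hypothesis $b+\log\ell\le\log\log D-\omega(1)$ need not be $D^{-o(1)}$ and is not compensated by the stated $\beta$, so your final bookkeeping would not go through as written. To be fair, the paper's own justification (``a path of at most $|C|-1<|S|$ hops'') is terse about the same exact-time requirement of the minorization at $k_0=|S|/t$; but since you made the padding explicit, you would need either a correct exact-time minorization argument or a convergence statement that does not require hitting a prescribed time exactly.
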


\begin{proof}
	Since $\beta$ is a multiple of $t$, we can consider the probability matrix $P^t$, which by Theorem~\ref{thm:feller} induces a Markov chain on $G_{\tau}$. We apply Lemma~\ref{lem:rosenthal} to this chain with the following parameters: $k_0 = |S|/t$,\footnote{Recall that we assume such values to be integer, otherwise rounding is required.} $Q(s) = 1$ (i.e., $Q(s')=0$ for all $s'\in G_{\tau}\setminus \{s\}$), and $k = \beta/t$. We need that $P^{k_0}(s',s)\geq \varepsilon$ for each $s'\in G_{\tau}$ and a suitable $\varepsilon>0$. Since $C$ is recurrent, there is a path of at most $|C|-1<|S|$ hops in the Markov chain between any pair of states $s,s'\in C$. Because non-zero transition probabilities in $P$ are at least $p_0$, $\varepsilon:=p_0^{|S|}$ is a feasible choice.
	
	We conclude that Lemma~\ref{lem:rosenthal} yields that there is a unique stationary distribution $\pi_{\tau}$ under $P^t$ on $G_{\tau}$ satisfying that 
	\begin{equation*}
	\|\pi_{r+\beta,s} - \pi_{\tau}\| \leq (1-\epsilon)^{\lfloor k/k_0\rfloor} = \left(1-p_0^{|S|}\right)^{c \ln(D+n) / p_0^{|S|}} \leq e^{-c \ln D} = 1/D^c.
	\end{equation*}
\end{proof}

From this corollary, we infer the following coupling result.
\begin{lemma}
\label{lem:pi_s}
	Let $1 \leq i \leq \beta$ and $\tau = i \bmod t$. Then, for any state $s \in G_{\tau}$ and any constant $c>0$, there exists a probability distribution $\pi_s$ such that
\begin{equation*}
\forall r \in B_i, r \leq \Delta - \beta:~ \frac{\pi_s}{D^c} + \left(1 - \frac{1}{D^c}\right)\pi_{\tau} = \pi_{r+\beta,s},
\end{equation*}
where $\pi_{\tau}$ is the unique stationary distribution of $G_{\tau}$ under $P^t$.
\end{lemma}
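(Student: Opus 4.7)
The plan is to solve the claimed identity algebraically for $\pi_s$ and then verify that the resulting expression defines a probability distribution. Since $\pi_{r+\beta,s}$ does not depend on $r$, I may simply set
\[\pi_s(s') := D^c\, \pi_{r+\beta,s}(s') - (D^c - 1)\,\pi_\tau(s')\]
for each $s' \in G_\tau$; by construction, this is the only candidate satisfying the required convex combination, and it is independent of the specific choice of $r \in B_i$. Summing over $s' \in G_\tau$ gives $\sum_{s'} \pi_s(s') = D^c - (D^c - 1) = 1$, so the remaining task is to check $\pi_s(s') \geq 0$ everywhere, which rearranges to $\pi_\tau(s') - \pi_{r+\beta,s}(s') \leq \pi_\tau(s')/D^c$.

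To establish this inequality I would bound the two sides separately. For the left-hand side, I invoke Corollary~\ref{cor:bound} with a freely chosen (and potentially much larger) constant $c'$ in place of $c$ to obtain $\|\pi_{r+\beta,s} - \pi_\tau\| \leq 1/D^{c'}$. For the right-hand side, I would lower bound $\pi_\tau(s')$ using that the chain on $G_\tau$ induced by $P^t$ is finite, irreducible, and aperiodic (by the decomposition of Theorem~\ref{thm:feller}), together with the transition-probability bound $p_0 \geq 1/2^\ell$. A standard argument (from any state there is a positive-probability path of length at most $|G_\tau| \leq |S|$ to $s'$ in the support graph of $P^t$, and $\pi_\tau P^t = \pi_\tau$) then yields $\pi_\tau(s') \geq p_0^{\BO(|S|)} \geq 2^{-\BO(\ell\cdot 2^b)}$. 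Crucially, the hypothesis $b + \log \ell \leq \log\log D - \omega(1)$ gives $\ell \cdot 2^b = 2^{b + \log \ell} = o(\log D)$, hence $\pi_\tau(s') \geq 1/D^{o(1)}$.

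Combining these bounds, I choose $c'$ to be any constant strictly larger than $c$ (say $c' := c + 1$); then, for all sufficiently large $D$,
\[\pi_\tau(s') - \pi_{r+\beta,s}(s') \;\leq\; \frac{1}{D^{c'}} \;\leq\; \frac{\pi_\tau(s')}{D^c},\]
which establishes non-negativity and completes the proof. The main obstacle is precisely this non-negativity check: a direct application of Corollary~\ref{cor:bound} only guarantees $|\pi_{r+\beta,s}(s') - \pi_\tau(s')| \leq 1/D^c$, which does not suffice unless $\pi_\tau(s')$ is bounded well above $1/D^c$. This is exactly where the assumption $\chi(\mathcal{A}) \leq \log\log D - \omega(1)$ is essential, since it forces $\pi_\tau(s')$ to be only sub-polynomially small in $D$, giving the slack needed to choose $c'$ slightly larger than $c$ and absorb the discrepancy.
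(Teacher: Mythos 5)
Your proposal is correct and follows essentially the same route as the paper: the same explicit formula for $\pi_s$, the same normalization check, and the same two ingredients for non-negativity (invoking Corollary~\ref{cor:bound} with a larger constant, plus a $1/D^{o(1)}$ lower bound on $\pi_\tau(s')$ derived from $\chi(\mathcal{A}) \leq \log\log D - \omega(1)$); skipping the explicit check $\pi_s(s')\leq 1$ is harmless, since it follows from non-negativity and total mass $1$. One small imprecision: justifying $\pi_\tau(s')\geq p_0^{\BO(|S|)}$ via a path of at most $|G_\tau|$ edges in the support graph of $P^t$ literally gives only $p_0^{t|G_\tau|}$, and $t|G_\tau|$ can be $\Theta(|S|^2)$, which under the stated hypothesis need not be sub-polynomial in $D$; the paper instead uses a $P$-path of fewer than $|S|$ hops between two states of $G_\tau$ (whose length is automatically a multiple of $t$) together with stationarity under the corresponding power of $P^t$, yielding the needed $\pi_\tau(s')\geq p_0^{|S|}\geq 1/D^{o(1)}$.
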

\begin{proof}
If $G_{\tau}=\{s\}$, trivially $\pi_{\tau}(s)=\pi_{r+\beta,s}(s)=1$ and we choose $\pi_s(s):=1$. Thus, assume that $|G_{\tau}|>1$ in the following. For each $s'\in G_{\tau}$,
\begin{equation*}
\pi_{\tau}(s') = \sum_{s''\in G_{\tau}}P^t(s'',s')\pi_{\tau}(s'') \geq p_0^{|S|}\sum_{s''\in G_{\tau}}\pi_{\tau}(s'') = p_0^{|S|}\geq \frac{1}{D^{o(1)}},
\end{equation*}
where the first inequality exploits that any state $s'\in G_{\tau}\subseteq C$ must be reachable from any state $s''\in G_{\tau}\subseteq C$ by a sequence of at most $|C|-1<|S|$ state transitions. Also, note that $p_0^{|S|} = 1/D^{o(1)}$ because by assumption we are guaranteed that $\ell + \log b \leq \log \log D - \omega(1)$. Since $|G_{\tau}|>1$, this also implies that $\pi_{\tau}(s')\leq 1-1/D^{o(1)}$ for each $s'\in G_{\tau}$.

	Now use the equation $(1/D^c) \pi_s + (1 - 1/D^c) \pi_{\tau} = \pi_{r+\beta,s}$ to define $\pi_s$, i.e.,
	\begin{equation}
	\forall s'\in G_{\tau}:~\pi_s(s'):=  D^c\left(\pi_{r+\beta,s}(s')-\left(1 - \frac{1}{D^c}\right) \pi_{\tau}(s')\right).\label{eq:def_pi_s}
	\end{equation}
	We need to show that $\pi_s$ indeed is a probability distribution. It holds that
	\begin{equation*}
	\sum_{s'\in G_{\tau}}\pi_s(s')=D^c\left(\sum_{s'\in G_{\tau}}\pi_{r+\beta,s}(s')-\left(1 - \frac{1}{D^c}\right) \sum_{s'\in G_{\tau}}\pi_{\tau}(s')\right)=1.
	\end{equation*}
	
	Hence it remains to show that for each $s'\in G_{\tau}$, we have that $0\leq \pi_s(s')\leq 1$. For $s'\in G_{\tau}$, we bound
	\begin{eqnarray*}
	\pi_s(s') &=& D^c\left(\pi_{r+\beta,s}(s')-\left(1 - \frac{1}{D^c}\right) \pi_{\tau}(s')\right)\\
	&\leq & D^c\|\pi_{r+\beta,s}-\pi_{\tau}\|+\pi_{\tau}(s')\\
	&\leq & \frac{1}{D^{2c}}+1-\frac{1}{D^{o(1)}}\\
	&\leq & 1.
	\end{eqnarray*}
	Here we use that, by Corollary~\ref{cor:bound}, $\pi$ and $\pi_{r+\beta,s}$ are approximately equivalent (using $3c$ as the constant in the exponent) in the third step, as well as that $D$ is sufficiently large. Similarly,
\begin{equation*}
\pi_s(s') \geq \frac{\pi_{\tau}(s')}{D^c}-D^c\|\pi_{r+\beta,s}-\pi_{\tau}\|
\geq \frac{1}{D^{o(1)}D^c}-\frac{1}{D^{2c}} \geq 0.
\end{equation*}
This shows that Equation~\eqref{eq:def_pi_s} indeed ensures that $\pi_s$ is a probability distribution, concluding the proof.
\end{proof}

We now show that within each class $B_i$, approximating the random walk of an agent in the Markov chain by drawing its state for each round $r\in B_i$ \emph{independently} from the stationary distribution $\pi_{\tau}$ does not introduce a substantial error. To this end, consider the following random experiment $E_i$. For each round $r+\beta\leq \Delta$, $r\in B_i$, we toss an independent biased coin such that it shows head w.h.p. In this case, we draw the state at the end of round $r$ independently from $\pi_{\tau}$. Otherwise, we draw it from the distribution $\pi_s$, where $s$ is the state the agent was in $\beta$ steps ago and $\pi_s$ is the distribution given by Lemma~\ref{lem:pi_s}. Since each time the coin shows head w.h.p., a union bound shows that it holds that w.h.p.\ the coin shows head in \emph{all} rounds $r\in B_i$.

\begin{corollary}\label{cor:coin}
	If the experiment $E_i$ described above is executed with probability of heads being $1-1/D^{c+2}$, where $c$ is the constant from Definition~\ref{def:whp}, w.h.p.\ no coin flip shows tail. In other words, for each round $r$, the state of the agent at the end of round $r+\beta\leq \Delta$, $r\in B_i$, is drawn independently from the stationary distribution $\pi_{\tau}$ (of $G_{\tau}$ with respect to the chain induced by $P^t$).
\end{corollary}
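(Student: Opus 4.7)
The statement is essentially a routine union bound built on top of Lemma~\ref{lem:pi_s}, so the plan is short. I would verify that $E_i$ is a faithful coupling of the true Markov-chain dynamics at the rounds $\{r+\beta : r\in B_i,\, r+\beta\leq\Delta\}$, then bound the total number of coin flips involved and apply a union bound to conclude that all of them show heads w.h.p.; finally, I would read off the i.i.d. sampling conclusion from the fact that a draw from $\pi_\tau$ ignores the previous state.

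First I would count coin flips. The group $B_i$ consists of the rounds $i, i+\beta, i+2\beta, \ldots$ up to $\Delta$, so $|B_i|\leq \Delta/\beta + 1$, and $E_i$ performs one coin flip per $r\in B_i$ with $r+\beta\leq\Delta$. Hence the total number of flips is at most $\Delta/\beta \leq \Delta = D^{2-o(1)}$. Each flip individually shows tails with probability $1/D^{c+2}$, independently of the others, so by a union bound the probability that any flip in $E_i$ shows tails is at most $\Delta \cdot D^{-(c+2)} \leq D^{2-o(1)-c-2} = D^{-c-o(1)}$, which is below $1/D^c$ once $D$ is sufficiently large. This matches the definition of ``with high probability'' used in the section.

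Next I would justify that $E_i$ generates the correct joint distribution of the agent's states at these rounds. Applying Lemma~\ref{lem:pi_s} with the constant $c+2$, for every $s\in G_\tau$ we have
\begin{equation*}
\pi_{r+\beta,s} \;=\; \frac{1}{D^{c+2}}\,\pi_s \;+\; \left(1-\frac{1}{D^{c+2}}\right)\pi_\tau,
\end{equation*}
so drawing the state at round $r+\beta$ by flipping the biased coin and sampling either $\pi_\tau$ or $\pi_s$ reproduces the true conditional distribution given the state at round $r$. Chaining these transitions through the successive rounds in $B_i$ yields $E_i$ as a valid coupling of the chain restricted to these rounds. On the event that all coins show heads (which holds w.h.p.\ by the previous paragraph), each sampled state is drawn from $\pi_\tau$ without any reference to the state $\beta$ rounds earlier, so the sampled states form an i.i.d.\ sequence with common distribution $\pi_\tau$, exactly as claimed.

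There is no genuine obstacle here beyond a bookkeeping check: Lemma~\ref{lem:pi_s} is stated for arbitrary constant exponents, which is what allows us to pick the bias $1-1/D^{c+2}$ in $E_i$ and still absorb the union-bound loss while remaining inside the w.h.p.\ regime. Everything else is standard.
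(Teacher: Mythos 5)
Your proposal is correct and follows essentially the same route as the paper: apply Lemma~\ref{lem:pi_s} with exponent $c+2$, read the mixture decomposition $\pi_{r+\beta,s}=\frac{1}{D^{c+2}}\pi_s+\bigl(1-\frac{1}{D^{c+2}}\bigr)\pi_\tau$ as a coupling of the true transition with the coin-flip experiment, and take a union bound over the at most $\Delta\leq D^2$ flips to get failure probability at most $1/D^c$. The only difference is cosmetic — you count $|B_i|\leq\Delta/\beta$ flips before relaxing to $\Delta$, while the paper bounds by $\Delta$ directly — so no further changes are needed.
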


\begin{proof}
For each round $r+\beta$ as specified by the corollary, we apply Lemma~\ref{lem:pi_s} with parameter $c+2$ and $s$ being the state of the agent at the end of round $r$. Hence, there exists a distribution $\pi_s$ such that
\begin{equation*}
\frac{\pi_{s}}{D^{c+2}} + \left(1 - \frac{1}{D^{c+2}}\right) \pi_{\tau} = \pi_{r+\beta,s},
\end{equation*}
where $s$ is the state of the agent at the end of round $r$. In other words, the following random experiment is equivalent to drawing from $\pi_{r+\beta,s}$.
\begin{compactenum}
\item Flip a biased coin showing heads with probability $1-1/D^{c+2}$.
\item If it shows heads, draw from $\pi_{\tau}$.
\item Otherwise, draw from $\pi_s$.
\end{compactenum}
Now consider all the coin flips during rounds $r+\beta\leq \Delta$, $r\in B_i$. By a union bound, the probability that no coin flip ever results in tails is bounded from below by $\Delta/D^{c+2}\leq D^2/D^{c+2}\leq 1/D^c$.
\end{proof}

\subsubsection{Movement on the grid.}
\label{sec:movement}

Having established that an agent's state can essentially be understood as a (sufficiently small) collection of sets of independent random variables, we focus on the implications on the agents' movement in the grid. Let the random variable $X_r^{\uparrow}$ have value $1$ if the state of the agent at the end of round $r$ is labeled up, and $0$ otherwise. Note that these random variables depend only on the state transitions the agent performs in the Markov chain. Also let $X_{\leq r}^{\uparrow} = \sum^{r}_{r'=1} X_{r'}^{\uparrow}$.

\begin{lemma}
\label{lem:upwards}
	Suppose agent $a$ is initially in a state from the recurrent class $C=C(a)$. Then there is a $p^{\uparrow}\in [0,1]$, such that for each round $r\leq \Delta$ it holds that w.h.p.\ $\left|X_{\leq r}^{\uparrow}-rp^{\uparrow}\right| = o(D/|S|)$.
\end{lemma}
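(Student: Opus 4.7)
\medskip

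\noindent\textbf{Proof plan.}
The plan is to exploit the block/group decomposition already developed in Section~\ref{sec:draw} together with Chernoff bounds applied independently within each group. Recall that rounds are partitioned into $\beta=D^{o(1)}$ groups $B_1,\dots,B_\beta$, and that for each group $B_i$ the state of the agent at the end of each round $r+\beta\in B_i$ belongs to the same periodic class $G_{\tau}$, $\tau=i\bmod t$, with unique stationary distribution $\pi_{\tau}$ under $P^t$. I will define
\begin{equation*}
p^{\uparrow}:=\frac{1}{t}\sum_{\tau=1}^{t}\pi_{\tau}\bigl(\{s\in G_{\tau}\,:\,M(s)=\text{up}\}\bigr),
\end{equation*}
so that $p^{\uparrow}$ is the long-run fraction of rounds whose state is labeled up when the Markov chain is in equilibrium within $C$.

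The core of the argument is the following. By Corollary~\ref{cor:coin} applied with a suitably large constant $c$, within any single group $B_i$, w.h.p.\ the states at the ends of the rounds in $B_i$ are mutually independent samples from $\pi_{\tau}$. On this w.h.p.\ event, the indicator variables $\{X_{r'}^{\uparrow}\}_{r'\in B_i,\,r'\leq r}$ are independent Bernoulli$(q_\tau)$ random variables, where $q_\tau=\pi_{\tau}(\{s:M(s)=\text{up}\})$. Writing $m_i:=|B_i\cap[1,r]|$, Chernoff's bound then yields w.h.p.\ that
\begin{equation*}
\Bigl|\sum_{r'\in B_i,\,r'\leq r}X_{r'}^{\uparrow}-m_i q_\tau\Bigr|=\BO\bigl(\sqrt{m_i\log D}\bigr).
\end{equation*}
A union bound over the $\beta=D^{o(1)}$ groups preserves the w.h.p.\ guarantee on all groups simultaneously, and a further union bound over all $r\leq\Delta$ (or over a coarser grid, with standard monotonicity arguments) yields a single w.h.p.\ statement holding for all prefixes.

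Summing the per-group bounds via Cauchy–Schwarz gives total deviation
\begin{equation*}
\sum_{i=1}^{\beta}\BO\bigl(\sqrt{m_i\log D}\bigr)=\BO\bigl(\sqrt{\beta r\log D}\bigr)=\BO\bigl(D^{1-o(1)}\bigr),
\end{equation*}
using $r\leq\Delta=D^{2-o(1)}$ and $\beta=D^{o(1)}$. The expected value $\sum_{r'=1}^r q_{\tau(r')}$ differs from $rp^{\uparrow}$ only by a cyclic-boundary error of $\BO(t)=\BO(|S|)=\BO(\log D)$, which is absorbed. Since $|S|\leq 2^b=2^{\log\log D-\omega(1)}=o(\log D)$, the target bound $o(D/|S|)$ translates to $\omega(D/\log D)$, so by fixing the $o(1)$ in $\Delta=D^{2-o(1)}$ slightly smaller than the $o(1)$ implicit in $|S|$, we obtain $\BO(\sqrt{\beta r\log D})=o(D/|S|)$, as required.

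The main obstacle will be handling the two sources of error cleanly: the coupling error from Corollary~\ref{cor:coin}, which is a w.h.p.\ event that must be intersected with the Chernoff event, and the quantitative interplay between the exponents in $\beta=D^{o(1)}$, $\Delta=D^{2-o(1)}$, and $|S|=o(\log D)$. In particular, the ``$o(1)$'' in $\Delta$ must be fixed (as the lemma statement anticipates) slightly larger than the one governing $\beta$ so that $\sqrt{\beta\Delta\log D}=o(D/|S|)$ holds. A minor additional subtlety is that the random variable $X_r^{\uparrow}$ depends only on the state transitioned \emph{into} at the end of round $r$, so the coupling and Chernoff arguments apply directly; no extra filtration/measurability care is needed beyond that already established in Section~\ref{sec:draw}.
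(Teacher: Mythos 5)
Your proposal follows essentially the same route as the paper's proof: the same block/group decomposition, the coupling of Corollary~\ref{cor:coin} to treat the within-group states as i.i.d.\ draws from $\pi_{\tau}$, per-group Chernoff bounds with a union bound, the same definition of $p^{\uparrow}$ (your class-average equals the paper's group-average), and the same recognition that the exponent of $\Delta$ must be tuned so that the total deviation is $o(D/|S|)$ (the paper fixes $\Delta=o(D^2/(\beta|S|^2\log D))$, which is exactly your condition $\sqrt{\beta\Delta\log D}=o(D/|S|)$). The only cosmetic differences---aggregating via Cauchy--Schwarz instead of making each group's deviation $o(D/(|S|\beta))$, a union bound over $r$ instead of the paper's monotonicity remark, and a slightly understated boundary term (the first $\beta$ rounds are not covered by the coupling and contribute $\BO(\beta)$ rather than $\BO(t)$, still harmless since $\beta=o(D/|S|)$)---do not change the argument.
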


\begin{proof}
Throughout the entire proof, we condition on the agent being initially in a state from $C$, but for simplicity largely omit this from the notation. Denote by $\E_C$ the conditional expectation given that agent $a$ is initially in recurrent class $C$. We will consider the special case $r=\Delta$ first. Since by assumption $\chi({\cal A}) \leq \log \log D - \omega(1)$, it follow that $|S| \leq 2^b \leq o(\log D)$, and so $\beta=o(D/|S|)$.
Therefore, it suffices to show that, for a suitable choice of $p^{\uparrow}$, $\left|\sum_{r'=\beta+1}^{\Delta}X_{r'}^{\uparrow}-(r-\beta)p^{\uparrow}\right|=o(D/|S|)$. Recall that $B_i$ is the collection of step numbers $i + j \beta \leq \Delta$ for $j \in \N_0$. Observe that
	\begin{equation*}
	\sum_{r'=\beta+1}^{\Delta}X_{r'}^{\uparrow}=\sum_{i=1}^\beta\sum_{\substack{\beta+1\leq r'\leq \Delta\\ r'\in B_i}}X_{r'}^{\uparrow}.
	\end{equation*}
	Denote for each $B_i$ by ${\cal E}_i$ the event that experiment $E_i$ results in all coin flips showing head. By Corollary~\ref{cor:coin}, this occurs w.h.p., and by a union bound $\bigwedge_i {\cal E}_i$ occurs w.h.p. We will show that for each $i$, conditioned on ${\cal E}_i$, w.h.p.\ it holds that
	\begin{equation}\label{eq:condition}
	\left|\sum_{\substack{\beta+1\leq r'\leq \Delta\\ r'\in B_i}}X_{r'}^{\uparrow}
-\E_C\left[\sum_{\substack{\beta+1\leq r'\leq \Delta\\ r'\in B_i}}X_{r'}^{\uparrow}\,\Bigg|\,{\cal E}_i\right]\right|= o\left(\frac{D}{|S|\beta}\right)
	\end{equation}
	Conditioned on ${\cal E}_i$, we know that the considered variables $X_{r'}^{\uparrow}$ from $B_i$ are independently and identically distributed: The state at the end of round $r'$ is drawn independently from some stationary distribution $\pi_{\tau}$ that does not depend on $r'$, and the probability for the agent to move up in the grid equals the probability that this state is labeled ``up". Denote by $p_i^{\uparrow}$ the probability for the agent to move up in such a round $r'$ (when its state is distributed according to $\pi_{\tau}$). By linearity of expectation,
	\begin{equation*}
	\mu_i:=\E_C\left[\sum_{\substack{\beta+1\leq r'\leq \Delta\\ r'\in B_i}}X_{r'}^{\uparrow}\,\Bigg|\,{\cal E}_i\right]
	=\sum_{\substack{\beta+1\leq r'\leq \Delta\\ r'\in B_i}}\E_C\left[X_{r'}^{\uparrow}\,\big|\,{\cal E}_i\right]
	=\sum_{\substack{\beta+1\leq r'\leq \Delta\\ r'\in B_i}}p_i^{\uparrow}
	=p_i^{\uparrow} \left(\frac{\Delta}{\beta}-1\right).
	\end{equation*}

	If $\mu_i\leq 3c\ln D=o(D/(\beta|S|))$, Chernoff's bound (Inequality~\eqref{eq:chernoff_upper} with $\delta=1$) implies that
	\begin{equation*}
	P\left[\sum_{\substack{\beta+1\leq r'\leq \Delta\\ r'\in B_i}}X_{r'}^{\uparrow}\geq 6c\ln D\,\Bigg|\,{\cal E}_i\right]
	\leq e^{-c\ln D}\leq \frac{1}{D^c}.
	\end{equation*}
	On the other hand, if $\mu_i> 3c\ln D$, we choose $\delta:=\sqrt{3c\ln D/\mu_i}<1$ and apply the two-sided Chernoff bound (Inequality~\eqref{eq:chernoff_twosided}), yielding that
	\begin{equation*}
	P\left[\left|\sum_{\substack{\beta+1\leq r'\leq \Delta\\ r'\in B_i}}X_{r'}^{\uparrow}
	-\mu_i\right|>\delta \mu_i\,\Bigg|\,{\cal E}_i\right]
	\leq 2e^{-\delta^2\mu_i/3}= \frac{2}{D^c}.
	\end{equation*}
	We fix\footnote{As stated earlier, we deferred the choice of the $o(1)$-term in the exponent of $\Delta$, permitting to specify $\Delta$ in terms of $\beta$ now.} $\Delta:=o(D^2/(\beta |S|^2 \log D))=D^{2-o(1)}$. Thus, it holds that
	\begin{equation*}
	\delta \mu_i = \sqrt{3c\ln D\mu_i} = \BO\left(\sqrt{\frac{p_i^{\uparrow} \Delta \log D}{\beta}}\right)
	= o\left(\frac{D}{|S|\beta}\right).
	\end{equation*}
	This shows that, conditioned on ${\cal E}_i$, the bound \eqref{eq:condition} holds w.h.p. To complete our line of reasoning, observe that
	\begin{equation*}
	\E_C\left[X_{\leq \Delta}^{\uparrow}\,\Big |\,\bigwedge_{i=1}^{\beta} {\cal E}_i\right]=\sum_{i=1}^{\beta}\mu_i \pm \BO(\beta)= \Delta\sum_{i=1}^{\beta}\frac{p_i^{\uparrow}}{\beta}\pm \BO(\beta)=\Delta\sum_{i=1}^{\beta}\frac{p_i^{\uparrow}}{\beta}\pm o\left(\frac{D}{|S|}\right)
	\end{equation*}
	and set $p^{\uparrow}:=\sum_{i=1}^{\beta}p_i^{\uparrow}/\beta$.	By a union bound, we have that, w.h.p., both $\bigwedge_i {\cal E}_i$ occurs and bound~\eqref{eq:condition} holds for all $i$. In this case, it follows that
	\begin{eqnarray*}
	\left|X_{\leq r}^{\uparrow}-rp^{\uparrow}\right|&=&\left|X_{\leq r}^{\uparrow}-\E_C\left[X_{\leq \Delta}^{\uparrow}\,\Big |\,\bigwedge_{i=1}^{\beta} {\cal E}_i\right]\right|- \left|\E_C\left[X_{\leq \Delta}^{\uparrow}\,\Big |\,\bigwedge_{i=1}^{\beta} {\cal E}_i\right]-rp^{\uparrow}\right|\\
	&\leq & \sum_{i=1}^{\beta}o\left(\frac{D}{|S|\beta}\right)+\BO(\beta) + o\left(\frac{D}{|S|}\right)\\
	&=& o\left(\frac{D}{|S|}\right).
	\end{eqnarray*}
	This proves the claim for the special case $r=\Delta$. For the general case, observe that decreasing $r$ by an integer multiple of $\beta$ will decrease the computed expectation by the same multiple of $p^{\uparrow}$ (as long as $r$ remains larger than $\beta+1= o(D/|S|)$). Concerning the bound~\eqref{eq:condition}, observe that decreasing the number of steps will only decrease the probability of large deviations from the expectation of the random variable.	Since $\beta=o(D/|S|)$, the general statement hence follows analogously to the special case.
\end{proof}

Repeating these arguments for the other directions (right, down, and left), we see that overall, each agent behaves fairly predictably. Define $X_{\leq r}\in \Z^2$ to be the random variable describing the sum of all moves the agent performs in the grid up to round $r$, i.e., its position in the grid (in each dimension) at the end of round $r$. For this random variable, the following statement holds.
\begin{corollary}\label{cor:concentration}
	Suppose agent $a$ is initially in a state of the recurrent class $C(a)$. Then there is $\vec{p}\in [0,1]^2$ depending only on $C(a)$ such that for each $r\leq \Delta$, it holds that $\left\|X_{\leq r}-r\vec{p}\,\right\|= o(D/|S|)$ w.h.p.
\end{corollary}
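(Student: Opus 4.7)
The plan is to derive Corollary~\ref{cor:concentration} as a direct consequence of Lemma~\ref{lem:upwards} applied to each of the four cardinal directions, combined by a union bound. Lemma~\ref{lem:upwards} was stated for the ``up'' label, but its proof is symmetric in the labels, so by running the same argument with $X_r^{\downarrow}, X_r^{\leftarrow}, X_r^{\rightarrow}$ (defined analogously as indicator variables for the state at the end of round $r$ being labeled ``down'', ``left'', ``right'', respectively) we obtain constants $p^{\downarrow}, p^{\leftarrow}, p^{\rightarrow}\in[0,1]$ such that, for each $r\leq \Delta$ and each direction $d\in\{\uparrow,\downarrow,\leftarrow,\rightarrow\}$, w.h.p.\ $\bigl|X^{d}_{\leq r}-rp^{d}\bigr|=o(D/|S|)$.

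Next, I would observe that the position of the agent in the grid at the end of round $r$ decomposes along the two axes as
\begin{equation*}
X_{\leq r}=\bigl(X^{\rightarrow}_{\leq r}-X^{\leftarrow}_{\leq r},\,X^{\uparrow}_{\leq r}-X^{\downarrow}_{\leq r}\bigr),
\end{equation*}
since (after Corollaries~\ref{cor:initial} and~\ref{cor:origin}) we may assume the agent never returns to the origin and states labeled ``none'' contribute nothing to the position. Accordingly, I would set $\vec{p}:=(p^{\rightarrow}-p^{\leftarrow},\,p^{\uparrow}-p^{\downarrow})$. By a union bound over the four directions, with high probability all four single-direction concentration bounds hold simultaneously; on this event, the triangle inequality applied coordinate-wise gives
\begin{equation*}
\bigl\|X_{\leq r}-r\vec{p}\,\bigr\|\leq \bigl|X^{\rightarrow}_{\leq r}-rp^{\rightarrow}\bigr|+\bigl|X^{\leftarrow}_{\leq r}-rp^{\leftarrow}\bigr|+\bigl|X^{\uparrow}_{\leq r}-rp^{\uparrow}\bigr|+\bigl|X^{\downarrow}_{\leq r}-rp^{\downarrow}\bigr|= o(D/|S|),
\end{equation*}
as required.

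Finally, I would note that each $p^{d}$ depends only on the class $C(a)$: inspecting the proof of Lemma~\ref{lem:upwards}, $p^{d}=\sum_{i=1}^{\beta} p_i^{d}/\beta$, where $p_i^{d}$ is determined by the stationary distribution $\pi_\tau$ of the chain induced by $P^t$ on the periodic class $G_\tau\subseteq C(a)$ together with the labeling $M$. Both the periodic decomposition and the stationary distributions are intrinsic to $C(a)$, so $\vec{p}$ is as well. The main subtlety, and the only nontrivial bookkeeping step, is that the ``union bound'' must combine the high-probability events from Lemma~\ref{lem:upwards} for each direction with the already implicit union bound over the $\beta=D^{o(1)}$ groups $B_i$; since the total failure probability remains polynomially small in $D$ and $\beta=o(D/|S|)$ is absorbed into the $o(D/|S|)$ error term, this poses no real obstacle.
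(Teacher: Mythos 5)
Your proposal is correct and follows essentially the same route as the paper: apply Lemma~\ref{lem:upwards} symmetrically to the indicators for the other three labels, decompose the position as the coordinate-wise differences of the directional counts, define $\vec{p}$ from the corresponding differences of the $p^{\bullet}$'s, and conclude by a union bound (the paper leaves the triangle-inequality and $C(a)$-dependence remarks implicit, but they are the same observations).
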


\begin{proof}
Defining the random variables $X^{\rightarrow}_{\leq r}$, $X^{\downarrow}_{\leq r}$, and $X^{\leftarrow}_{\leq r}$ analogously to $X^{\uparrow}_{\leq r}$, we can apply the same reasoning as in Lemma~\ref{lem:upwards} to control their values. Hence, for each $X^{\bullet}_{\leq r}$, $\bullet\in \{\uparrow,\rightarrow,\downarrow,\leftarrow\}$, there is a $p^{\bullet}\in [0,1]$ (depending only on $C(a)$) such that $\left|X_{\leq r}^{\bullet}-rp^{\bullet}\right| = o(D/|S|)$ w.h.p. Observe that $X_{\leq r}=(X_{\leq r}^{\uparrow}-X_{\leq r}^{\downarrow},X_{\leq r}^{\rightarrow}-X_{\leq r}^{\leftarrow})$. Hence, with $\vec{p}:=(p^{\uparrow}-p^{\downarrow},p^{\rightarrow}-p^{\leftarrow})$, a union bound shows that $\left\|X_{\leq r}-r\vec{p}\,\right\| = o(D/|S|)$ w.h.p.
\end{proof}

We are now ready to resume the proof of Theorem \ref{thm:lower}.
\begin{proof}[Proof of Theorem~\ref{thm:lower}]
Denote by ${\cal C}$ the set of recurrent classes of the Markov chain describing an agent's state evolution. By Corollary~\ref{cor:initial}, it holds for each agent $a$ that, after each round $r \geq R_0=D^{o(1)}$, w.h.p.\ the agent is located in recurrent class $C(a)\in {\cal C}$. Since Lemma~\ref{lem:upwards}, and therefore Corollary~\ref{cor:concentration}, do not depend on the initial state from $C(a)$ the agent is in, the same reasoning shows that, at the end of round $r$, w.h.p.\ the position of $a$ will not deviate by more than distance $o(D/|S|)$ from a straight line in the grid. By a union bound, this holds for all agents jointly w.h.p (recall that by assumption $n$ is sub-exponential in $D$). Hence, w.h.p., it holds for each agent $a$ and each round $r \geq R_0$ that $a$ never ventures further away from the origin than distance $o(D/|S|)$, or its position does not deviate by more than distance $o(D/|S|)$ from one of at most $|{\cal C}|$ straight lines or the origin. Since for any straight line only a segment of length $\BO(D)$ is in distance $\BO(D)$ from the origin, the union of all grid points that are (i) in distance at most $D$ from the origin and (ii) in distance at most $o(D/|S|)$ from one of the straight lines has cardinality $\BO(D)\cdot o(D/|S|)\cdot |{\cal C}|\leq o(D^2/|S|)\cdot |S|= o(D^2)$. Hence, there is a set $G\subset \Z^2$ of $o(D^2)$ grid points that only depends on the algorithm ${\cal A}$, $D$, and $n$, with the following property: w.h.p., all grid points in distance $D$ from the origin that are visited within the first $\Delta$ steps of an execution of ${\cal A}$ are in $G$. Since there are $\Theta(D^2)$ grid points in distance $D$ from the origin, this implies that the target can be placed in such a way that w.h.p.\ no agent will find it within $\Delta=D^{2-o(1)}$ rounds, and a uniformly placed target is found in this amount of time with probability $o(1)$.
\end{proof}

Finally, we need to show that Theorem \ref{thm:lower} also holds with respect to the metric $M_{\text{moves}}$. In the following corollary, we show that each move of an agent on the grid corresponds to at most $D^{o(1)}$ transitions in its Markov chain, or otherwise, the agent does not move on the grid after some point on. 

\begin{corollary}
\label{cor:moves}
	Let $\mathcal{A}$ be an algorithm with $\chi({\cal A}) = b + \log \ell \leq \log \log D-\omega(1)$ and $n \in poly(D)$ agents. There is a placement of the target within distance $D$ from the origin such that w.h.p.\ no agent executing algorithm $\mathcal{A}$ finds the target in fewer than $D^{2-o(1)}$ moves.
	Moreover, the probability for some agent to find a target, placed uniformly at random in the square of side $2D$ centered at the origin, within $D^{2-o(1)}$ moves is $o(1)$.
\end{corollary}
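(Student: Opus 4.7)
The plan is to reduce the moves-based statement to Theorem~\ref{thm:lower} by arguing that, w.h.p., each move of an agent is separated from the next by at most $R_0 = D^{o(1)}$ Markov-chain transitions, unless the agent essentially stops moving or never ventures beyond distance $D^{o(1)}$ of the origin. Since the target placement produced by Theorem~\ref{thm:lower} lies at distance $D \gg D^{o(1)}$ from the origin, those degenerate cases are already handled by that theorem, and in the remaining case a per-move blow-up of at most $R_0$ transitions lets us convert a step lower bound of $\Delta = D^{2-o(1)}$ into a move lower bound of $\Delta/R_0 = D^{2-o(1)}$.

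First I would fix an agent $a$ and invoke Corollary~\ref{cor:origin}: w.h.p., either $a$ stays within distance $D^{o(1)}$ of the origin throughout the execution, or the recurrent class $C(a)$ in which it settles by round $R_0$ (Corollary~\ref{cor:initial}) contains no state labeled ``origin''. In the first sub-case, any target placed outside this small ball is trivially missed. In the second sub-case, each state of $C(a)$ is labeled either with a direction or with ``none''; if no state of $C(a)$ is directionally labeled, then after round $R_0$ the agent never moves on the grid, which again collapses to the first sub-case.

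The main step is the remaining case in which $C(a)$ contains at least one directional state $s$. Partition the first $\Delta = D^{2-o(1)}$ rounds into $\Delta/R_0 = D^{2-o(1)}$ consecutive windows of length $R_0$. Since $\Delta \in \mathrm{poly}(D)$, a union bound applied to Corollary~\ref{cor:visits} gives that w.h.p.\ every such window (starting after round $R_0$) contains at least one visit to $s$, hence at least one move. Therefore, if the total number of moves $M$ within the first $\Delta$ rounds is less than $D^{2-o(1)}$, then the number of elapsed steps must itself be less than $\Delta$, and Theorem~\ref{thm:lower} shows that the target produced there is not found w.h.p.\ (respectively, a uniformly placed target is found with probability $o(1)$).

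Finally, a union bound over the $n \in \mathrm{poly}(D)$ agents combines all sub-cases using the \emph{same} target placement provided by Theorem~\ref{thm:lower}, and likewise for the uniform-placement statement. The only care required is bookkeeping of exponents, so that the $D^{o(1)}$ slack introduced by the per-move conversion is absorbed into the $o(1)$ in the exponent of $D^{2-o(1)}$; this amounts to choosing a slightly smaller $o(1)$-term in $\Delta$ than in the proof of Theorem~\ref{thm:lower} and is not a genuine obstacle. I expect no real difficulty beyond this accounting, since Corollaries~\ref{cor:initial},~\ref{cor:visits}, and~\ref{cor:origin} already do the substantive work.
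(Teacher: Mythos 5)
Your proposal is correct and follows essentially the same route as the paper: use Corollaries~\ref{cor:initial}, \ref{cor:visits}, and~\ref{cor:origin} to argue that either the agent effectively stops moving (and so never leaves a $D^{o(1)}$-ball it has already searched within the horizon of Theorem~\ref{thm:lower}), or consecutive moves are at most $R_0 = D^{o(1)}$ Markov-chain steps apart, so that $D^{2-o(1)}$ moves fit inside $\Delta = D^{2-o(1)}$ rounds and Theorem~\ref{thm:lower} applies directly. Your window-plus-union-bound formulation is just a slightly more explicit rendering of the paper's ``each move corresponds to at most $D^{o(1)}$ steps,'' and the remaining bookkeeping (absorbing the $R_0$ factor into the $o(1)$ exponent, choosing the adversarial target away from the origin) is as routine as you indicate.
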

\begin{proof}
	First, we show that w.h.p., at least one of the following is true about any agent $a$ in any round $r \geq R_0$: (1) each move on the grid performed by $a$ corresponds to at most $D^{o(1)}$ steps in its Markov chain, or (2) $a$ is located in a recurrent class in which all states are labeled ``none".
	
	By Corollary \ref{cor:initial}, we know that at the end of round $r \geq R_0$ w.h.p.\ agent $a$ is in some recurrent class $C(a)$ that it never subsequently leaves. By Lemma \ref{cor:visits}, we know that for each state $s \in C(a)$, w.h.p.\ it is visited within $R_0 = D^{o(1)}$ steps. Therefore, if $C(a)$ contains a state labeled up/down/left/right, w.h.p.\ it is visited within $R_0 = D^{o(1)}$ steps. Otherwise, (2) applies. 
	
	If part (2) of the statement applies, then an agent does not make any progress in the grid after it reaches its recurrent class, so it does not visit more than $R_0 = D^{o(1)}$ grid points and, consequently, the corollary holds. If part (1) applies, since each move in the grid corresponds to at most $D^{o(1)}$ steps in the Markov chain, $D^{2-o(1)}$ moves correspond to $D^{2-o(1)}$ steps. In this case, we know Theorem \ref{thm:upper} guarantees that there is a placement of the target within distance $D$ from the origin such that w.h.p.\ no agent finds it within $D^{o(1)}$ steps, and consequently $D^{o(1)}$ moves.
\end{proof}

\section{Discussion and Conclusion}
\label{sec:discussion}

We have presented an algorithm and a lower bound for the problem of $n$ agents searching in a grid for a target placed at distance at most $D$ from the origin. Our lower bound shows that for $n$ sub-exponential in $D$, the agent cannot find the target w.h.p.\ in fewer than $D^{2-o(1)}$ rounds if $\chi({\cal A}) < \log \log D - \omega(1)$. We also present an algorithm that finds the target in $(D+D^2/n) 2^{\BO(\ell)}$ rounds in expectation for $\chi({\cal A}) \leq 3 \log \log D$, proving our lower bound to be near tight.

Note that for the upper bound we get stronger results if we consider a fixed search area of distance $D$ from the origin, as opposed to keeping track of a varying estimate of the search area. In the case of a fixed distance $D$, and $\chi({\cal A}) = \log \log D + \BO(1)$, suffices to find the target in $(D+D^2/n) 2^{O(\ell)}$ rounds in expectation.

Finally, we should mention that currently there is a gap of $2^{O(\ell)}$ between our upper and lower bounds. Also, for our lower bound we assume that $\chi({\cal A}) = b + \log \ell < \log \log D - \omega(1)$ while our algorithm we assume $b = 3(\log \log D -\log \ell) + O(1)$ bits of memory, which constitutes a constant-factor gap. Closing either of these gaps is an open problem.

\bibliographystyle{plain}
\bibliography{ants}

\appendix

\section{Math Preliminaries}
\label{sec:math}

In this section, we briefly go over some mathematical definitions and results that will be used throughout the proofs. Throughout the paper, by $\|\cdot\|$ we will denote the $\infty$-norm on the respective space.

\subsection{Markov Chains}

First, we state a basic result from \cite{feller68} about periodic Markov chains.

\begin{theorem}[Feller]
\label{thm:feller}
	In an irreducible Markov chain with period $t$ the states can be divided into $t$ mutually exclusive classes $G_0, \cdots, G_{t-1}$ such that it is true that (1) if $s \in G_{\tau}$ then the probability of being in state $s$ in some round $r \geq 1$ is $0$ unless $r = \tau + vt$ for some $v \in \mathbb{N}$, and (2) a one-step transition always leads to a state in the right neighboring class (in particular from $G_{t-1}$ to $G_0$). In the chain with matrix $P^t$ each class $G_{\tau}$ corresponds to an irreducible closed set.
\end{theorem}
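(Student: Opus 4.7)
My plan is to follow the standard construction of the cyclic decomposition for an irreducible periodic Markov chain. Fix any reference state $s_0 \in S$; by the definition of the period, the set $R(s_0) := \{r \geq 1 : (P^r)_{s_0,s_0} > 0\}$ satisfies $\gcd(R(s_0)) = t$. Since $R(s_0)$ is closed under addition (concatenate two return-to-$s_0$ loops), a classical number-theoretic fact implies that $R(s_0)$ contains every sufficiently large multiple of $t$. For each state $s$, I would define a ``class index'' $\tau(s) \in \{0,\ldots,t-1\}$ by picking any directed path $s_0 \to s$ in the transition graph (which exists by irreducibility) and setting $\tau(s) := r \bmod t$, where $r$ is that path's length.

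The crucial well-definedness step is that $\tau(s)$ does not depend on the chosen path. If $r_1$ and $r_2$ are lengths of two paths $s_0 \to s$, and $r'$ is the length of any path $s \to s_0$ (which exists by irreducibility), then $r_1 + r'$ and $r_2 + r'$ both lie in $R(s_0)$, hence both are divisible by $t$, so $r_1 \equiv r_2 \pmod t$. I would then set $G_\tau := \{s : \tau(s) = \tau\}$. Irreducibility makes every state belong to some class, and the well-definedness just proved makes the $G_\tau$'s mutually disjoint. A short additional check (using that $R(s_0)$ contains every sufficiently large multiple of $t$) rules out any $G_\tau$ being empty, for otherwise the effective period would be a proper divisor of $t$, contradicting that the period equals $t$.

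Next I would verify the three parts of the theorem. For claim (2): if $s \in G_\tau$ and $P(s,s') > 0$, then concatenating a length-$r$ path $s_0 \to s$ (with $r \equiv \tau \pmod t$) with the edge $s \to s'$ yields a length-$(r+1)$ path $s_0 \to s'$, so $\tau(s') \equiv \tau + 1 \pmod t$ and $s' \in G_{(\tau+1)\bmod t}$. Claim (1) follows by a direct induction on the number of steps: assuming the initial state lies in $G_0$ (re-index the classes so this holds), after $r$ steps the chain is in $G_{r \bmod t}$, hence the probability of being in a state of $G_\tau$ at time $r$ is $0$ unless $r \equiv \tau \pmod t$. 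For claim (3), iterating (2) shows that one step of $P^t$ carries $G_\tau$ into $G_\tau$, so $G_\tau$ is closed under $P^t$. To see the induced chain on $G_\tau$ is irreducible, let $s, s' \in G_\tau$; by irreducibility of $P$ there is a path $s \to s'$ of some length $r$ in the original chain, and (2) forces $r \equiv 0 \pmod t$, giving a length-$(r/t)$ path in the $P^t$-chain.

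The only genuinely delicate point is the well-definedness of $\tau(s)$, which hinges on characterizing the period $t$ as $\gcd(R(s_0))$ together with the additive closure of $R(s_0)$; everything else is straightforward bookkeeping on path lengths modulo $t$. I expect this to be the main obstacle, along with the (routine but slightly tedious) verification that the resulting partition is independent of the chosen reference state $s_0$, which follows from the same gcd/closure argument applied across different base points.
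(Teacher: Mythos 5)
The paper does not actually prove this statement---it is quoted as a known result from Feller's book \cite{feller68} and used as a black box---so there is no internal proof to compare against. Your argument is the standard cyclic-class construction (define $\tau(s)$ as the length modulo $t$ of a path from a reference state $s_0$, establish well-definedness from $\gcd(R(s_0))=t$ together with additive closure of $R(s_0)$, then read off claims (1), (2), and (3) by bookkeeping on path lengths), and it is correct; this is essentially the classical proof given in the cited source. The one spot worth tightening is the non-emptiness of each $G_\tau$: rather than appealing to an ``effective period'' being a proper divisor of $t$, it is cleaner to note that $s_0\in G_0$, that every state has at least one outgoing transition, and that by claim (2) a transition from a state of $G_\tau$ lands in $G_{(\tau+1)\bmod t}$, so a forward induction from $G_0$ populates every class.
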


The next theorem establishes a bound on the difference between the stationary distribution of a Markov chain and the distribution resulting after $k$ steps.

\begin{lemma}[Rosenthal]
\label{lem:rosenthal}
	Let $P(x,\cdot)$ be the transition probabilities for a time-homogeneous Markov chain on a general state space $\mathcal{X}$. Suppose that for some probability distribution $Q(\cdot)$ on $\mathcal{X}$, some positive integers $k$ and $k_0$, and some $\epsilon > 0$, 
	\begin{equation*}
		\forall x \in \mathcal{X}:\,P^{k_0}(x,\cdot) \geq \epsilon Q(\cdot) ,
	\end{equation*}		
where $P^{k_0}$ represents the $k_0$-step transition probabilities. 
Then for any initial distribution $\pi_0$, the distribution $\pi_k$ of the Markov chain after $k$ steps satisfies

	\begin{equation*}
		\|\pi_k - \pi\|\leq (1 - \epsilon)^{\lfloor k/k_0 \rfloor}
	\end{equation*} 
where $\| \cdot \|$ is total variation distance and $\pi$ is any stationary distribution. (In particular, the stationary distribution is unique.)
\end{lemma}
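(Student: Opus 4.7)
The plan is to prove the Rosenthal lemma by turning the uniform minorization hypothesis $P^{k_0}(x,\cdot)\geq \epsilon Q(\cdot)$ into a one-block geometric contraction in total variation distance and then iterating. First I would decompose, for each $x\in\mathcal{X}$,
\[
P^{k_0}(x,\cdot) \;=\; \epsilon\, Q(\cdot) \;+\; (1-\epsilon)\,R(x,\cdot), \qquad R(x,\cdot) \;:=\; \frac{P^{k_0}(x,\cdot)-\epsilon Q(\cdot)}{1-\epsilon}.
\]
The hypothesis makes $R(x,\cdot)$ a nonnegative measure, and $P^{k_0}(x,\mathcal{X})=Q(\mathcal{X})=1$ makes it have total mass one, so $R$ is itself a Markov kernel on $\mathcal{X}$ (measurability in $x$ is inherited from $P^{k_0}$ and $Q$).

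The key step is a one-block contraction. For any two probability measures $\mu,\nu$ on $\mathcal{X}$, the common $\epsilon Q$ contribution cancels, giving
\[
\mu P^{k_0} - \nu P^{k_0} \;=\; (1-\epsilon)(\mu R - \nu R).
\]
Since any Markov kernel is a nonexpansion in total variation — the signed measure $(\mu-\nu)R$ has total mass zero and TV norm at most $\|\mu-\nu\|$ — this yields $\|\mu P^{k_0}-\nu P^{k_0}\| \leq (1-\epsilon)\,\|\mu-\nu\|$. Applying this with $\mu:=\pi_0$ and $\nu:=\pi$, and using stationarity $\pi P^{k_0}=\pi$, iterating $j:=\lfloor k/k_0\rfloor$ times gives
\[
\|\pi_0 P^{jk_0}-\pi\| \;\leq\; (1-\epsilon)^{j}\,\|\pi_0-\pi\| \;\leq\; (1-\epsilon)^{j},
\]
where the last inequality uses that the TV distance between probability measures is at most $1$. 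The remaining $k-jk_0$ steps are a single Markov-kernel application, which does not increase TV distance, so the bound transfers to $\pi_k$. Uniqueness of $\pi$ then falls out by taking $\pi_0$ to be any second stationary distribution and letting $k\to\infty$.

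The main obstacle is mostly bookkeeping in a general state-space setting: one must verify that $R$ is a bona fide measurable Markov kernel and carefully invoke the nonexpansivity fact $\|\sigma R\|\leq \|\sigma\|$ for signed measures $\sigma$ of total mass zero. An alternative route that avoids reasoning about signed measures is a coupling/regeneration argument. Run two copies of the chain, and at each $k_0$-block attempt a joint regeneration: with probability at least $\epsilon$, draw the state of both chains independently from $Q$ (made possible by the minorization), after which the two copies can be coupled to coincide forever. The coupling inequality bounds $\|\pi_k-\pi\|$ by the probability that no joint regeneration has occurred in the first $j=\lfloor k/k_0\rfloor$ blocks, which is at most $(1-\epsilon)^{j}$, recovering the claim.
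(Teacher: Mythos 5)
Your argument is correct, but there is nothing in the paper to compare it against: Lemma~\ref{lem:rosenthal} is imported as a known preliminary (Rosenthal's minorization-condition convergence bound) and the paper gives no proof, simply invoking it in Corollary~\ref{cor:bound}. Your write-up is a sound self-contained proof via the standard Nummelin/Doeblin splitting: the decomposition $P^{k_0}(x,\cdot)=\epsilon Q(\cdot)+(1-\epsilon)R(x,\cdot)$ does make $R$ a Markov kernel, the cancellation of the common $\epsilon Q$ part together with nonexpansivity of kernels in total variation gives the one-block contraction $\|\mu P^{k_0}-\nu P^{k_0}\|\leq(1-\epsilon)\|\mu-\nu\|$, and iterating with $\nu=\pi$ stationary, absorbing the leftover $k-\lfloor k/k_0\rfloor k_0$ steps by nonexpansivity, yields exactly the stated bound; uniqueness follows by taking $\pi_0$ to be a second stationary distribution and letting $k\to\infty$. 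Your alternative coupling/regeneration route is the argument Rosenthal himself uses and is equally valid. Two cosmetic points you may wish to flag: the definition of $R$ requires $\epsilon<1$ (if $\epsilon=1$ then $P^{k_0}(x,\cdot)=Q(\cdot)$ for every $x$ and the claim is immediate), and the final step $\|\pi_0-\pi\|\leq 1$ uses the convention that total variation distance is $\sup_A|\mu(A)-\nu(A)|$, which is the normalization under which the lemma is stated and applied in Corollary~\ref{cor:bound}. Note also that, as in the lemma statement itself, your proof establishes the bound for \emph{any} stationary $\pi$ and deduces uniqueness, but does not (and need not) prove existence.
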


\subsection{Basic Probability Definitions and Results}


\begin{definition}
\label{def:whp}
	Let $\pi$ be some probability distribution and let $E$ be an arbitrary event in $\pi$. For a given $N\in \N$, we say that event $E$ occurs ``with high probability in $N$'' iff the probability of the event $E$ occurring is at least $1 - 1 / N^c$ for an arbitrary predefined constant $c > 0$.
\end{definition}

Next, we provide a similar definition for the distance between two probability distributions.

\begin{definition}
	Let $\pi_1$ and $\pi_2$ be two probability distributions with the same range. For a given $N \in \N$, we say that $\pi_1$ and $\pi_2$ are ``approximately equivalent with respect to $N$'' iff $\|\pi_1 - \pi_2\| = \BO(1/N^c)$ for an arbitrary predefined constant $c > 0$.
\end{definition}	
	Finally, we state the two versions of Chernoff's bound that we use throughout the proofs.
	
\begin{theorem}[Chernoff bound] Let $X_1, \cdots, X_k$ be independent random variables such that for $1 \leq i \leq k$, $X_i \in\{0,1\}$. Let $X = X_1 + X_2 + \cdots + X_k$ and let $\mu = \E[X]$. Then, for any $0 \leq \delta \leq 1$, it is true that:

\begin{equation}
	P[X > (1 + \delta) \mu] \leq e^{-\delta^2 \mu/2}\label{eq:chernoff_upper}
\end{equation}	
	
\begin{equation}
	P[X < (1 - \delta) \mu] \leq e^{-\delta^2 \mu/3}\label{eq:chernoff_lower}
\end{equation}	
\end{theorem}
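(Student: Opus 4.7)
The plan is to use the standard moment-generating-function (exponential moment / Chernoff-Cramér) approach: apply Markov's inequality to $e^{tX}$ for a carefully chosen parameter $t$, exploit independence to factor the moment generating function, bound each factor using the Bernoulli-type structure of the $X_i$, and then optimize over $t$. Throughout, write $p_i = \E[X_i] \in [0,1]$ so that $\mu = \sum_i p_i$.

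For the upper tail \eqref{eq:chernoff_upper}, I would fix $t > 0$ and observe that, since $x \mapsto e^{tx}$ is increasing and positive, Markov's inequality gives
\begin{equation*}
P[X > (1+\delta)\mu] \;=\; P\!\left[e^{tX} > e^{t(1+\delta)\mu}\right] \;\leq\; \frac{\E[e^{tX}]}{e^{t(1+\delta)\mu}}.
\end{equation*}
By independence, $\E[e^{tX}] = \prod_{i=1}^k \E[e^{tX_i}]$, and for each $i$ one has $\E[e^{tX_i}] = 1 + p_i(e^t-1) \leq \exp(p_i(e^t-1))$ using $1+x \leq e^x$. Multiplying yields $\E[e^{tX}] \leq \exp(\mu(e^t-1))$. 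Optimizing by setting $t := \ln(1+\delta) > 0$ gives
\begin{equation*}
P[X > (1+\delta)\mu] \;\leq\; \left(\frac{e^{\delta}}{(1+\delta)^{1+\delta}}\right)^{\mu}.
\end{equation*}

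The main obstacle is then the purely analytic step of showing that, for $0 \leq \delta \leq 1$,
\begin{equation*}
\delta - (1+\delta)\ln(1+\delta) \;\leq\; -\frac{\delta^2}{2}.
\end{equation*}
I would do this by defining $f(\delta) := (1+\delta)\ln(1+\delta) - \delta - \delta^2/2$ and checking $f(0)=0$, $f'(0)=0$, and $f''(\delta) = \frac{1}{1+\delta} - 1 \geq -\delta$ is not the cleanest route; a more direct calculus argument is to use the Taylor expansion $\ln(1+\delta) = \delta - \delta^2/2 + \delta^3/3 - \cdots$ and bound the resulting series on $[0,1]$, or equivalently to compare $(1+\delta)\ln(1+\delta)$ with $\delta + \delta^2/2$ via a single differentiation. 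Exponentiating then yields the claimed bound $e^{-\delta^2\mu/2}$.

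For the lower tail \eqref{eq:chernoff_lower}, I would run the symmetric argument with $t > 0$ applied to $e^{-tX}$:
\begin{equation*}
P[X < (1-\delta)\mu] \;\leq\; \frac{\E[e^{-tX}]}{e^{-t(1-\delta)\mu}} \;\leq\; \frac{\exp(\mu(e^{-t}-1))}{e^{-t(1-\delta)\mu}},
\end{equation*}
and optimize with $t := -\ln(1-\delta) > 0$ to obtain $\left(e^{-\delta}/(1-\delta)^{1-\delta}\right)^{\mu}$. The analytic obstacle here is the sharper inequality $-\delta - (1-\delta)\ln(1-\delta) \leq -\delta^2/3$ on $[0,1]$, which I would establish by expanding $\ln(1-\delta) = -\sum_{j\geq 1}\delta^j/j$ so that $-\delta - (1-\delta)\ln(1-\delta) = -\sum_{j\geq 2} \delta^j/(j(j-1))$, and then using $\sum_{j\geq 2}\delta^j/(j(j-1)) \geq \delta^2/2 - \delta^3/6 \geq \delta^2/3$ for $\delta \in [0,1]$. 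Exponentiating produces the advertised $e^{-\delta^2\mu/3}$ bound. The Markov-inequality framework and factorization are routine; the only genuine work is the two scalar inequalities that turn the sharp Chernoff–Cramér expressions into the clean sub-Gaussian-style form stated in the theorem.
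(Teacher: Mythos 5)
The paper states this theorem as a standard preliminary in the appendix and provides no proof of its own, so your attempt must stand on its own. Your framework---Markov's inequality applied to $e^{\pm tX}$, factorization via independence, the bound $1+p_i(e^t-1)\leq e^{p_i(e^t-1)}$, and optimization over $t$---is the standard and correct route, and your lower-tail argument for \eqref{eq:chernoff_lower} is sound: the identity $-\delta-(1-\delta)\ln(1-\delta)=-\sum_{j\geq 2}\delta^j/(j(j-1))$ is correct, and since every term of that sum is nonnegative, the $j=2$ term alone already gives the exponent $-\delta^2/2$, which is stronger than the stated $-\delta^2/3$ (your intermediate bound ``$\geq \delta^2/2-\delta^3/6$'' is superfluous).

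The genuine gap is in the upper tail. The scalar inequality you defer to, $\delta-(1+\delta)\ln(1+\delta)\leq -\delta^2/2$ on $[0,1]$, is false, and the calculus you sketch shows it: with $f(\delta):=(1+\delta)\ln(1+\delta)-\delta-\delta^2/2$ one has $f(0)=0$ and $f'(\delta)=\ln(1+\delta)-\delta\leq 0$, so $f\leq 0$, which is exactly the reverse of what you need (at $\delta=1$: $1-2\ln 2\approx -0.386 > -0.5$). No choice of $t$ repairs this, because the Chernoff--Cram\'er expression $\bigl(e^{\delta}/(1+\delta)^{1+\delta}\bigr)^{\mu}$ is tight up to polynomial factors in the Poisson limit: for $X$ a sum of $k$ Bernoulli$(\mu/k)$ variables with $k$ and $\mu$ large, Stirling's formula gives $P[X>2\mu]\geq e^{-(2\ln 2-1)\mu-o(\mu)}$, which eventually exceeds $e^{-\mu/2}$. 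So inequality \eqref{eq:chernoff_upper} as printed is itself not correct for general Bernoulli sums; the standard form for $0\leq\delta\leq 1$ is $P[X>(1+\delta)\mu]\leq e^{-\delta^2\mu/3}$, which your machinery does deliver via the true inequality $\delta-(1+\delta)\ln(1+\delta)\leq -\delta^2/3$ (the alternating series $(1+\delta)\ln(1+\delta)-\delta=\delta^2/2-\delta^3/6+\cdots\geq \delta^2/2-\delta^3/6\geq\delta^2/3$). In short, the two constants in the theorem statement appear to be transposed relative to the provable bounds; the paper's uses of \eqref{eq:chernoff_upper} in Lemma~\ref{lem:upwards} and of \eqref{eq:chernoff_lower} in Lemma~\ref{lem:whileLoop} survive with the corrected constants after adjusting the absolute constants $c$ and $K$.
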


\begin{theorem}[Two-sided Chernoff bound] Let $X_1, \cdots, X_k$ be independent random variables such that for $1 \leq i \leq k$, $X_i \in\{0,1\}$. Let $X = X_1 + X_2 + \cdots + X_k$ and let $\mu = \E[X]$. Then, for any $0 \leq \delta \leq 1$, it is true that:

\begin{equation}
	P[|X - \mu| < \delta \mu] \leq 2e^{-\delta^2 \mu/3}\label{eq:chernoff_twosided}
\end{equation}	
	
\end{theorem}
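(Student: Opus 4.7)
The plan is to derive the two-sided bound directly from the two one-sided bounds stated immediately above, namely inequalities~\eqref{eq:chernoff_upper} and~\eqref{eq:chernoff_lower}, which are already available for use. (I am reading the statement as the standard two-sided Chernoff bound, i.e., with $|X-\mu|>\delta\mu$ on the left-hand side, which I believe is what is intended.) The whole argument is a one-line union bound, so there is no real obstacle; the only care needed is in reconciling the two different constants in the exponents.

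First I would write the event $\{|X-\mu|>\delta\mu\}$ as the disjoint union of the one-sided deviation events, using that $X$ is real-valued:
\begin{equation*}
\{|X-\mu|>\delta\mu\}=\{X>(1+\delta)\mu\}\cup \{X<(1-\delta)\mu\}.
\end{equation*}
By the union bound and the hypotheses of the theorem (which are precisely the hypotheses of the one-sided Chernoff bound already stated in the paper), I then apply~\eqref{eq:chernoff_upper} to the first event and~\eqref{eq:chernoff_lower} to the second event, obtaining
\begin{equation*}
P[|X-\mu|>\delta\mu]\;\leq\; e^{-\delta^2\mu/2}+e^{-\delta^2\mu/3}.
\end{equation*}

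Finally, since $\mu\geq 0$ and $\delta^2\geq 0$, the exponential $e^{-\delta^2\mu/2}$ is at most $e^{-\delta^2\mu/3}$ (larger exponent in absolute value yields a smaller value), so both terms are bounded by $e^{-\delta^2\mu/3}$ and their sum is at most $2e^{-\delta^2\mu/3}$, which is the claimed inequality~\eqref{eq:chernoff_twosided}. No induction, no moment generating function arguments, and no additional probabilistic machinery are needed, because all the work is already done in the one-sided version stated just above.
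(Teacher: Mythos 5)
Your proof is correct. The paper states this two-sided bound without proof, as a standard preliminary alongside the one-sided versions, so there is no in-paper argument to compare against; your union-bound derivation from inequalities~\eqref{eq:chernoff_upper} and~\eqref{eq:chernoff_lower}, together with the observation that $e^{-\delta^2\mu/2}\leq e^{-\delta^2\mu/3}$, is the standard and complete way to obtain it. You are also right that the displayed statement contains a typo: the event should be $|X-\mu|>\delta\mu$ (as it is used in the proof of Lemma~\ref{lem:upwards}), not $|X-\mu|<\delta\mu$, and your proof establishes the intended inequality.
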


\end{document}